\titleformat*{\section}{\bf\Large\center}
\theoremstyle{definition}
\newtheorem{assumption}{Assumption}
\newtheorem*{theorem*}{Theorem}
\newtheorem{theorem}{Theorem}
\newtheorem*{rmk*}{remark}
\newtheorem{proposition}{Proposition}
\newtheorem{lemma}{Lemma}
\newtheorem{algo}{Algorithm}
\newtheorem*{corollary*}{Corollary}
\apptocmd{\sloppy}{\hbadness 10000\relax}{}{} 
\newcites{sec}{References}
\def\xeps{\textup{ps}}
\def\xps{\textup{ps}}
\def\xreg{\textup{reg}}
\def\hyua{\hy_\textup{unadj}}
\def\proj{\textup{proj}}
\def\htn{\htau_\ua}
\def\hteps{\htau_\ipwx}
\def\htreg{\htau_\wlsx}
\def\ols{{ordinary-least-squares }}
\def\wls{{weighted-least-squares }}
\def\yizz{Y_i(z)}
\def\dmd{\diamond}
\def\iptwn{{propensity score weighting}}
\def\iptw{{propensity score weighting} }
\def\hajx{\textup{x-ps}}
\def\sub{\textup{sub}}
\def\aug{\textup{mim}}
\def\ua{\textup{unadj}}
\def\hajek{{H\'{a}jek} }
\def\wiz{w_i^0}
\def\wic{w_i^{(c)}}
\def\muo{\mu_1}
\def\muz{\mu_0}
\def\muzz{\mu_z}
\def\go{\gamma_1}
\def\gz{\gamma_0}
\def\hmo{\hat m_1}
\def\hmz{\hat m_0}
\def\sumi{\sum_{i=1}^N}
\def\meani{N^{-1}\sumi}
\def\muxst{\mu_x^{*\T}}
\def\riw{R_i^w}
\def\mim{\textup{mim}}
\def\imp{\textup{imp}}
\def\hydro{\hy_\dr(1)}
\def\hydrz{\hy_\dr(0)}
\def\hywlsx{\hy_\wlsx}
\def\gos{\gamma_1^*}
\def\gzs{\gamma_0^*}
\def\gost{\gamma_1^{*\T}}
\def\gzst{\gamma_0^{*\T}}
\def\bsm{\quad\text{by symmetry}}
\def\blb{\quad\text{by Lemma \ref{lem:beta}}}
\def\covxinv{ \{\cov(x_i) \}^{-1}}
\def\abax{A^{*-1}B^*(A^{*-1})^\T} 
\def\ghg{G^{*-1}H^*(G^{*-1})^\T} 
\def\aba{A_\ua^{*-1}B_\ua^*(A_\ua^{*-1})^\T} 
\def\abaeps{A_\xeps^{*-1}B_\xeps^*(A_\xeps^{*-1})^\T}
\def\abareg{A_\xreg^{*-1}B_\xreg^*(A_\xreg^{*-1})^\T}
\def\muos{\mu_1^*}
\def\muzs{\mu_0^*}
\def\muxs{\mu_x^*}
\def\hdo{\Delta\hat x(1)}
\def\hdz{\Delta\hat x(0)}
\def\yio{Y_i(1)}
\def\yiz{Y_i(0)}
\def\hgo{\hg_1}
\def\hgz{\hg_0}
\def\mux{\mu_x}
\def\dyoa{ \left\{Y_i(1)- \dxi ^\T  \gamma^*_1 - \mu^*_1 \right\}}
\def\dyza{\left\{Y_i(0)- \dxi ^\T  \gamma^*_0 - \mu^*_0 \right\}}
\def\tdxit{\left(1, \dxit \right)}
\def\tdxi{\beginp 
1\\ x_i - \muxs
\endp}
\def\dyo{\{Y_i(1) - \mu_1^*\}} 
\def\dyz{\{Y_i(0) - \mu_0^*\}} 
\def\dyon{\{Y_i(1) - \mu_1\}} 
\def\dyzn{\{Y_i(0) - \mu_0\}}
\def\dxi{(x_i - \muxs)}
\def\dxit{\dxi^\T}
\def\ryi{\riy}
\def\va{\var_\textup{a}}
\def\dr{\textup{dr}}
\def\cova{\cov_\textup{a}}
\def\rpn{\dfrac{\riy}{p_i}}
\def\rpfun{\dfrac{\riy}{\pfun}}
\def\hpi{\hat p_i}
\def\hy{\hat Y}
\def\hmuox{\hy_\wlsx(1)}
\def\hmuzx{\hy_\wlsx(0)}
\def\hywlsxo{\hy_\wlsx(1)}
\def\hywlsxz{\hy_\wlsx(0)}
\def\hywlsxop{\hy_\wlsx'(1)}
\def\hywlsxzp{\hy_\wlsx'(0)}
\def\hyipwo{\hy_\ipw(1)}
\def\hyipwz{\hy_\ipw(0)}
\def\hyipwxo{\hy_\ipwx(1)}
\def\hyipwxz{\hy_\ipwx(0)}
\def\hmuzwlsx{\hy_\wlsx(0)}
\def\ef{e(x_i; \alpha)}
\def\hei{\hat e_i}
\def\zhe{\dfrac{Z_i}{\hei}}
\def\omzhe{\dfrac{1-Z_i}{1-\hei}}
\def\ze{\dfrac{Z_i}{e}}
\def\zef{\dfrac{Z_i}{e(x_i; \alpha)}}
\def\omzef{\dfrac{1-Z_i}{1-e(x_i; \alpha)}}
\def\omze{\dfrac{1-Z_i}{1-e}}
\def\hmuo{\hyua(1)}
\def\hmuz{\hyua(0)}
\def\hmuoipw{\hyua(1)}
\def\hmuzipw{\hyua(0)}
\def\hmuoipwx{\hat Y_\ipwx(1)}
\def\hmuzipwx{\hat Y_\ipwx(0)}
\def\rp{\dfrac{\riy}{\hpi}}
\def\ipwx{\textup{ps}}
\def\wlsx{\textup{reg}}
\def\xreg{\textup{reg}}
\def\hpii{\hat p_i^{-1}}
\def\riy{R_i^Y}
\def\pfun{p(x_i, Z_i; \beta)}
\def\ipw{\textup{unadj}}
\def\mim{\textup{mim}}
\def\imp{\textup{imp}}
\def\htau{\hat\tau}
\def\tyi{\tilde Y_i}
\def\beginp{\begin{pmatrix}}
\def\endp{\end{pmatrix}}
\newcommand{\mn}{\mathcal N}
\newcommand{\op}{o_{\pr}(1)}
\newcommand{\rs}{\rightsquigarrow}
\newcommand{\glmt}{\texttt{glm}}
\newcommand{\lmt}{\texttt{lm}}
\def\hg{\hat\gamma}
\def\mn{\mathcal N}
\def\beginy{\begin{eqnarray}}
\def\endy{\end{eqnarray}}
\def\T{{ \mathrm{\scriptscriptstyle T} }}
\def\tx{\tilde x}
\def\txi{\tx_i}
\def\begina{\begin{eqnarray*}}
\def\enda{\end{eqnarray*}}
\newcommand{\sm}{Supplementary Material}
\newcommand{\sms}{{Supplementary Material }}
\def\begini{\begin{itemize}}
\def\endi{\end{itemize}}
\def\begine{\begin{enumerate}}
\def\ende{\end{enumerate}}
\newcommand{\pd}[1]{\dfrac{\partial}{\partial #1}}
\newcommand{\ot}[1]{1, \ldots, #1}
\newcommand{\indep}{\perp \!\!\! \perp}
\newcommand{\pr}{\mathbb P}
\DeclareMathOperator\cov{cov}    
\DeclareMathOperator\var{var}
 \newcommand{\footremember}[2]{%
    \footnote{#2}
    \newcounter{#1}
    \setcounter{#1}{\value{footnote}}%
}
\begin{document}

\onehalfspacing

\title{Covariate adjustment in randomized experiments with missing outcomes and covariates}

\author{Anqi Zhao\footremember{a}{Fuqua School of Business, Duke University, az171@duke.edu}
\and 
Peng Ding\footremember{b}{Department of Statistics, University of California, Berkeley,  pengdingpku@berkeley.edu}
\and Fan Li\footremember{c}{Department of Statistical Science, Duke University,   
 fl35@duke.edu \\\\
Ding thanks the U.S. National Science Foundation grant \#1945136. Li thanks Patient-Centered Outcomes Research Institute contract ME-2019C1-16146. We thank Jerry Chang, Wang Rui, Sean O'Brien, Luke Miratrix, and participants of Harvard Data Science Initiative causal seminars for stimulating discussions, the associate editor and two reviewers for constructive comments, and the investigators of the Best Apnea Interventions for Research trial for providing the data.}}
\date{}

\maketitle

\begin{abstract}
Covariate adjustment can improve precision in analyzing randomized experiments. With fully observed data, regression adjustment and \iptw are asymptotically equivalent in improving efficiency over unadjusted analysis.
When some outcomes are missing, we consider combining these two adjustment methods with inverse probability of observation weighting for handling missing outcomes, and show that the equivalence between the two methods breaks down. 
Regression adjustment no longer ensures efficiency gain over unadjusted analysis unless the true outcome model is linear in covariates or the outcomes are missing completely at random.
Propensity score weighting, in contrast, still guarantees efficiency over unadjusted analysis, and including more covariates in adjustment never harms asymptotic efficiency. 
Moreover, we establish the value of using partially observed covariates to secure additional efficiency by the missingness indicator method, which imputes all missing covariates by zero and uses the union of the completed covariates and corresponding missingness indicators as the new, fully observed covariates. 
Based on these findings, 
we recommend using regression adjustment in combination with  the missingness indicator method if the linear outcome model or missing complete at random assumption is plausible and using propensity score weighting with the missingness indicator method otherwise. 
\end{abstract}

\textbf{Keywords:} 
Inverse probability weighting; Missingness indicator; Propensity score; Regression adjustment. 

\section{Covariate adjustment in randomized experiments: a review and open questions}\label{sec:review}
Adjusting for chance imbalance in covariates can improve precision in analyzing randomized experiments \citep{fisher1935design, lin2013agnostic}.
Consider a randomized controlled trial with two treatment levels of interest, indexed by $z =$ 1 for treatment and 0 for control,  and a study population of $N$ units, indexed by $i = \ot{N}$. 
Let $x_i \in \mathbb R^J$, $Z_i \in \{1,0\}$, and $Y_i \in \mathbb R$ denote the baseline covariates, treatment assignment, and outcome of unit $i$. 
Let $Y_i(1)\in\mathbb R$ and $Y_i(0)\in\mathbb R$ denote the potential outcomes of unit $i$ under treatment and control, respectively, with $Y_i = Z_i Y_i(1) + (1-Z_i) Y_i(0)$. 
Assume throughout that (i) the $N$ units are an independent and identically distributed sample from some population and (ii) the treatment levels are assigned independently across units with constant treatment probability $\pr(Z_i = 1) = e \in (0,1)$. 
Define $\tau_i = \yio - \yiz$ as the individual treatment effect for unit $i$ and $\tau = E(\tau_i) =  E\{Y_i(1)\}   - E\{Y_i(0)\}$
as the average treatment effect of interest. 
We first review below three estimators of $\tau$ when $(Y_i, x_i, Z_i)$ are fully observed.

A simple unbiased estimator of $\tau$ is the difference in means of the outcomes between the two treatment groups, commonly referred to as the unadjusted estimator and denoted by $\htau_{\ua}$. It is numerically equal to the coefficient of $Z_i$ from the \ols fit of the {\it unadjusted regression} of $Y_i$ on $(1, Z_i)$, denoted by $\lmt(Y_i \sim 1+Z_i)$ by R convention.

Regression adjustment and \iptw are two ways to adjust for chance imbalances in covariates. First, the {\it interacted regression} $
\lmt \{Y_i \sim 1 + Z_i + (x_i - \bar x) + Z_i(x_i - \bar x) \}$, 
where $\bar x = N^{-1}\sum_{i=1}^N x_i$, gives a covariate-adjusted variant of the unadjusted regression \citep{tsiatis2008covariate, lin2013agnostic, negi2020revisiting}.
The ordinary-least-squares coefficient of $Z_i$ defines a regression-adjusted estimator of $\tau$, denoted by $\htreg $. 

Next, let $e_i = \pr(Z_i = 1\mid x_i)$ denote the propensity score of unit $i$ \citep{rosenbaum1983central}. 
The \iptw approach to covariate adjustment weights observations by functions of an \emph{estimate} of $e_i$ \citep{williamson2014variance}. 
In our setting, the $e_i$ is known and equals $e$ for all units.
Nevertheless, standard results suggest that we can still estimate $e_i$ using a working model as a means to improve efficiency; see, e.g., \cite{hahn1998role, hirano2003efficient, shen2014inverse}.
Specifically, let $\hat e_i$ be the maximum likelihood estimate of $e_i$ based on the logistic regression of $Z_i$ on $(1, x_i)$, denoted by $\glmt(Z_i \sim 1 + x_i)$ by R convention. 
We can estimate $\tau$ by the coefficient of $Z_i$ from the \wls fit of the unadjusted regression $\lmt(Y_i \sim 1+Z_i)$, where we weight unit $i$ by $\hat e_i^{-1}$ if $Z_i = 1$ and by $(1-\hat e_i)^{-1}$ if $Z_i = 0$, summarized as $\hat \pi_i  =  Z_i / \hat e_i +(1-Z_i)/(1-\hei)$. 
We denote the resulting estimator by $\hteps$, where the ``-ps" stands for propensity score weighting. Other propensity score weights such as overlap weighting can also be used \citep{zeng2021propensity}.  

\begin{table}\caption{\label{tb:estimators} A summary of $\{ \htn , \htreg , \hteps  \}$ when all data are observed (column 3) and when outcomes are partially missing (column 4). 
Let $x_i' = x_i - \bar x$ denote the centered covariates, where $\bar x = \meani x_i$. Let $\hat p_i$ denote the estimated probability of $Y_i$ being observed given $(x_i, Z_i)$, and let $\hat \pi_i = Z_i / \hat e_i  + (1-Z_i) / (1-\hat e_i)$ denote the inverse of the estimated probability of the treatment received for unit $i$ with 
$\hat e_i$ as the estimated propensity score. 
}

 \resizebox{\columnwidth}{!}{%
 \begin{tabular}{l|c|c|c}\hline
 & Regression specification& Weight over $i = \ot{n}$ & Weight over $\{i:\riy = 1\}$ \\\hline
 $\htn $ & $\lmt(Y_i \sim 1+Z_i)$ & 1 &  $\hpi^{-1}$ \\\hline
 $\htreg $ & $\lmt(Y_i \sim 1+Z_i + x_i' + Z_i x_i')$ &  1 &   $\hpi^{-1}$ \\\hline
 $\hteps $ & $\lmt(Y_i \sim 1+Z_i)$ &  
 $\hat \pi_i$
 &  
 $\hpi^{-1} \hat \pi_i $\\\hline
 
 \end{tabular}
 }
 \end{table}

The $\{ \htn , \htreg , \hteps  \}$ together define three regression estimators of $\tau$ with fully observed data, summarized in the first three columns of Table \ref{tb:estimators}. 
Under mild regularity conditions, they are all consistent and asymptotically normal \citep{tsiatis2008covariate, lin2013agnostic, williamson2014variance, negi2020revisiting}, with $\htreg$ and $\hteps$ being asymptotically equivalent in improving precision over $\htn$ \citep{shen2014inverse,zeng2021propensity}; see Theorem {S1} in the \sms for a formal statement. 

Missing data are common in practice and pose challenges to inference. 
Assuming missingness only in covariates, \cite{zhao2022to} proposed to use the interacted regression with missingness indicators for covariates included as additional covariates, and showed that the resulting inference guarantees asymptotic efficiency over unadjusted analysis.   
Despite the vast literature on missing data and covariate adjustment separately, there lacks theoretical guidance on covariate adjustment with missingness in both outcomes and covariates in randomized experiments. Many important questions remain open;  for example, 
(i) How do we conduct covariate adjustment in the presence of missing outcomes? 
(ii) Does the resulting inference ensure consistency and efficiency gain over unadjusted analysis? (iii) Does the asymptotic equivalence between regression adjustment and \iptw still hold? 
(iv) Can the missingness indicator method in \cite{zhao2022to} be extended to the presence of missing outcomes? \cite{chang2023covariate} discussed some of these issues and proposed an estimator without theoretical investigation. This paper provides theoretical answers to these questions and 
proposes two easy-to-implement estimators.
We begin with the case with missingness in only outcomes in Section \ref{sec:missingY} and then extend to the case with missingness in both covariates and outcomes in Section \ref{sec:missingxy}.

\section{Covariate adjustment with missing outcomes }\label{sec:missingY}
\subsection{Regression estimators with missing outcomes}
We first extend Section \ref{sec:review} to the presence of missing outcomes. 
Assume throughout the rest of the paper that $x_i$ and $Z_i$ are fully observed for all units whereas $Y_i$ is missing for some units. 
Let $\ryi \in \{1,0\}$ be the indicator of $Y_i$ being observed for unit $i$, with $\ryi = 1$ if $Y_i$ is observed and $\ryi = 0$ otherwise. 
Recall from Table \ref{tb:estimators} that when all data are observed, $\htn$, $\htreg$, and $\hteps$ are the coefficients of $Z_i$ from the least-squares fits of the unadjusted regression, the interacted regression, and the unadjusted regression over all units, respectively, with weights $\pi_{i,\ua} = 1$, $\pi_{i,\wlsx} = 1$, and $\pi_{i,\ipwx} = \hat \pi_i $ for unit $i$.
In the presence of missing outcomes,
let $p_i = \pr(\ryi = 1\mid x_i, Z_i)$ denote the probability of $Y_i$ being observed given $(x_i, Z_i)$, and let $\hat p_i$ be an estimate of $p_i$. 
By inverse probability of observation weighting \citep{seaman2013review}, we can instead fit 
the corresponding regression over units with observed outcomes, indexed by $\{i: \riy =1\}$, with weight $\pi_{i, \dmd}' = \hpi^{-1}  \pi_{i,\dmd} \ (\dmd = \ua, \wlsx, \ipwx)$ for unit $i$.
This generalizes $\{\htn , \htreg , \hteps \}$ to the presence of missing outcomes, summarized in the last column of Table \ref{tb:estimators}. 
The definitions of $\htau_\dmd$'s when all data are observed are special cases with $\riy = 1$ and $\hpi = p_i = 1$ for all $i$. 
We will hence use the same notation to denote the generalized estimators with missing outcomes to highlight the connection. 
The generalized $\hteps$ is a double-weighted estimator, where we use $ \hpi^{-1}$ and $\hat\pi_i$ to address missing outcomes and covariate adjustment, respectively \citep{negi2020doubly,chang2023covariate}.

\subsection{Asymptotic theory}\label{sec:missingy}
We now establish the asymptotic properties of the generalized $\{\htn , \htreg , \hteps\}$. 
To begin with, Assumption \ref{assm:riy} specifies the outcome missingness mechanism.

\begin{assumption}\label{assm:riy}
\begine[(i)]
\item\label{item:mar} $\riy \indep \{\yio, \yiz\} \mid (x_i, Z_i)$; 
\item\label{item:p-model} $
p_i = \pr(R_i^Y = 1 \mid x_i, Z_i) =  \{1+\exp(-U_i^\T \beta^*)\}^{-1},
$
where $U_i = U(x_i, Z_i)$ is a known vector function of $(x_i, Z_i)$ and $\beta^*$ is the unknown parameter, and we construct $\hpi$ by the logistic regression $\glmt(\ryi \sim U_i)$ over $i = \ot{N}$.
\ende 
\end{assumption}

Assumption \ref{assm:riy}\eqref{item:mar} ensures that $\ryi$ is independent of $Y_i$ conditioning on the fully observed $(x_i, Z_i)$. 
The outcome is hence {\it missing at random} in the sense that whether an outcome is missing is independent of the value of the outcome conditional on the observables.
Assumption \ref{assm:riy}\eqref{item:p-model} further specifies the functional form of the outcome missingness mechanism.
We focus on the logistic missingness model because of its prevalence in practice.
We conjecture that similar results hold for general missingness models and relegate the formal theory to future research.
We use  $U_i$ to denote the regressor vector in the true outcome missingness model under Assumption \ref{assm:riy}\eqref{item:p-model}. 
In practice, the true value of $U_i$ is often unknown. 
Common choices for fitting a working model include $U_i = (1, x_i^\T)^\T$,  $U_i = (1,  x_i^\T, Z_i)^\T$, and $U_i = (1, x_i^\T, Z_i, Z_i x_i^\T)^\T$.

Theorem \ref{thm:clt} below generalizes the theory of covariate adjustment with fully observed data to the presence of missing outcomes and gives the asymptotic distributions of the generalized $\{\htn , \htreg , \hteps  \}$.
Let $\tyi = e^{-1} Y_i(1) +(1-e)^{-1} \yiz$, and let 
$
\textup{proj}(\tyi \mid 1, x_i) = E(\tyi) + \cov(\tyi, x_i) \covxinv \{x_i - E(x_i)\}
$
denote the linear projection of  $\tyi$ on $(1,x_i)$.

\begin{theorem}\label{thm:clt}
Assume complete randomization that $Z_i\indep \{ Y_i(1), Y_i(0), x_i\}$ and Assumption \ref{assm:riy}.
Under standard regularity conditions as $N\to\infty$, we have 
$
\sqrt N(\hat\tau_\dmd - \tau) \rightarrow  \mn\left(0,  v_\dmd \right)$ in distribution for $\dmd = \ua,  \wlsx, \hajx$, 
where \begine[(i)]
\item $
v_\ipwx = v_\ua - e(1-e)\var \{ \textup{proj}(\tyi \mid 1, x_i)  \} \leq v_\ua$; 
\item $v_\wlsx$ can be either greater or less than $v_\ua$ depending on the data generating process. As two special cases, we have $v_\wlsx \leq v_\ipwx \leq v_\ua$ if (a) $Y_i$ is missing completely at random with $p_i = p \in (0,1)$ or (b) the outcome model $
E(Y_i \mid x_i, Z_i=z)= E\{Y_i(z) \mid x_i\}$ is linear in $x_i$ for $z = 0,1$.
\ende
\end{theorem}

We relegate the explicit expressions of $v_\ua$ and $v_\wlsx$ to Theorem {S1} in the \sm. When $p_i = 1$ for all $i$, the three asymptotic variances reduce to those in the standard theory for fully observed data with $v_\wlsx = v_\ipwx \leq v_\ua$. 
Theorem \ref{thm:clt} has two implications.
First, it ensures the consistency and asymptotic normality of $\{\htn , \htreg , \hteps  \}$ in the presence of missing outcomes. 
Second, it clarifies the relative efficiency of $\{\htn , \htreg , \hteps  \}$ and highlights a key deviation from the theory when all outcomes are observed:
regression adjustment by the interacted specification no longer guarantees efficiency gain in the presence of missing outcomes but \iptw still does. The asymptotic equivalence between the two methods for improving precision therefore breaks down. 

More specifically, Theorem \ref{thm:clt}(i) ensures that adjustment by \iptw reduces the asymptotic variance by $e(1-e)\var \{ \textup{proj}(\tyi \mid 1, x_i)  \}$. This expression does not depend on $p_i$ such that the reduction is the same as the reduction when outcomes are fully observed. 
Observe that adding more covariates to $x_i$ never reduces the variance of $\textup{proj}(\tyi \mid 1, x_i)$. 
Adjusting for more covariates by \iptw hence never hurts the asymptotic efficiency of the resulting $\hteps$. 
This underpins the extension to the case with missingness in both covariates and outcomes in Section \ref{sec:missingxy}.

On the other hand, Theorem \ref{thm:clt}(ii) suggests that $\htreg $ does not ensure efficiency gain over the unadjusted estimator $\htn$ unless the outcomes are missing completely at random or the true outcome model is linear in $x_i$. The latter condition echos the standard result in semiparametric efficiency theory. In particular, \cite{robins2007comment} pointed out that $\htreg$ can be written as a classic augmented inverse propensity score weighted estimator with a linear outcome model that corresponds to the interacted regression;  see Proposition {S2} in the \sm. Standard theory ensures that it achieves semiparametric efficiency if both the missingness model and the outcome model are correctly specified  \citep{tsiatis2006semiparametric}.

\section{Covariate adjustment with missingness in both covariates and outcomes}\label{sec:missingxy}
\subsection{Overview and recommendation}\label{sec:missingxy_overview}
We now extend to the case with missingness in both covariates and outcomes. 
Recall $x_i \in \mathbb R^J$ as the vector of baseline covariates that are fully observed for all units.
Assume that in addition to $x_i$, we also have $K$ partially observed covariates, summarized as $w_i = (w_{i1}, \ldots, w_{iK})\in \mathbb R^K$ for $i = \ot{N}$. 
Of interest is how we may use this additional information to further improve inference.

To this end, we recommend using the missingness indicator method to address missing covariates \citep{zhao2022to} and then constructing the regression-adjusted and propensity-score-weighted estimators based on the augmented covariate vectors from the missingness indicator method; see Algorithm \ref{algo:1} below.  We  show in Section \ref{sec:theory_xy} that the resulting estimators preserve the theoretical properties in Theorem \ref{thm:clt}.
Accordingly, we recommend using regression adjustment when the linear outcome model or missing completely at random assumption is plausible and using propensity score weighting otherwise.
The results combine the theory in Section \ref{sec:missingY} on missing outcomes and that in \cite{zhao2022to} on missing covariates, and offer the full picture of covariate adjustment with missing outcomes and covariates.

Let $R_i^w = (R_{i1}^w, \ldots, R_{iK}^w) \in \{1,0\}^K$ represent the missingness in $w_i$, with $R_{ik}^w = 1$ if $w_{ik}$ is observed and $R_{ik}^w = 0$ if $w_{ik}$ is missing. 
Let $\wiz \in \mathbb R^K$ denote an imputed variant of $w_i$, where we impute all missing elements with zero. Note that $\wiz$ is in fact the elementwise product, or intuitively the ``interaction", between $w_i$ and  $R^w_i$ regardless of the actual values of the missing elements  in $w_i$. 
The concatenation of $(x_i, \wiz, \riw)$, denoted by $x_i^\aug    \in \mathbb R^{J+2K}$, gives the vector of fully observed covariates under the missingness indicator method, which imputes all missing covariates by zero and augments the completed covariates by the corresponding missingness indicators. 
We use the superscript ``mim'' to signify the missingness indicator method.

Observe that $x_i^\mim$ summarizes all observed information in $(x_i,w_i)$. 
Renew $e_i = \pr(Z_i = 1\mid x_i^\mim) = e$ and $p_i = \pr(\ryi = 1 \mid x_i^\mim, Z_i)$  as the propensity score and the probability of having an observed outcome given $x_i^\mim$. 
Algorithm \ref{algo:1} below states the procedure for constructing the recommended estimators, denoted by $\htreg(x_i^\mim)$ and $\hteps(x_i^\mim)$, as variants of $\htreg$ and $\hteps$ after replacing $x_i$ with  $x_i^\mim$ as the new fully observed covariate vector; c.f$.$ Table \ref{tb:estimators}.
 
\begin{algo}\label{algo:1}
\begine[(i)]
\item Construct $x_i^\aug = (x_i, \wiz, \riw)$ as the new fully observed covariate vector;
\item\label{item:step_p}  Estimate $p_i$ from the prespecified outcome missingness model, denoted by $\hat p_i$. When the outcome missingness model is unknown, compute $\hat p_i$ from the logistic regression $\glmt(\ryi \sim  1 + x_i^\aug + Z_i + x_i^\aug Z_i)$ over $i = \ot{N}$;
\item\label{item:step_reg}    When the linear outcome model or missing completely at random assumption is plausible, compute $\htreg(x_i^\mim)$ as the coefficient of $Z_i$ from the \wls fit of the interacted regression $
\lmt \{Y_i \sim 1 + Z_i + (x_i^\mim - \bar x^\mim) + Z_i(x_i^\mim - \bar x^\mim) \}$ over $\{i: \riy = 1\}$, where we weight unit $i$ by $\hat p_i^{-1}$.

\smallskip 
 
\item[] Otherwise, estimate $e_i$ from the logistic regression $\glmt(Z_i \sim 1+x_i^\mim)$ over $i = \ot{N}$, denoted by $\hat e_i$, and compute $\hteps (x_i^\mim)$ as the coefficient of $Z_i$ from the \wls fit of the unadjusted regression $\lmt(Y_i \sim 1+Z_i)$ over $\{i: \riy = 1\}$, where we weight unit $i$ by $\hat p_i^{-1} \{Z_i / \hat e_i +(1-Z_i)/(1-\hei)\}$.

\ende
\end{algo}

Despite the apparent oversimplification by imputing all missing covariates with zero in forming $ x_i^\mim $, the resulting $\htreg(x_i^\mim)$ and $\hteps(x_i^\mim)$ are invariant over a general class of imputation schemes. Specifically, consider a covariate-wise imputation strategy where  for $k = \ot{K}$, we impute all missing values in the $k$th partially observed covariate by a common value $c_k\in\mathbb R$ \citep{zhao2022to}.
Let $c = (c_1, \ldots, c_K)$ represent the imputation scheme. The resulting imputed variant of $w_i$ equals $\wic  = (w_{i1}^{c}, \ldots, w_{iK}^{c})^\T\in \mathbb R^K$ with $w_{ik}^{c} = w_{ik}$ if $w_{ik}$ is observed and $w_{ik}^{c} = c_k$ otherwise. 
This defines a general class of imputed variants of $w_i$ that includes $\wiz$ as a special case with $c_k = 0$ for all $k$. Another common choice of $c_k$ is the average of the observed values in $(w_{ik})_{i=1}^N$. 
Let $x_i^\mim(c) = (x_i, \wic, \riw)\in \mathbb R^{J + 2K}$ denote a variant of $x_i^\mim$ where we use the more general $\wic$ in place of $\wiz$. 

\begin{proposition}\label{prop:invariance} 
Assume that in Algorithm \ref{algo:1}, we replace all $x_i^\mim$ by $x_i^\mim(c)$. The resulting estimators are invariant to the choice of the imputed values and equal  $\htreg(x_i^\mim)$ and $\hteps(x_i^\mim)$ for all $c \in \mathbb R^K$.
\end{proposition}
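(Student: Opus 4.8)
The plan is to verify that every numerical quantity produced by Algorithm \ref{algo:1} after replacing $x_i^\mim$ by $x_i^\mim(c)$ coincides with the one produced using $x_i^\mim$, so that the two final estimators are identical. The crux is that $\wic$ and $\wiz$ differ by an affine function of $\riw$: componentwise, $w_{ik}^{(c)} = w_{ik}^0 + c_k(1 - R_{ik}^w)$ for $k = \ot K$, since both sides equal $w_{ik}$ when $w_{ik}$ is observed and equal $c_k$ and $0$, respectively, when it is missing. Appending an intercept, it follows that $(1, x_i^\mim(c)) = (1, x_i, \wic, \riw)$ is the image of $(1, x_i^\mim) = (1, x_i, \wiz, \riw)$ under an invertible linear map $M$ that fixes the $(1, x_i, \riw)$ coordinates and adds to each coordinate of $\wiz$ a fixed linear combination of $1$ and the corresponding coordinate of $\riw$; the inverse $M^{-1}$ has the same form with $c$ replaced by $-c$. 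The same $M$, applied blockwise, also relates $(1, x_i^\mim(c), Z_i, x_i^\mim(c) Z_i)$ to $(1, x_i^\mim, Z_i, x_i^\mim Z_i)$, because multiplication by $Z_i$ commutes with the fixed linear map $M$.

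Next I would invoke the standard fact that the fit of a logistic regression is invariant under a nonsingular linear reparametrization of the regressor vector. The log-likelihood of $\glmt$ depends on the regressor vector $v_i$ and the coefficient $\theta$ only through the linear predictor $v_i^\T\theta$; if $\tilde v_i = M v_i$ for a fixed invertible $M$, then $v_i^\T\theta = \tilde v_i^\T(M^{-\T}\theta)$, so the set of achievable linear predictors $\{(v_i^\T\theta)_{i=1}^N : \theta\}$ is unchanged, and with it the maximized log-likelihood and the fitted linear predictors, hence the fitted probabilities, are unchanged (the standard regularity conditions rule out degeneracies such as complete separation, so the maximizing linear predictor is unique even when $\theta$ is not). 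Applying this with the two invertible maps of the previous paragraph shows that Step (ii) of Algorithm \ref{algo:1} returns the same $\hpi$ and Step (iii) the same $\hei$ whether one uses $x_i^\mim(c)$ or $x_i^\mim$. If instead $\hpi$ is obtained from a prespecified outcome missingness model as in the first sentence of Step (ii), the argument is even easier: either the model is specified directly through $(x_i, w_i, \riw)$ and never sees $\wic$, or it is specified through $x_i^\mim$, and the same reparametrization argument applies.

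Finally, Step (iv) fits $\lmt(Y_i \sim 1 + Z_i)$ over $\{i : \riy = 1\}$ with weight $\hpi^{-1}\{Z_i/\hei + (1 - Z_i)/(1 - \hei)\}$ for unit $i$ --- a quantity depending on the covariates only through $\hpi$ and $\hei$. Since the preceding paragraphs show these are identical under $x_i^\mim(c)$ and $x_i^\mim$, the weighted-least-squares fit, and in particular its coefficient of $Z_i$, is identical; that is, $\hteps(x_i^\mim(c)) = \hteps(x_i^\mim)$ for every $c \in \mathbb R^K$.

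I expect the only real subtlety to lie in the second paragraph, and it is one of careful bookkeeping rather than depth: the reparametrization invariance must be applied to the fitted linear predictor and fitted probabilities, not to the coefficient vector, which genuinely does change with $c$; and one must keep track of the fact that it is precisely the presence of the missingness indicators $\riw$ among the regressors in $x_i^\mim$ that makes the map between $x_i^\mim(c)$ and $x_i^\mim$ invertible --- without $\riw$ the imputed covariate $\wic$ could not be recovered from $\wiz$, and the invariance would fail.
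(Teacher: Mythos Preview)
Your proposal is correct and follows essentially the same approach as the paper: both arguments observe that $(1,x_i^\mim(c))$ is a nonsingular linear transformation of $(1,x_i^\mim)$ (the paper phrases this as equality of column spaces) and then invoke the invariance of logistic-regression fitted probabilities under such reparametrizations (the paper isolates this as a separate lemma, which you prove inline). Your explicit formula $w_{ik}^{(c)} = w_{ik}^0 + c_k(1-R_{ik}^w)$ and your remark that the presence of $\riw$ is what makes the map invertible are more detailed than the paper's treatment, but the logical structure is identical.
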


\subsection{Asymptotic justification of the recommended estimators in Algorithm \ref{algo:1}}\label{sec:theory_xy}

Theorem \ref{thm:ipwx_clt_xy} below states the asymptotic properties of $\htreg (x_i^\mim)$ and $\hteps (x_i^\mim)$ in Algorithm \ref{algo:1}. 
For comparison, let $x_i^\sub$ be a subvector of $x_i^\mim$ and  let $\hteps (x_i^\sub)$ denote a variant  of $\hteps (x_i^\aug)$ with  $x_i^\mim$ replaced by $x_i^\sub$ in step \eqref{item:step_reg} of Algorithm \ref{algo:1}.
A common choice of $x_i^\sub$ is $x_i^\sub = x_i$, where we use only fully observed covariates in constructing $\hat e_i$. 

\begin{theorem}\label{thm:ipwx_clt_xy}
Assume complete randomization with $Z_i \indep \{  Y_i(1), Y_i(0), x_i, w_i,  R_i^w\}$ and Assumption \ref{assm:riy} holds with all $x_i$ replaced by $x_i^\mim$. Under standard regularity conditions as $N\to\infty$, we have
\begine[(i)]
\item Theorem \ref{thm:clt} holds with $(\htreg, \hteps)$ replaced by $(\htreg(x_i^\mim), \hteps(x_i^\mim))$ and $\htn$ renewed based on the renewed definition of $\hpi$ from step \eqref{item:step_p} of Algorithm \ref{algo:1};
\item the asymptotic variance of $\hteps(x_i^\sub)$ is greater than or equal to that of $\hteps(x_i^\mim)$. 
\ende
\end{theorem}

Other than being independent of $Z_i$, Theorem \ref{thm:ipwx_clt_xy} does not require further assumptions on the missingness mechanism of $w_i$. Therefore, Theorem \ref{thm:ipwx_clt_xy} holds even if $w_i$ is {\it missing not at random}, which departs from the standard literature of missing covariates under the {\it missing-at-random} assumption \citep{robins1994estimation}.

In addition, the independence between $\riw$ and $Z_i$ ensures $x_i^\mim = (x_i, \wiz, \riw)$ is effectively a fully observed pretreatment covariate vector, generalizing $x_i$. Theorem \ref{thm:ipwx_clt_xy} requires that Assumption \ref{assm:riy} holds with all $x_i$ replaced by $x_i^\mim$, and thereby ensures the outcome is missing at random with $\riy$ $\indep$ $Y_i \mid (x_i^\mim, Z_i)$. This is the weakest form of the missing-at-random assumption based on the observed information in $(x_i, w_i, Z_i)$ \citep{rosenbaum1984reducing}.  
An alternative, more standard form of the missing-at-random assumption is 
$\ryi \indep Y_i \mid (x_i, Z_i)$. This is a more restrictive condition because it does not allow the missingness in outcomes to depend on the missingness pattern of the covariates, as represented by $\riw$, or the observed values in $w_i$.

Therorem \ref{thm:ipwx_clt_xy}(i) ensures that  $\hteps (x_i^\aug)$ is asymptotically more efficient than $\htn $ while $\htreg(x_i^\mim)$ is asymptotically more efficient than both $\htn$ and $\hteps(x_i^\mim)$ when the outcomes are either linear in covariates or missing completely at random. Therorem \ref{thm:ipwx_clt_xy}(ii) ensures that $\hteps (x_i^\aug)$ is asymptotically more efficient than all alternative propensity-score-weighted estimators  that use only a subset of $x_i^\mim$ for estimating the propensity score, including $\hteps(x_i)$ that uses only $x_i$. Recall that $\wiz$ is a variant of $w_i$ by imputing all missing covariates by zero. 
From Theorem \ref{thm:ipwx_clt_xy}, this rather basic imputation guarantees efficiency gain irrespective of the true values of the missing covariates, illustrating the quick wins that can be achieved by adjusting for partially observed covariates.

As a comparison, the approach reviewed by \cite{seaman2013review} applies inverse probability weighting to only units with fully observed outcome {\it and} covariates, and requires correct specification of both the outcome and covariate missingness models. Accordingly, it requires the covariates to be missing at random. Our proposed method, in contrast, places no restriction on the covariate missingness mechanism and does not require the specification of the covariate missingness model.

\section{Simulation and a real-data example}\label{sec:simulation}

\subsection{Regression adjustment does not guarantee efficiency gain}

We first illustrate the possibly worse precision of $\htreg $ in finite samples. 
Assume missingness in only outcomes. We generate $\{x_i, Y_i(1), Y_i(0), Z_i, \ryi\}_{i=1}^N$ as independent realizations of 
$
x_i \sim \textup{Uniform}(-10,10)$, $Y_i(1) = \sin(x_i)$, $Y_i(0) = -\cos(x_i)$, $Z_i \sim \text{Bernoulli}(e)$, and $\ryi \sim \text{Bernoulli}(p_i)$, where 
$
p_i = \{1+\exp(-1-2x_i)\}^{-1} 
$. 
Figure \ref{fig:simu}(a) shows the distributions of the deviations of $\{\htn  ,\htreg  ,\hteps   \}$ from $\tau$ over 10,000 independent replications at $N=1$,000 and $e=0.2, 0.5$. 
The regression-adjusted $\htreg$ has worse precision than $\htn$ in both cases. 
Similar patterns are observed for other choices of potential outcomes and combinations of $N$ and $e$; we omit the results to avoid redundancy. 

\begin{figure}[t]
\begin{center}
\begin{tabular}{cccccc}
\includegraphics[width=.3 \textwidth]{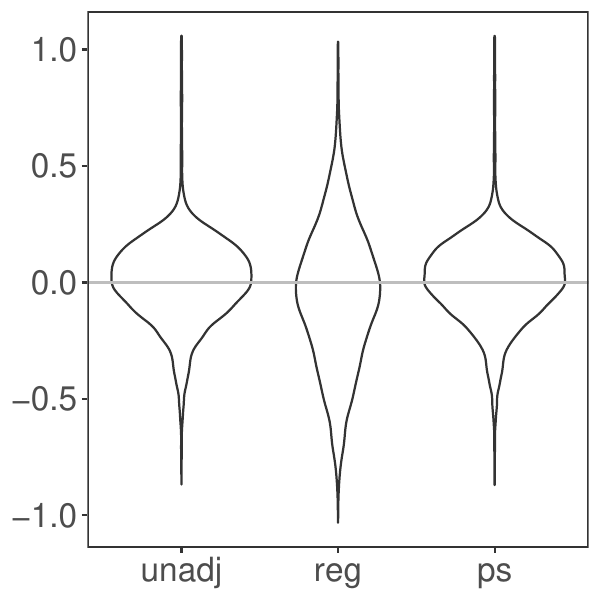} & \includegraphics[width=.3 \textwidth]{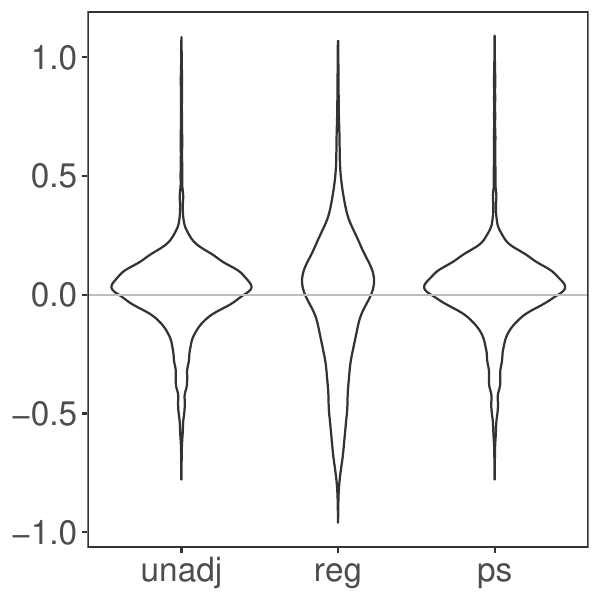} & \includegraphics[width=.3 \textwidth]{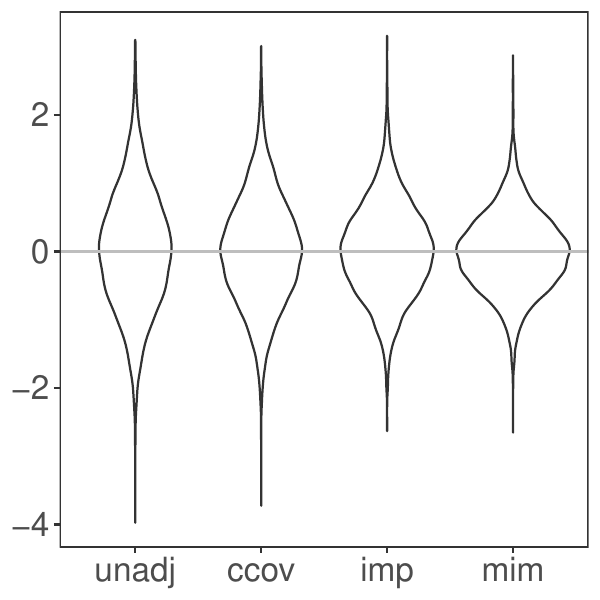}
\\
\footnotesize  $e=0.2$ & \footnotesize  $e=0.5$ & \\
\multicolumn{2}{c}{\footnotesize 
\begin{tabular}{c}
(a)  Deviations of $\{\htn ,\htreg ,\hteps \}$
 from $\tau$.
\end{tabular}} & \footnotesize  
\begin{tabular}{c}
(b) Deviations of
 $\htn$ (unadj), \\
 $\hteps(x_i)$ (ccov), 
 $\hteps(x_i^\imp)$ (imp), \\
 and $\hteps(x_i^\mim)$ (mim) 
from $\tau$.
\end{tabular}
\end{tabular}
\end{center}
\caption{\label{fig:simu} Violin plots over $10^4$ independent replications.}
\end{figure}

\subsection{Efficiency gain by \iptwn}\label{sec:simu_ipwx}
We next illustrate the efficiency gain by \iptwn, along with the benefits of adjusting for partially observed covariates. 
Consider a treatment-control experiment with $N = 500$ units and treatment probability $e = 0.2$. 
Let $(\xi_i)_{i=1}^N$ be independent Bernoulli$(0.4)$ to divide the units into two latent classes.
Assume that we have $J = 1$ fully observed covariate $x_i \sim \mn(\xi_i,1)$ and $K=9$ partially observed covariates $w_i =  (w_{i1},   \ldots, w_{iK})^\T   \sim \mn( \xi_i 1_{K}, I_K)$.
We 
generate the missingness indicators and potential outcomes as
$R^w_{ik}  \sim \textup{Bernoulli}\{0.5\xi_i + 0.95(1-\xi_i)\}$ for $k = \ot{K}$, $\ryi \sim \text{Bernoulli}(p_i)$, 
where $p_i = \{1+\exp(-1-2x_i)\}^{-1}$, and $Y_i(z) \sim \mathcal{N}\{\mu_i(z), 1\}$, where $\mu_i(z) = 3\xi_i  + (x_i+ \sum_{k=1}^K w_{ik})   \gamma_{z | \xi_i} + 3\sum_{k=1}^K R^w_{ik}$  with 
$(\gamma_{1|1}, \gamma_{0|1}) =(1, -1) $  and $(\gamma_{1|0}, \gamma_{0|0}) =(0.5, - 0.5)$. The data generating process ensures the covariates are missing not at random with units with $\xi_i = 1$ having both a higher chance of missing covariates and on average greater values of covariates.

Let $\hteps(x_i)$ denote a variant of $\hteps(x_i^\mim)$ where we use only the fully observed $x_i$ in constructing $\hei $. 
Let $\hteps(x_i^\imp)$ denote a variant of $\hteps(x_i^\mim)$ where we use only the union of $x_i$ and $\wiz$ in constructing $\hei $.
Figure \ref{fig:simu}(b) shows the distributions of the deviations of $\htn$, $\hteps(x_i) $, $\hteps(x_i^\imp)$, and $\hteps(x_i^\mim) $ from $\tau$  over 10,000 independent replications.
The results are coherent with the asymptotic theory in Theorem \ref{thm:ipwx_clt_xy}, with $\hteps(x_i^\mim)$ being the most precise.

\subsection{A real-data example}
We now apply the proposed method to the Best Apnea Interventions for Research trial in \cite{bakker2016motivational}. 
A total of 169 patients were recruited and randomized with equal probability to active treatment and control. 
One outcome of interest is the 24-hour systolic blood pressure  measured at six months, which is missing for 45 patients.

As an illustration, we consider four baseline covariates, namely age, gender, baseline Apnea-Hypopnea Index, and baseline diastolic blood pressure, for estimating the outcome missingness model and covariate adjustment.  
The first three covariates are fully observed, and the last covariate is missing for eight patients. 
Table \ref{tb:bestAir} summarizes the point estimates and estimated variances of the unadjusted, regression-adjusted, and propensity-score-weighted estimators based on the fully observed covariates and the augmented covariates under the missingness indicator method,  respectively.
The variances are estimated by the bootstrap over 10,000 independent replications.  
The results are coherent with the asymptotic theory, with the combination of \iptw and missingness indicator method ($\hteps(x_i^\mim)$) resulting in the smallest bootstrap variance. 
The two regression-adjusted estimators $\htreg(x_i^\mim)$ and $\htreg(x_i)$, on the other hand, have higher bootstrap variances than their respective unadjusted counterparts $\htn(x_i^\mim)$ and $\htn(x_i)$, illustrating the possible loss in precision by regression adjustment.

\begin{table}[t]
\caption{\label{tb:bestAir} Point estimates and estimated variances of the unadjusted, regression-adjusted, and propensity-score-weighted estimators based on $x_i$ and $x_i^\mim$, respectively. The variances are estimated by the bootstrap over $10^4$ independent replications.}
\begin{center}  
\begin{tabular}{l|ccc|ccc}\hline
 & $\htn(x_i^\mim)$ & $\htreg(x_i^\mim)$ & $\hteps(x_i^\mim)$ & $\htn(x_i)$ & $\htreg(x_i)$ & $\hteps(x_i)$ \\\hline
Point estimate & $-7.03$ & $-6.07$  & $-5.47$  & $-4.78$ & $-4.66$  & $-4.60$ \\ 
Estimated variance 
 & 6.26 &   7.50 &   4.72 &   5.69 &  5.85 &  5.63 \\\hline
\end{tabular}
\end{center}

\end{table}

\vspace{-1mm}

\section{Further discussion on the role of the outcome model}\label{sec:ext}

Theorems \ref{thm:clt} and \ref{thm:ipwx_clt_xy} assume that the missingness model for the outcome is correctly specified. When this assumption fails, $\hteps$ is inconsistent, while $\htreg$ remains consistent if the linear outcome model is correct. The use of the outcome model ensures this double robustness property of $\htreg$; see Proposition {S2} in the \sm. Analogously, we can also augment $\hteps$ with the outcome model:
\begina
\htau_{\textup{ps-reg}} &=& \frac{1}{N}\sum_{i=1}^N \left[   \hat m_1(x_i) + \dfrac{\riy}{\hat p_i} 
\dfrac{Z_i}{\hei }  \left\{ Y_i - \hat m_1(x_i) \right\} \right] -  \frac{1}{N}\sum_{i=1}^N\left[    \hat m_0(x_i) + \dfrac{\riy}{\hpi} \dfrac{1-Z_i}{1-\hei} \left\{ Y_i - \hat m_0(x_i)\right\} \right],
\enda
where $\hat m_z(x_i) \ (z=0,1)$ is the estimated outcome model. The augmented estimator $\htau_{\textup{ps-reg}}$ is doubly robust in that it is consistent if either the outcome model or the outcome missingness model is correct.
As a special case, we can construct $\htau_{\textup{ps-reg}}$ as the coefficient of $Z_i$ from the \wls fit of the interacted regression over $\{i: \riy = 1\}$ with weight $\hpi^{-1}\hat \pi_i $ for unit $i$. The corresponding $\hat m_z(x_i) \ (z=0,1)$ equals the estimated outcome model from the same weighted-least-squares fit; see Proposition S3 in the Supplementary Material.   This integrates the regression adjustment and the \iptw in the last two rows of Table \ref{tb:estimators}.

 \bibliographystyle{chicago}
\bibliography{refs_missingXY}

\begin{thebibliography}{}

\bibitem[\protect\citeauthoryear{Bakker, Wang, Weng, Aloia, Toth, Morrical,
  Gleason, Rueschman, Dorsey, Patel, et~al.}{Bakker
  et~al.}{2016}]{bakker2016motivational}
Bakker, J.~P., R.~Wang, J.~Weng, M.~S. Aloia, C.~Toth, M.~G. Morrical, K.~J.
  Gleason, M.~Rueschman, C.~Dorsey, S.~R. Patel, et~al. (2016).
\newblock Motivational enhancement for increasing adherence to cpap: a
  randomized controlled trial.
\newblock {\em Chest\/}~{\em 150}, 337--345.

\bibitem[\protect\citeauthoryear{Chang, Song, Li, and Wang}{Chang
  et~al.}{2023}]{chang2023covariate}
Chang, C.-R., Y.~Song, F.~Li, and R.~Wang (2023).
\newblock Covariate adjustment in randomized clinical trials with missing
  covariate and outcome data.
\newblock {\em Statistics in Medicine\/}~{\em 42}, 3919--3935.

\bibitem[\protect\citeauthoryear{Ding}{Ding}{2023}]{ding2023a}
Ding, P. (2023).
\newblock {\em A First Course in Causal Inference}.
\newblock https://arxiv.org/abs/2305.18793.

\bibitem[\protect\citeauthoryear{Fisher}{Fisher}{1935}]{fisher1935design}
Fisher, R.~A. (1935).
\newblock {\em The Design of Experiments\/} (1st ed.).
\newblock Edinburgh, London: Oliver and Boyd.

\bibitem[\protect\citeauthoryear{Hahn}{Hahn}{1998}]{hahn1998role}
Hahn, J. (1998).
\newblock On the role of the propensity score in efficient semiparametric
  estimation of average treatment effects.
\newblock {\em Econometrica\/}, 315--331.

\bibitem[\protect\citeauthoryear{Hirano, Imbens, and Ridder}{Hirano
  et~al.}{2003}]{hirano2003efficient}
Hirano, K., G.~W. Imbens, and G.~Ridder (2003).
\newblock Efficient estimation of average treatment effects using the estimated
  propensity score.
\newblock {\em Econometrica\/}~{\em 71\/}(4), 1161--1189.

\bibitem[\protect\citeauthoryear{Imbens}{Imbens}{2004}]{imbens2004nonparametric}
Imbens, G.~W. (2004).
\newblock Nonparametric estimation of average treatment effects under
  exogeneity: A review.
\newblock {\em Review of Economics and statistics\/}~{\em 86\/}(1), 4--29.

\bibitem[\protect\citeauthoryear{Lavergne}{Lavergne}{2008}]{lavergne2008cauchy}
Lavergne, P. (2008).
\newblock A cauchy-schwarz inequality for expectation of matrices.
\newblock {\em Department of Economics, Simon Fraser University, Discussion
  Papers\/}.

\bibitem[\protect\citeauthoryear{Lin}{Lin}{2013}]{lin2013agnostic}
Lin, W. (2013).
\newblock {Agnostic notes on regression adjustments to experimental data:
  Reexamining Freedman's critique}.
\newblock {\em Annals of Applied Statistics\/}~{\em 7}, 295--318.

\bibitem[\protect\citeauthoryear{Negi}{Negi}{2023}]{negi2020doubly}
Negi, A. (2023).
\newblock Doubly weighted {M}-estimation for nonrandom assignment and missing
  outcomes.
\newblock {\em Journal of Causal Inference\/}~{\em in press}.

\bibitem[\protect\citeauthoryear{Negi and Wooldridge}{Negi and
  Wooldridge}{2021}]{negi2020revisiting}
Negi, A. and J.~M. Wooldridge (2021).
\newblock Revisiting regression adjustment in experiments with heterogeneous
  treatment effects.
\newblock {\em Econometric Reviews\/}~{\em 40}, 504--534.

\bibitem[\protect\citeauthoryear{Newey and McFadden}{Newey and
  McFadden}{1994}]{newey1994large}
Newey, W.~K. and D.~McFadden (1994).
\newblock Large sample estimation and hypothesis testing.
\newblock {\em Handbook of econometrics\/}~{\em 4}, 2111--2245.

\bibitem[\protect\citeauthoryear{Robins, Sued, Lei-Gomez, and Rotnitzky}{Robins
  et~al.}{2007}]{robins2007comment}
Robins, J., M.~Sued, Q.~Lei-Gomez, and A.~Rotnitzky (2007).
\newblock Comment: Performance of double-robust estimators when ``inverse
  probability" weights are highly variable.
\newblock {\em Statistical Science\/}~{\em 22}, 544--559.

\bibitem[\protect\citeauthoryear{Robins, Rotnitzky, and Zhao}{Robins
  et~al.}{1994}]{robins1994estimation}
Robins, J.~M., A.~Rotnitzky, and L.~P. Zhao (1994).
\newblock Estimation of regression coefficients when some regressors are not
  always observed.
\newblock {\em Journal of the American statistical Association\/}~{\em
  89\/}(427), 846--866.

\bibitem[\protect\citeauthoryear{Rosenbaum and Rubin}{Rosenbaum and
  Rubin}{1983}]{rosenbaum1983central}
Rosenbaum, P.~R. and D.~B. Rubin (1983).
\newblock {The central role of the propensity score in observational studies
  for causal effects}.
\newblock {\em Biometrika\/}~{\em 70}, 41--55.

\bibitem[\protect\citeauthoryear{Rosenbaum and Rubin}{Rosenbaum and
  Rubin}{1984}]{rosenbaum1984reducing}
Rosenbaum, P.~R. and D.~B. Rubin (1984).
\newblock Reducing bias in observational studies using subclassification on the
  propensity score.
\newblock {\em Journal of the American statistical Association\/}~{\em 79},
  516--524.

\bibitem[\protect\citeauthoryear{Seaman and White}{Seaman and
  White}{2013}]{seaman2013review}
Seaman, S.~R. and I.~R. White (2013).
\newblock Review of inverse probability weighting for dealing with missing
  data.
\newblock {\em Statistical Methods in Medical Research\/}~{\em 22}, 278--295.

\bibitem[\protect\citeauthoryear{Shen, Li, and Li}{Shen
  et~al.}{2014}]{shen2014inverse}
Shen, C., X.~Li, and L.~Li (2014).
\newblock Inverse probability weighting for covariate adjustment in randomized
  studies.
\newblock {\em Statistics in medicine\/}~{\em 33\/}(4), 555--568.

\bibitem[\protect\citeauthoryear{Tsiatis}{Tsiatis}{2006}]{tsiatis2006semiparametric}
Tsiatis, A.~A. (2006).
\newblock {\em Semiparametric Theory and Missing Data}.
\newblock New York: Springer.

\bibitem[\protect\citeauthoryear{Tsiatis, Davidian, Zhang, and Lu}{Tsiatis
  et~al.}{2008}]{tsiatis2008covariate}
Tsiatis, A.~A., M.~Davidian, M.~Zhang, and X.~Lu (2008).
\newblock Covariate adjustment for two-sample treatment comparisons in
  randomized clinical trials: a principled yet flexible approach.
\newblock {\em Statistics in Medicine\/}~{\em 27\/}(23), 4658--4677.

\bibitem[\protect\citeauthoryear{Williamson, Forbes, and White}{Williamson
  et~al.}{2014}]{williamson2014variance}
Williamson, E.~J., A.~Forbes, and I.~R. White (2014).
\newblock Variance reduction in randomised trials by inverse probability
  weighting using the propensity score.
\newblock {\em Statistics in Medicine\/}~{\em 33}, 721--737.

\bibitem[\protect\citeauthoryear{Zeng, Li, Wang, and Li}{Zeng
  et~al.}{2021}]{zeng2021propensity}
Zeng, S., F.~Li, R.~Wang, and F.~Li (2021).
\newblock Propensity score weighting for covariate adjustment in randomized
  clinical trials.
\newblock {\em Statistics in medicine\/}~{\em 40}, 842--858.

\bibitem[\protect\citeauthoryear{Zhao and Ding}{Zhao and
  Ding}{2022}]{zhao2022to}
Zhao, A. and P.~Ding (2022).
\newblock To adjust or not to adjust? estimating the average treatment effect
  in randomized experiments with missing covariates.
\newblock {\em Journal of the American Statistical Association\/}~{\em 0\/}(0),
  1--11.

\end{thebibliography}

\newpage
\setcounter{equation}{0}
\setcounter{section}{0}
\setcounter{figure}{0}
\setcounter{example}{0}
\setcounter{proposition}{0}
\setcounter{corollary}{0}
\setcounter{theorem}{0}
\setcounter{table}{0}
\setcounter{condition}{0}
\setcounter{lemma}{0}
\setcounter{remark}{0}

\renewcommand {\theproposition} {S\arabic{proposition}}
\renewcommand {\theexample} {S\arabic{example}}
\renewcommand {\thefigure} {S\arabic{figure}}
\renewcommand {\thetable} {S\arabic{table}}
\renewcommand {\theequation} {S\arabic{equation}}
\renewcommand {\thelemma} {S\arabic{lemma}}
\renewcommand {\thesection} {S\arabic{section}}
\renewcommand {\thetheorem} {S\arabic{theorem}}
\renewcommand {\thecorollary} {S\arabic{corollary}}
\renewcommand {\thecondition} {S\arabic{condition}}
\renewcommand {\thepage} {S\arabic{page}}

\setcounter{page}{1}

\onehalfspacing

\begin{center}
\bf \Large 
Supplementary material for ``Covariate adjustment in randomized experiments with missing outcomes and covariates"
\end{center}

Section \ref{sec:additional results} gives the additional results that underlie the comments in the main paper. In particular, Theorem \ref{thm:clt_app} gives the complete version of Theorem \ref{thm:clt}. 

Section \ref{sec:lemma} gives the lemmas.

Section \ref{sec:proof_thm1} gives the proof of Theorem \ref{thm:clt_app}. 

Section \ref{sec:proof_other} gives the proofs of the rest of the results in the main paper and Section \ref{sec:additional results}. 

\bigskip 

Denote by  
 \beginy\label{eq:p_logit}
\pfun = \dfrac{1}{1+\exp(-U_i^\T \beta)} 
\endy the logistic outcome missingness model for $p_i = \pr(\riy = 1\mid x_i, Z_i)$ under Assumption \ref{assm:riy}. 
Recall $\beta^*$ as the true value of the parameter with $p_i = p(x_i, Z_i; \beta^*)$.
Let $\hat\beta$ denote the maximum likelihood estimate of $\beta^*$ with $\hat p_i = p(x_i, Z_i; \hat\beta)$.

Recall $\hei $ as the estimate of the propensity score $e_i = \pr(Z_i = 1\mid x_i)$ from the logistic regression $\glmt(Z_i \sim 1+x_i)$.
The underlying logistic model is 
\beginy\label{eq:e_logit}
 e(x_i; \alpha) = \dfrac{1}{1+\exp(-\txi^\T \alpha)},
\endy
where $\txi = (1, x_i^\T)^\T$, and we have $\hei = e(x_i; \hat\alpha)$, where $\hat\alpha$ denotes the maximum likelihood estimate of $\alpha$ under \eqref{eq:e_logit}.
Under Bernoulli randomization, model \eqref{eq:e_logit} is for sure correctly specified  with the true value of $\alpha$ equal to $\alpha^* = ( \log\{e/(1-e)\}, 0_J^\T)^\T$  with $e(x_i; \alpha^*)  =e$.

Let $\rs$ denote convergence in distribution. Let $\va(\cdot)$ and $\cova(\cdot)$ denote the asymptotic variance and covariance.
All asymptotic results in the main paper and this \sms follow from standard theory of m-estimation; see  \cite{newey1994large}. We assume standard regularity conditions for m-estimation throughout. 

Table \ref{tb:estimators_app} extends Table \ref{tb:estimators} in the main paper and summarizes the four estimators we discussed in Sections 2 and 5 of the main paper.  
\begin{table}[h]\caption{\label{tb:estimators_app} 
Let $x_i' = x_i - \bar x$ denote the centered covariates, where $\bar x = \meani x_i$. Let $\hat p_i$ denote the estimated probability of $Y_i$ being observed given $(x_i, Z_i)$, and let $\hat \pi_i = Z_i / \hat e_i  + (1-Z_i) / (1-\hat e_i)$ denote the inverse of the estimated probability of the treatment received for unit $i$ with 
$\hat e_i$ as the estimated propensity score. 
}

 \begin{center}
 \begin{tabular}{l|c|c }\hline
 & Regression specification & Weight over $\{i:\riy = 1\}$ \\\hline
 $\htn $ & $\lmt(Y_i \sim 1+Z_i)$  &  $\hpi^{-1}$ \\\hline
 $\htreg $ & $\lmt(Y_i \sim 1+Z_i + x_i' + Z_i x_i')$  &   $\hpi^{-1}$ \\\hline
 $\hteps $ & $\lmt(Y_i \sim 1+Z_i)$  
 &  
 $\hpi^{-1} \hat \pi_i $\\\hline
 $\htau_\textup{x-ps-reg} $ & $\lmt(Y_i \sim 1+Z_i + x_i' + Z_i x_i')$  &  $\hpi^{-1} \hat \pi_i $\\\hline
 
 \end{tabular}
 \end{center}
 
 \end{table}

\section{Additional results}\label{sec:additional results}

\subsection{Details of Theorem \ref{thm:clt}} 

Recall $U_i$ as the covariate vector for the logistic outcome missingness model under Assumption \ref{assm:riy}. Let $\muo  = E\{Y_i(1)\}$, $\muz  = E\{Y_i(0)\}$, $\mu_x = E(x_i)$, $\go = \covxinv \cov\{x_i, \yio\}$ and $\gz  = \covxinv \cov\{ x_i, \yiz\}$ be shorthand notation for the population moments; let 
\begina
\begin{array}{lll}
b_{11} = 
 E\left[ p_i^{-1}  \cdot Z_i \cdot  \dyon ^2   \right],& 
 \quad  b_{22} = 
 E\left[ p_i^{-1}  \cdot (1-Z_i) \cdot  \dyzn  ^2   \right],\\
b_{13} 
  = E  \left[ (1-p_i) \cdot Z_i \cdot \dyon  \cdot U_i^\T  \right],& \quad
b_{23} 
   = E  \left[ (1-p_i) \cdot (1-Z_i) \cdot \dyzn  \cdot U_i^\T  \right],\\
 b_{33} = E\left\{p_i(1-p_i) \cdot U_i U_i^\T \right\},
\end{array}
\enda
and let $(b_{11}', b_{22}', b_{13}', b_{23}')$  be variants of $(b_{11}, b_{22},  b_{13}, b_{23})$ by replacing  $Y_i(1)$, $\mu_1 = E\{\yio\}$, $Y_i(0)$, $\mu_0=E\{\yiz\}$ with their respective covariate-adjusted analogs 
$Y_i'(1) = Y_i(1)-x_i^\T\go$, $\mu_1' = E\{Y_i'(1)\} = \muo  - \mu_x^\T\go$, $Y_i'(0) = Y_i(0)-x_i^\T\gz$ and $\mu_0' = E\{Y_i'(0)\} = \muz  - \mu_x^\T\gz$.
Theorem \ref{thm:clt_app} below underpins Theorem \ref{thm:clt} and gives the explicit forms of the asymptotic variances. 

\begin{theorem}\label{thm:clt_app}
Assume Assumption \ref{assm:riy} and  standard regularity conditions as $N\to\infty$. We have 
$
\sqrt N(\hat\tau_\dmd - \tau) \rs  \mn\left(0,  v_\dmd \right)$ for $\dmd = \ua,  \wlsx, \hajx$, 
where 
\begina
v_\ua &=& \dfrac{b_{11}}{e^2} + \dfrac{b_{22}}{(1-e)^2} - \left(\dfrac{b_{13}}{e}-\dfrac{b_{23}}{1-e}\right)b_{33}^{-1}\left(\dfrac{b_{13}}{e}-\dfrac{b_{23}}{1-e}\right)^\T,\\
v_\xreg &=&  \dfrac{b'_{11}}{e^2} + \dfrac{b'_{22}}{(1-e)^2} - \left(\dfrac{b'_{13}}{e}-\dfrac{b'_{23}}{1-e}\right)b_{33}^{-1}\left(\dfrac{b'_{13}}{e}-\dfrac{b'_{23}}{1-e}\right)^\T+(\go-\gz)^\T\cov(x_i)(\go-\gz),\\
v_\ipwx &=& v_\ua - e(1-e)\var \{ \textup{proj}(\tyi \mid 1, x_i)  \}.
\enda 
\begine[(i)] 
\item\label{item:complete}  With fully observed data,  that is,  $p_i$ = 1 for all $i$, we have 
\begina
v_\ua = \dfrac{ \var\{Y_i(1)\}}{e } + \dfrac{\var\{Y_i(0)\}}{1-e}, \quad 
v_\xreg = v_\ipwx = v_\ua - e(1-e) \var\{\textup{proj}(\tyi \mid 1, x_i)\}. 
\enda
%
%
\item\label{item:mcar} 
 When the outcome data are missing completely at random with $p_i = p \in (0,1)$ for all $i$, we have 
\begina
v_\ua &=& \dfrac{\var\{Y_i(1)\}}{ep} + \dfrac{\var\{Y_i(0)\}}{(1-e)p},\\
v_\xreg &=&  \dfrac{\var\{Y_i(1)\}  - \gamma_1^\T\cov(x_i) \gamma_1}{ep} + \dfrac{\var\{Y_i(0)\} - \gamma_0^\T\cov(x_i) \gamma_0}{(1-e)p}+(\go-\gz)^\T\cov(x_i)(\go-\gz)
\enda
with 
\begina
v_\xreg - v_\xps =(1-p^{-1})\left\{ \frac{\gamma_1^\T \cov(x_i) \gamma_1}{e}   + \frac{\gamma_0^\T \cov(x_i) \gamma_0}{1-e}  \right\} \leq 0.
\enda

\item\label{item:linear} 
When the outcome model is linear in $x_i$ with $E\{Y_i(z) \mid x_i\} = \muzz  + (x_i - \mu_x)^\T \gamma_z$ for $z = 0,1$,
we have
\begina
v_\ua - v_\xreg 
=  \va(\htn  - \htreg ) 
  \geq  0, \qquad  v_\xps - v_\xreg = E(\Gamma^\T\Gamma) \geq 0,
\enda
where $\Gamma  = B - A\{E(A^\T A)\}^{-1}E(A^\T B)$ is the residual of the linear projection of $B$ on $A$ for \begina
A = \sqrt{p_i(1-p_i)} U_i^\T, \quad B = \sqrt{\dfrac{1-p_i}{p_i}} (x_i-\mu_x)^\T  \left( \frac{Z_i}{e}  \gamma_1- \frac{1-Z_i}{1-e} \gamma_0\right).   
\enda
\ende 

\end{theorem}

We give the proof of Theorem \ref{thm:clt_app} in Section \ref{sec:proof_thm1}. 
To conduct inference, we can estimate the asymptotic variances by the standard empirical sandwich variance estimator for m-estimation \citep{newey1994large} or the bootstrap by resampling units $i=1, \ldots, N$ with replacement.

\subsection{\hajek forms of $\htn$ and $\hteps$}
Proposition \ref{prop:hajek_app} below shows that the unadjusted and propensity-score-weighted estimators are both \hajek estimators of $\tau = E\{\yio\} - E\{\yiz\}$ under different weighting schemes.
The result generalizes the observations by  \cite{imbens2004nonparametric} and \citet[][Theorem 14.2]{ding2023a} to the presence of missing data.
\begin{proposition}\label{prop:hajek_app}

\begine[(i)]
\item $\htn = \hyua(1) - \hyua(0)$, where 
\begina
\hyua(1) = \dfrac{\sumi  \dfrac{\riy}{\hpi} \cdot Z_i \cdot Y_i }{\sumi  \dfrac{\riy}{\hpi}\cdot Z_i}, \quad \hyua(0) = \dfrac{\sumi  \dfrac{\riy}{\hpi}  \cdot(1-Z_i) \cdot  Y_i}{\sumi   \dfrac{\riy}{\hpi}\cdot(1-Z_i)}.
\enda 
\item $
\hteps = \hmuoipwx  - \hmuzipwx$, where
\begina
\hmuoipwx  = \dfrac{\sumi \dfrac{\riy}{\hat p_i}
\cdot 
\dfrac{Z_i}{\hei } \cdot Y_i }{\sumi \dfrac{\riy}{\hat p_i}
\cdot 
\dfrac{Z_i}{\hei }} ,\quad
\hmuzipwx  = \dfrac{\sumi\dfrac{\riy}{\hpi} \cdot \dfrac{1-Z_i}{1-\hei}\cdot  Y_i}{\sumi \dfrac{\riy}{\hpi} \cdot \dfrac{1-Z_i}{1-\hei}}.
\enda 
\ende 
The $\htn$ and $\hteps$ when all data are observed are special cases with $\ryi = 1$ and $\hpi = 1$ for all $i$.
\end{proposition}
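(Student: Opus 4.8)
The plan is to reduce the claim to the elementary fact that in a weighted least squares regression of an outcome on an intercept and a single binary regressor, the estimated coefficient of that regressor equals the difference between the weighted means of the outcome in the two groups cut out by the regressor. First I would recast the two estimators in a convenient form: fitting the unadjusted regression $\lmt(Y_i \sim 1 + Z_i)$ over the subsample $\{i:\riy = 1\}$ with weight $\omega_i$ for unit $i$ is identical to fitting the same regression over all $i = \ot{N}$ with weight $W_i = \riy \omega_i$, because units with $\riy = 0$ receive zero weight and drop out. By the construction in Table \ref{tb:estimators}, $\htn$ is then the $Z_i$-coefficient of the full-sample weighted fit with $W_i = \riy\hpi^{-1}$, and $\hteps$ is the $Z_i$-coefficient of the full-sample weighted fit with $W_i = \riy\hpi^{-1}\hat\pi_i$, where $\hat\pi_i = Z_i/\hei + (1-Z_i)/(1-\hei)$.

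Next I would carry out the generic computation, which is the crux. For nonnegative weights $W_i$, minimizing $\sum_{i=1}^N W_i(Y_i - a - bZ_i)^2$ over $(a,b)$ decouples into two independent one-parameter problems because $Z_i \in \{0,1\}$: writing $a + bZ_i = a(1-Z_i) + (a+b)Z_i$, the objective equals $\sum_{i:Z_i = 0}W_i(Y_i-a)^2 + \sum_{i:Z_i=1}W_i\{Y_i - (a+b)\}^2$, so its minimizer satisfies $\hat a = (\sum_{i:Z_i=0}W_i)^{-1}\sum_{i:Z_i=0}W_iY_i$ and $\hat a + \hat b = (\sum_{i:Z_i=1}W_i)^{-1}\sum_{i:Z_i=1}W_iY_i$. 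Hence the coefficient of $Z_i$ is $\hat b$, namely the $W$-weighted mean of $Y$ over the treated units minus the $W$-weighted mean of $Y$ over the control units.

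Finally I would substitute the two weight choices. With $W_i = \riy\hpi^{-1}$ the two weighted group means are exactly the displayed expressions for $\hyua(1)$ and $\hyua(0)$, which gives (i). With $W_i = \riy\hpi^{-1}\hat\pi_i$ the identities $Z_i\hat\pi_i = Z_i/\hei$ and $(1-Z_i)\hat\pi_i = (1-Z_i)/(1-\hei)$ turn the treated- and control-group weighted means into $\hmuoipwx$ and $\hmuzipwx$, which gives (ii); the fully observed case is the special case $\riy \equiv 1$ and $\hpi \equiv 1$. I do not expect a genuine obstacle, since the argument is a reparameterization identity; the only points that need a line of care are well-posedness of the fits — the weights are nonnegative because $\hpi$ and $\hei$ lie strictly between $0$ and $1$, and under the standard regularity conditions each arm contains units with observed outcomes with probability tending to one, so the normalizers $\sum_{i:Z_i = z}W_i$ are positive — and the bookkeeping in moving between the description in terms of the subsample $\{i:\riy=1\}$ with weights $\omega_i$ and the description in terms of all $N$ units with weights $\riy\omega_i$.
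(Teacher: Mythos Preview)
Your proposal is correct and is precisely the ``standard properties of least squares'' argument the paper invokes but omits; the decomposition of the weighted least-squares objective via $a+bZ_i=a(1-Z_i)+(a+b)Z_i$ and the subsequent identification of the group-wise weighted means is exactly the intended route. There is nothing to add beyond the bookkeeping you already flag.
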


\subsection{Double robustness of $\htreg$ and $\htau_\textup{x-ps-reg}$}
For $z = 0,1$, let  $\hat m_z(x_i)$ denote the estimate of  $E(Y_i \mid x_i, Z_i = z) $ based on the weighted-least-squares fit of the interacted regression $\lmt\{Y_i \sim 1 + Z_i + (x_i - \bar x) + Z_i(x_i - \bar x)\}$, where we weight unit $i$ by $\hpi^{-1}$. 
From Proposition \ref{prop:hajek_app}, we can also view $\htn$ as a  \hajek  estimator, where we weight $Y_i$ by $\dfrac{ \riy}{ \hat p_i } \cdot \dfrac{Z_i  }{e  }$ in estimating $E\{\yio\}$ and by $\dfrac{\riy}{\hat p_i} \cdot \dfrac{1-Z_i}{1-e}$ in estimating $E\{\yiz\}$.
Proposition \ref{prop:wlsx_dr} below expresses $\htreg$ as a classic augmented inverse propensity score weighted estimator based on $\hat m_z(x_i)$'s and the weighting scheme of $\htn$ \citep{robins2007comment, ding2023a}. 
\begin{proposition}\label{prop:wlsx_dr} 
$ \htreg $ is doubly robust with respect to outcome models $\{\hmo(x_i), \hmz(x_i)\}$ and the weighting scheme of $\htn $. That is, 
$
\htreg  = \hydro - \hydrz$, 
where
\begina
\hydro &=& \meani \left[ \hat m_1(x_i)   + \dfrac{ \riy}{ \hat p_i } \cdot \dfrac{Z_i  }{e  } \cdot \left\{ Y_i - \hat m_1(x_i)  \right\}\right], \nonumber \\
\hydrz &=&  \meani \left[ \hat m_0(x_i)   + \dfrac{\riy}{\hat p_i} \cdot \dfrac{1-Z_i}{1-e}\cdot \left\{ Y_i - \hat m_0(x_i)  \right\}\right].
\enda
\end{proposition}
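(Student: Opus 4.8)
The plan is to verify the claimed algebraic identity $\htreg = \hydro - \hydrz$ by expanding the right-hand side and matching it to the weighted-least-squares coefficient of $Z_i$ in the interacted regression fit over $\{i : \riy = 1\}$. The starting point is the well-known fact for fully observed data that the coefficient of $Z_i$ in $\lmt\{Y_i \sim 1 + Z_i + (x_i - \bar x) + Z_i(x_i - \bar x)\}$ equals $\{\bar Y(1) - \bar x^\T \hg_1\} - \{\bar Y(0) - \bar x^\T \hg_0\}$ where $\bar Y(z)$ and $\bar x$ are group-specific means of outcomes and covariates, and $\hg_z$ are the within-group OLS slopes of $Y_i$ on centered $x_i$; this is exactly Lin's decomposition. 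First I would write the analogous statement for the weighted fit over the observed-outcome subsample: the intercept-type quantity for arm $z$ becomes a weighted average $\hmz^{(z)}$, and because the interacted regression fits separate lines in each arm, $\hmo(x_i)$ and $\hmz(x_i)$ are precisely the fitted values of those two weighted regressions.

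The key structural observation is that the weights used in the WLS fit of the interacted regression over $\{i:\riy=1\}$ are $\pi'_{i,\wlsx} = \hpi^{-1}$, so the arm-1 weighted regression of $Y_i$ on $(1, x_i - \bar x)$ uses weights proportional to $\riy Z_i / \hpi$. A standard property of weighted least squares is that the weighted residuals are orthogonal to the regressors, in particular to the constant; hence $\sum_i (\riy Z_i/\hpi)\{Y_i - \hmo(x_i)\} = 0$, and similarly for arm 0. This is the crucial cancellation: it lets me rewrite the augmented term $\meani \frac{\riy}{\hpi}\frac{Z_i}{e}\{Y_i - \hmo(x_i)\}$ in $\hydro$ — which a priori involves the fixed $e$ in the denominator rather than a weighted normalization — and show it collapses. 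Concretely, $\hydro = \meani \hmo(x_i) + \meani \frac{\riy Z_i}{\hpi e}\{Y_i - \hmo(x_i)\}$; the second sum would vanish outright if the weight normalization matched, but with the constant $e$ in place I instead need to combine it with $\meani \hmo(x_i)$ and invoke the orthogonality of the weighted residuals to $x_i$ as well, so that $\meani \hmo(x_i)$ reproduces exactly the weighted intercept-plus-slope combination that appears in Lin's decomposition for the missing-outcome case. Doing the same for $\hydrz$ and subtracting then yields $\htreg$.

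The cleanest route is probably to avoid re-deriving everything from scratch and instead lean on Proposition \ref{prop:hajek_app}: that proposition already expresses $\htn$ as a \hajek difference with weights $\riy Z_i/\hpi$ and $\riy(1-Z_i)/\hpi$, and Proposition \ref{prop:wlsx_dr} is essentially asserting that replacing the outcomes by residuals-from-the-interacted-fit inside that same \hajek construction, then adding back the mean of the fitted values, reproduces $\htreg$. So I would: (1) recall the explicit WLS formula for $\htreg$ over $\{i:\riy=1\}$ and identify $\hmz(x_i)$ as the arm-$z$ fitted values; (2) use the normal equations of the two arm-specific weighted regressions to show $\sum_i (\riy Z_i/\hpi) x_i\{Y_i - \hmo(x_i)\} = 0$ and $\sum_i (\riy Z_i/\hpi)\{Y_i - \hmo(x_i)\} = 0$ (and the arm-0 analogs); (3) substitute into the definitions of $\hydro, \hydrz$, expand $\hmz(x_i) = \hmz^{(z)} + (x_i - \bar x)^\T \hg_z$ with $\hmz^{(z)}$ the weighted arm-$z$ outcome mean, and simplify using the normal equations; (4) collect terms to recover Lin's weighted decomposition of $\htreg$.

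The main obstacle is bookkeeping around the mismatch between the fixed denominator $e$ in the augmentation term and the empirical weighted normalization $\sum_i \riy Z_i/\hpi$ that appears naturally in WLS — these differ by a random factor that tends to $1$ but is not identically $1$ in finite samples, so the identity is genuinely \emph{exact} only because the residual-orthogonality makes the augmentation term literally zero, not merely asymptotically negligible. I need to be careful that the $\hmz(x_i)$ in $\hydro$ are the \emph{same} fitted values (same weights, same subsample) as those implicitly defining $\htreg$, so that the cancellation is algebraic rather than approximate; once that alignment is pinned down, the rest is routine linear-algebra manipulation of normal equations, essentially mirroring the fully-observed argument of \cite{robins2007comment} with the extra weight $\riy/\hpi$ carried through every sum.
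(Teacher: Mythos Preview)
Your approach is correct and matches the paper's: use the normal equations of the arm-specific weighted least-squares fits to kill the augmentation terms exactly, then average the fitted values $\hat m_z(x_i)$ over all $i$ to recover $\hat Y_\xreg(z)$. One simplification you are missing: the constant $1/e$ in the augmentation term is irrelevant --- since $\sum_i (\riy Z_i/\hpi)\{Y_i - \hmo(x_i)\} = 0$ exactly by orthogonality to the intercept, any scalar multiple of this sum is also zero, so there is no ``mismatch'' to reconcile and you never need orthogonality to $x_i$; after the augmentation term vanishes, $\hydro = N^{-1}\sum_i \hmo(x_i) = \hat Y_\xreg(1)$ follows immediately from $\hmo(x_i) = \hat Y_\xreg(1) + (x_i - \bar x)^\T \hgo$ and $N^{-1}\sum_i(x_i - \bar x) = 0$.
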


Proposition \ref{prop:reg-ps} below parallels Proposition \ref{prop:wlsx_dr} and formalizes the double-robustness of $\htau_\textup{x-ps-reg}$ in Table \ref{tb:estimators_app}. 
For $z = 0,1$, let  $\hat m'_z(x_i)$ denote the estimate of  $E(Y_i \mid x_i, Z_i = z) $ based on the weighted-least-squares fit of the interacted regression $\lmt\{Y_i \sim 1 + Z_i + (x_i - \bar x) + Z_i(x_i - \bar x)\}$, where we weight unit $i$ by $\hpi^{-1}\hat \pi_i$. 

\begin{proposition}\label{prop:reg-ps}
$\htau_\textup{x-ps-reg}$ is doubly robust with respect to outcome models $\{\hmo'(x_i), \hmz'(x_i)\}$ and the weighting scheme of $\hteps$. That is, 
$
\htau_\textup{x-ps-reg}  = \hat Y'_\textup{dr}(1) - \hat Y'_\textup{dr}(0)$, 
where
\begina
\hat Y'_\textup{dr}(1) &=& \meani \left[ \hat m_1'(x_i)   + \dfrac{ \riy}{ \hat p_i } \cdot \dfrac{Z_i  }{\hat e_i} \cdot \left\{ Y_i - \hat m_1'(x_i)  \right\}\right], \nonumber \\
\hat Y'_\textup{dr}(0) &=&  \meani \left[ \hat m_0'(x_i)   + \dfrac{\riy}{\hat p_i} \cdot \dfrac{1-Z_i}{1-\hat e_i}\cdot \left\{ Y_i - \hat m_0'(x_i)  \right\}\right].
\enda
\end{proposition}

\subsection{Nondecreasing efficiency of covariate adjustment by \iptw}
Proposition \ref{prop:eps_eff} below formalizes the intuition that adjusting for more covariates by \iptw never harms the asymptotic efficiency. 

\begin{proposition}\label{prop:eps_eff}
Assume missingness in only outcomes with the outcome missingness mechanism given by Assumption \ref{assm:riy}. Let $\htau_{\ipwx}'$ be a variant of $\hteps$, where we construct the estimated propensity score using $\glmt(Z_i \sim 1 +x'_i)$ over $i = \ot{N}$ for $x_i'$ that is distinct from $x_i$.  
Let $X = (x_1, \ldots, x_N)^\T$ and $X' = (x'_1, \ldots,x'_N)^\T$ denote the matrices with $x_i$ and $x'_i$ as the $i$th row vectors, respectively. 
If the column space of $X'$ is a subspace of that of $X$,
then $\hteps$ is asymptotically at least as efficient as $\hteps'$ with $ \va(\hteps) \leq  \va(\hteps')$. 
A special case for the condition is when $x_i'$ is a subset of $x_i$. 
\end{proposition}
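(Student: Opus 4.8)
The plan is to read the asymptotic variances of $\hteps$ and $\htau_\ipwx'$ off Theorem~\ref{thm:clt} (equivalently, Theorem~\ref{thm:clt_app}) and reduce the comparison to a statement about linear projections onto nested subspaces. Applying Theorem~\ref{thm:clt} once with propensity-score covariate vector $x_i$ and once with $x_i'$ --- the outcome missingness model, and hence its regressor vector $U_i$, being held fixed throughout --- gives
\[
\va(\hteps) = v_\ua - e(1-e)\var\{\proj(\tyi \mid 1, x_i)\}, \qquad
\va(\htau_\ipwx') = v_\ua - e(1-e)\var\{\proj(\tyi \mid 1, x_i')\}.
\]
The crucial observation is that the unadjusted variance $v_\ua$ is the \emph{same} in both expressions: $\htn$ does not involve the estimated propensity score at all, and the explicit formula for $v_\ua$ in Theorem~\ref{thm:clt_app} depends only on $\{Y_i(1), Y_i(0), Z_i, p_i\}$, on $e$, and on the fixed outcome-missingness regressors $U_i$, never on whether the propensity score is fit with $x_i$ or $x_i'$. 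Subtracting, $\va(\htau_\ipwx') - \va(\hteps) = e(1-e)\bigl[\var\{\proj(\tyi \mid 1, x_i)\} - \var\{\proj(\tyi \mid 1, x_i')\}\bigr]$, so it suffices to prove $\var\{\proj(\tyi \mid 1, x_i)\} \geq \var\{\proj(\tyi \mid 1, x_i')\}$.

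Next I would translate the column-space hypothesis into a containment of $L^2$ subspaces. The condition that the column space of $X'$ lies in that of $X$ means each coordinate of $x_i'$ is a fixed linear combination of the coordinates of $x_i$, i.e.\ $x_i' = \Theta^\T x_i$ almost surely for some fixed matrix $\Theta$ (a selection matrix in the special case that $x_i'$ is a subvector of $x_i$). Hence the affine span $\mathcal L' = \{a_0 + a^\T x_i'\}$ of $\tyi$'s predictors is contained in $\mathcal L = \{b_0 + b^\T x_i\}$, because $a_0 + a^\T x_i' = a_0 + (\Theta a)^\T x_i$. Writing $P$ and $P'$ for the $L^2$-orthogonal projections onto $\mathcal L$ and $\mathcal L'$, nestedness gives the tower identity $P' \tyi = P'(P\tyi)$, so the Pythagorean decomposition yields $\var(P\tyi) = \var(P'\tyi) + \var(P\tyi - P'\tyi) \geq \var(P'\tyi)$, i.e.\ $\var\{\proj(\tyi\mid 1,x_i)\} \geq \var\{\proj(\tyi\mid 1,x_i')\}$. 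Combined with the previous paragraph, $\va(\hteps) \leq \va(\htau_\ipwx')$; the stated special case is immediate since selecting a subvector of $x_i$ trivially satisfies the column-space condition.

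The main obstacle I anticipate is the bookkeeping in the first step: making precise that the $v_\ua$ in the variance decomposition of $\hteps$ is literally the same object as the one for $\htau_\ipwx'$, and that the entire effect of the propensity-score covariates on the asymptotic variance of $\hteps$ is carried by the single term $-e(1-e)\var\{\proj(\tyi\mid 1,x_i)\}$. This requires tracking through the stacked m-estimation underlying Theorem~\ref{thm:clt_app} --- the estimating equations for $(\hat\beta,\hat\alpha,\htau)$ --- and verifying that the $\hat\alpha$-block (the propensity fit) contributes to the influence function of $\hteps$ only through that projection term, while the $\hat\beta$-block and the raw difference-in-means contributions, which together constitute $v_\ua$, are untouched by the choice of propensity covariates. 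A secondary, more routine point is deriving the a.s.\ relation $x_i' = \Theta^\T x_i$ from the sample column-space condition under the i.i.d.\ sampling assumed in the setup; this, together with any rank/invertibility caveats needed for the projections in Theorem~\ref{thm:clt_app} to be well defined for both $x_i$ and $x_i'$, is covered by the standing regularity conditions.
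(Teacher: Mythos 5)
Your proposal is correct and follows essentially the same route as the paper: apply the m-estimation variance formula $\va = v_\ua - e(1-e)\var\{\proj(\tyi\mid 1,\cdot)\}$ with each choice of propensity-score covariates, observe that $v_\ua$ is unchanged, and conclude from the monotonicity of projection variances over nested linear spans (which the paper cites as ``properties of linear projection'' and you spell out via the Pythagorean decomposition). No gaps.
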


\section{Lemmas}\label{sec:lemma}
Recall $p(x_i, Z_i; \beta)$ as the logistic outcome missingness model under Assumption \ref{assm:riy}, with the true value of the parameter denoted by $\beta^*$. 
Recall $\hat\beta$ as the maximum likelihood estimate of $\beta^*$.

\begin{lemma}\label{lem:beta}
Under Assumption \ref{assm:riy}, 
\begine[(i)]
\item \beginy\label{eq:pdb}
\pd{\beta} p_i = \left.\pd{ \beta } \pfun \right|_{\beta = \beta^*} = p_i(1-p_i)U_i, \qquad 
\pd{\beta} p_i^{-1} = \left.\pd{ \beta } \pfun^{-1} \right|_{\beta = \beta^*} = - \dfrac{1-p_i}{p_i} U_i,
\endy
where we use $\pd{\beta} p_i$ and $\pd{\beta} p_i^{-1}$ as  shorthand notation for the derivatives of $\pfun$ and $\pfun^{-1}$ evaluated at the true value $\beta^*$;  
\item 
the maximum likelihood estimate $\hat\beta$ solves 
$
0  =  \sum_{i=1}^N \psi_\beta\left(x_i, Z_i, \ryi; \beta\right)$, 
where 
\begina
\psi_\beta\left(x_i, Z_i, \ryi; \beta\right) = U_i \left\{\riy - \pfun\right\} 
\enda
with 
\begina
\begin{array}{clllll}
\psi_\beta^* &=& \psi_\beta\left(x_i, Z_i, \ryi; \beta^*\right) &=& U_i \big(\ryi - p_i\big),\\
  \left.  \pd{\beta^\T} \psi_\beta\right|_{\beta = \beta^*} &=&  - p_i(1-p_i) \cdot U_i U_i^\T ,\\
   E(\psi_\beta^* \psi_\beta^{*\T}) &=& - E \left( \left.  \pd{\beta^\T} \psi_\beta\right|_{\beta = \beta^*} \right) &=&E\left\{p_i(1-p_i) \cdot U_i U_i^\T \right\}.
   \end{array}
  \enda  
 \ende
\end{lemma}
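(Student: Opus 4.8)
The plan is to treat both parts as routine consequences of the calculus of the logistic link together with the standard theory of logistic-regression maximum likelihood. Throughout, write $\sigma(t) = \{1+\exp(-t)\}^{-1}$ for the logistic function, so that $\pfun = \sigma(U_i^\T\beta)$ and $p_i = \sigma(U_i^\T\beta^*)$, and recall the elementary identity $\sigma'(t) = \sigma(t)\{1-\sigma(t)\}$.

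For part (i), I would apply the chain rule in $\beta$ to get $\pd{\beta}\pfun = \sigma(U_i^\T\beta)\{1-\sigma(U_i^\T\beta)\}\,U_i$, and then evaluate at $\beta = \beta^*$ to obtain $p_i(1-p_i)U_i$. The derivative of the reciprocal follows from $\pd{\beta}\pfun^{-1} = -\pfun^{-2}\,\pd{\beta}\pfun$, which at $\beta^*$ equals $-p_i^{-2}\cdot p_i(1-p_i)U_i = -\{(1-p_i)/p_i\}U_i$, matching the claimed expression.

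For part (ii), I would first write down the logistic log-likelihood of $\riy$ on $U_i$ over $i = \ot{N}$, namely $\ell(\beta) = \sumi\big[\riy\log\pfun + (1-\riy)\log\{1-\pfun\}\big]$, differentiate, and simplify using part (i): the factor $\pfun\{1-\pfun\}$ coming from $\pd{\beta}\pfun$ cancels the factor $[\pfun\{1-\pfun\}]^{-1}$ arising from differentiating the log terms, leaving the score $\pd{\beta}\ell(\beta) = \sumi U_i\{\riy - \pfun\}$; hence $\hat\beta$ solves $0 = \sumi\psi_\beta(x_i,Z_i,\riy;\beta)$ with $\psi_\beta = U_i\{\riy - \pfun\}$. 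Substituting $\beta = \beta^*$ gives $\psi_\beta^* = U_i(\riy - p_i)$. Differentiating $\psi_\beta$ once more in $\beta^\T$ and reusing part (i) yields $\pd{\beta^\T}\psi_\beta = -U_i\,\pd{\beta^\T}\pfun = -\pfun\{1-\pfun\}\,U_iU_i^\T$, which at $\beta^*$ is $-p_i(1-p_i)U_iU_i^\T$. Finally, for the information-matrix identity I would condition on $(x_i,Z_i)$: Assumption \ref{assm:riy} gives $\riy \mid (x_i,Z_i)\sim\textup{Bernoulli}(p_i)$, so $E\{(\riy-p_i)^2\mid x_i,Z_i\} = p_i(1-p_i)$, whence $E(\psi_\beta^*\psi_\beta^{*\T}) = E\{U_iU_i^\T\,E((\riy-p_i)^2\mid x_i,Z_i)\} = E\{p_i(1-p_i)U_iU_i^\T\}$, which coincides with $-E(\pd{\beta^\T}\psi_\beta|_{\beta^*})$.

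There is no real obstacle here; the only point requiring a little care is that the second-moment/negative-Hessian identity uses correct specification of the missingness model together with the conditional Bernoulli structure of $\riy$ from Assumption \ref{assm:riy}, rather than any generic argument — this is exactly what pins down $E\{(\riy-p_i)^2\mid x_i,Z_i\} = p_i(1-p_i)$. Everything else is bookkeeping with the chain rule.
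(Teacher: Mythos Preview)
Your proposal is correct and matches the paper's approach: the paper simply states that the lemma ``follows from standard results for logistic regression'' and omits all details, and what you have written is exactly the standard derivation the paper is appealing to. Your explicit remark that the information-matrix identity relies on correct specification of the missingness model (so that $\riy\mid(x_i,Z_i)$ is Bernoulli$(p_i)$) is accurate and worth keeping.
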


The proof of Lemma \ref{lem:beta} follows from standard results for logistic regression.  We omit the details.

\bigskip 

\begin{lemma}\citep{newey1994large, lavergne2008cauchy} \label{lem:cs} Let $A\in \mathbb R^{n\times p}$ and $B \in \mathbb R^{n\times q}$ be random matrices such that $E\|A\|^2 <\infty$, $E\|B\|^2<\infty$, and $E(A^\T A)$ is non-singular. Then 
\begina
E(B^\T B) - E(B^\T A) \{E(A^\T A)\}^{-1} E(A^\T B) = E( \Gamma^\T \Gamma ) \geq 0,
\enda
where $\Gamma  = B - A\{E(A^\T A)\}^{-1}E(A^\T B)$ is the residual of the linear projection of $B$ on $A$.  
\end{lemma}

Lemma \ref{lem:cs} can be verified by direct algebra.  We omit the details.

\bigskip

For $z = 0, 1$, let $\hat \gamma_z$ denote the coefficient vector of $x_i$ from the weighted-least-squares fit of $\lmt(Y_i \sim 1+x_i)$ over units with $Z_i = z$ and $\riy =1$, where we weight unit $i$ by $\hpi^{-1}$.
Then $\hat \gamma_1$ and $\hat \gamma_0$ coincide with 
the coefficient vectors of $x_i - \bar x$ from the weighted-least-squares fits of 
\beginy\label{eq:wls_x_1_2}
\lmt\{Y_i \sim 1 + (x_i - \bar x)\} \quad\text{over $\{i: \riy = 1, Z_i = 1\}$ with weight $\hpii$ for unit $i$;}\\
 \label{eq:wls_x_0_2}
\lmt\{Y_i \sim 1 + (x_i - \bar x)\}\quad\text{over $\{i: \riy =1,  Z_i = 0\}$ with weight $\hpii$ for unit $i$,}
\endy
respectively.   
Let $\hmuox$ and $\hmuzx$ denote the intercepts from \eqref{eq:wls_x_1_2} and  \eqref{eq:wls_x_0_2}, respectively. 

\begin{lemma}\label{lem:xreg_num}
$\htreg  = \hmuox - \hmuzx$ with 
\begina
 \hmuox &=&  \dfrac{  \sumi \rp \cdot Z_i \cdot\left\{Y_i(1)- (x_i - \bar x)^\T \hgo  \right\} }{\sumi \rp\cdot Z_i },\\
\hmuzx &=& \dfrac{\sumi  \rp \cdot(1-Z_i) \cdot \left\{Y_i(0) - (x_i - \bar x)^\T \hgz  \right\} }{\sumi  \rp\cdot (1-Z_i)}.
\enda

\end{lemma}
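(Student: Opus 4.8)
The plan is to exploit the fact that the interacted regression is saturated in $Z_i$, so that its \wls fit factors into two separate group-specific \wls fits, and then to read off the two intercepts from the normal equations associated with the constant regressor. Throughout I weight unit $i$ by $\hpii$ over $\{i:\riy=1\}$, equivalently by $\rp$ over all $i$.

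First I would recall, as noted just before \eqref{eq:wls_x_1_2}--\eqref{eq:wls_x_0_2}, that $\hgo$ and $\hgz$ are the slope vectors of $x_i-\bx$ in the \wls fits of $\lmt\{Y_i\sim 1+(x_i-\bx)\}$ over $\{i:\riy=1,Z_i=1\}$ and over $\{i:\riy=1,Z_i=0\}$, respectively, with intercepts $\hmuox$ and $\hmuzx$; replacing $x_i$ by $x_i-\bx$ only shifts the intercept and leaves these slopes unchanged, so $\hgo,\hgz$ are also the slopes of $x_i$ in $\lmt(Y_i\sim 1+x_i)$ on the same index sets.

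Next I would reparametrize the regressors of the interacted regression. Writing the fit $a+bZ_i+c^\T(x_i-\bx)+d^\T Z_i(x_i-\bx)$ in the equivalent form with regressors $Z_i$, $Z_i(x_i-\bx)$, $(1-Z_i)$, $(1-Z_i)(x_i-\bx)$ — the two regressor sets span the same column space, with $a=\alpha_0$, $a+b=\alpha_1$, $c=\beta_0$, $c+d=\beta_1$ for the new coefficients $(\alpha_1,\beta_1,\alpha_0,\beta_0)$ — the weighted least-squares objective becomes
\[
\sum_{i:\riy=1,Z_i=1}\hpii\{Y_i-\alpha_1-(x_i-\bx)^\T\beta_1\}^2+\sum_{i:\riy=1,Z_i=0}\hpii\{Y_i-\alpha_0-(x_i-\bx)^\T\beta_0\}^2,
\]
which separates additively, the first summand depending only on $(\alpha_1,\beta_1)$ and the second only on $(\alpha_0,\beta_0)$. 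Hence the minimizers are exactly $(\alpha_1,\beta_1)=(\hmuox,\hgo)$ and $(\alpha_0,\beta_0)=(\hmuzx,\hgz)$, and translating back gives $\htreg=b=\alpha_1-\alpha_0=\hmuox-\hmuzx$.

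Finally I would extract the closed form of each intercept. Because the group-specific regression contains an intercept, its weighted residuals sum to zero, so the normal equation for $\alpha_1$ reads $\sum_{i:\riy=1,Z_i=1}\hpii\{Y_i-\hmuox-(x_i-\bx)^\T\hgo\}=0$, i.e.
\[
\hmuox=\frac{\sum_{i:\riy=1,Z_i=1}\hpii\{Y_i-(x_i-\bx)^\T\hgo\}}{\sum_{i:\riy=1,Z_i=1}\hpii}.
\]
Replacing $\sum_{i:\riy=1,Z_i=1}\hpii(\cdot)$ by $\sum_{i=1}^N\rp Z_i(\cdot)$ — valid since $\rp Z_i=\riy Z_i/\hpi$ vanishes unless $\riy=Z_i=1$ — and using $Y_i=\yio$ on $\{Z_i=1\}$ yields the stated expression for $\hmuox$; the symmetric argument with $Z_i=0$ and $Y_i=\yiz$ yields $\hmuzx$. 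The one step that merits care is the factorization of the interacted \wls objective into the two group-wise fits; once that is in place, the reparametrization identity and the intercept normal equations are routine bookkeeping.
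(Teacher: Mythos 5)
Your proof is correct and follows essentially the same route as the paper: the identity $\htreg = \hmuox - \hmuzx$ via the separation of the interacted \wls fit into the two group-specific fits (which the paper simply attributes to ``properties of least squares,'' and which you usefully spell out via the reparametrization argument), followed by the first-order condition for the intercept to obtain the closed forms. No gaps.
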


Recall the definition of $\hmuoipw $ and $\hmuzipw$ from Proposition \ref{prop:hajek_app}. 
From Lemma \ref{lem:xreg_num}, $\hmuox$ and $\hmuzx$  are analogs of $\hmuoipw $ and $\hmuzipw  $ defined on the adjusted potential outcomes $Y_i(1) - (x_i - \bar x)^\T \hgo $ and $Y_i(0) - (x_i - \bar x)^\T \hgz $, respectively.

\begin{proof}[Proof of Lemma \ref{lem:xreg_num}] 
That $\hat\tau_\xreg = \hmuox - \hmuzx$ follows from properties of least squares. 
To verify the explicit form of $\hmuox$ and $\hmuzx$, observe that the residual from \eqref{eq:wls_x_1_2} equals 
$
Y_i(1) - (x_i - \bar x)^\T \hgo  - \hywlsxo $ for units with $\riy = 1$ and $Z_i = 1$.
The first-order condition ensures
\begina
\sumi \rp \cdot  Z_i \cdot\left\{Y_i(1) - (x_i - \bar x)^\T \hgo  - \hywlsxo \right\} = 0
\enda
now that \eqref{eq:wls_x_1_2} includes an intercept. This verifies the expression of $\hmuox$. The expression of $\hmuzx$  follows by symmetry.

\end{proof}

\begin{lemma}\label{lem:mle} 
Consider a population of $N$ units, indexed by $i = \ot{N}$, each with a binary response $Z_i \in \{1,0\}$. 
Let $(x_i)_{i=1}^N$  and $(x_i')_{i=1}^N$, where $x_i, x_i'\in \mathbb R^J$, denote two sets of covariate vectors that satisfy $x_i' = \Gamma x_i$ for some nonsingular matrix $\Gamma \in \mathbb R^{J\times J}$. 
Let $\hat\beta$ and $\hat\beta'$ denote the maximum likelihood estimates of the coefficient vectors of $x_i$ and $x_i'$ from the logistic regressions 
$\glmt(Z_i \sim 1 + x_i)$ and $\glmt(Z_i \sim 1+x_i')$, 
respectively, and let $\hat p_i = \pr(Z_i = 1\mid x_i, \hat\beta) $ and $\hat p_i' =\pr(Z_i = 1\mid x_i, \hat\beta')  $ denote the corresponding estimated probabilities of $Z_i$ being one. 
Then   
 $\hat\beta = \Gamma^\T\hat\beta'$ and $\hat p_i = \hat p_i'$. 
\end{lemma}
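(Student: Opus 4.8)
The plan is to exploit the fact that a logistic regression depends on its parameters only through the linear predictors $\tilde x_i^\T\theta$, and that passing from the design $(1,x_i)$ to $(1,x_i')$ with $x_i'=\Gamma x_i$, $\Gamma$ nonsingular, is just an invertible linear change of coordinates on the augmented regressor. Write $\tilde x_i=(1,x_i^\T)^\T$, $\tilde x_i'=(1,x_i'^\T)^\T$, and stack intercept and slope as $\theta=(\alpha,\beta^\T)^\T$ and $\theta'=(\alpha',\beta'^\T)^\T$. The key identity is $\tilde x_i'=\tg\,\tilde x_i$ with the block-diagonal matrix $\tg=\diag(1,\Gamma)$, which is nonsingular since $\Gamma$ is. Everything below is a deterministic algebraic fact about a single realization of the data; no asymptotics enter.

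First I would recall the characterization of the \glmt{} fit. Under the standing regularity conditions (the augmented design matrix $(\tilde x_1,\ldots,\tilde x_N)^\T$ has full column rank and the data are not separated, so a finite maximizer exists), the estimate $\hat\theta$ from $\glmt(Z_i\sim 1+x_i)$ is the unique solution of the score equations $\sum_{i=1}^N \tilde x_i\{Z_i-\logit^{-1}(\tilde x_i^\T\theta)\}=0$. Uniqueness follows from strict concavity of the logistic log-likelihood as a function of the linear-predictor vector together with the full column rank of the design.

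Next I would check that $\theta':=\tg^{-\T}\hat\theta$ solves the score equations for $\glmt(Z_i\sim 1+x_i')$. Since $\tilde x_i'^\T\theta'=\tilde x_i^\T\tg^\T\tg^{-\T}\hat\theta=\tilde x_i^\T\hat\theta$, the fitted probability is unchanged: $\hat p_i':=\logit^{-1}(\tilde x_i'^\T\theta')=\logit^{-1}(\tilde x_i^\T\hat\theta)=\hat p_i$. Substituting into the model-$x_i'$ score function gives $\sum_i \tilde x_i'(Z_i-\hat p_i')=\tg\sum_i \tilde x_i(Z_i-\hat p_i)=\tg\cdot 0=0$, so $\theta'$ is a stationary point of the (concave) model-$x_i'$ log-likelihood and hence equals its unique maximizer $\hat\theta'$. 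Reading off the blocks of $\tg^{-\T}=\diag(1,\Gamma^{-\T})$ yields $\hat\alpha'=\hat\alpha$ and $\hat\beta'=\Gamma^{-\T}\hat\beta$, i.e.\ $\hat\beta=\Gamma^\T\hat\beta'$, together with $\hat p_i=\hat p_i'$ already established.

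The only subtlety — the main obstacle — is uniqueness of the MLE: without it one could only conclude that the two score equations have matching solution sets, not that the specific estimators returned by the two fits correspond. This is exactly what the strict concavity of the logistic log-likelihood on the column span of the augmented design, plus the full-rank / no-separation regularity conditions, delivers; once that is in hand the change-of-variables bookkeeping above is routine. (This lemma will be invoked with $\Gamma$ encoding a reparametrization of $x_i^\mim$, which is the mechanism behind the imputation-invariance claim in Proposition~\ref{prop:invariance}.)
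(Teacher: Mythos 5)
Your proof is correct. The paper omits the argument entirely, stating only that it ``follows from standard properties of logistic regression,'' and what you have written out --- the change of coordinates $\tilde x_i' = \diag(1,\Gamma)\,\tilde x_i$ on the augmented regressor, invariance of the linear predictor, the transformed score equations, and uniqueness of the MLE from strict concavity plus full column rank --- is exactly that standard argument, with the one genuine subtlety (uniqueness, without which the two fits need not correspond) correctly identified and handled.
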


The proof of Lemma \ref{lem:mle} follows from standard properties of logistic regression. We omit the details.

\section{Proofs of Theorems \ref{thm:clt} and \ref{thm:clt_app}}
\label{sec:proof_thm1}
We verify in this section Theorem \ref{thm:clt_app} as the complete version of Theorem \ref{thm:clt}. To add clarity, we use the superscript $*$ to indicate true values of the parameters throughout this section.
In particular, let
\begina
&&\muos  = E\big\{\yio\big\}, \qquad \muzs  = E\big\{\yiz\big\}, \qquad \muxs = E(x_i), \\
&& \gos  = \covxinv \cov\{x_i, \yio\}, \qquad \gzs  = \covxinv \cov\{ x_i, \yiz\}. 
\enda
Recall $\beta^*$ as the true value of the parameter in \eqref{eq:p_logit} under Assumption \ref{assm:riy} with $p(x_i, Z_i; \beta^*) = p_i$.
Recall $\alpha^*$ as the true value  of the parameter in \eqref{eq:e_logit} under Bernoulli randomization with $ e(x_i; \alpha^*) = e$. 
Recall $\hat\beta$ and $\hat\alpha$ as the maximum likelihood estimates of $\beta^*$ and $\alpha^*$, respectively, with $\hpi = p(x_i, Z_i; \hat\beta)$ and $\hei = e(x_i; \hat\alpha)$. 

Sections \ref{sec:unadj}--\ref{sec:reg} give the proofs of the results for $\htn$, $\hteps$, and $\htreg$, respectively. 
Section \ref{sec:thm1_special cases} gives the proof of the three special cases. 

\subsection{Proof of the result for $\htn $}\label{sec:unadj}
\begin{proof}
Recall from Proposition \ref{prop:hajek_app} that $\htn  = \hmuo  - \hmuz  $ with \begina
\hyua(1) = \dfrac{\sumi Z_i \cdot \dfrac{\riy}{\hpi} \cdot Y_i }{\sumi Z_i \cdot \dfrac{\riy}{\hpi}}, \quad \hyua(0) = \dfrac{\sumi  (1-Z_i)  \cdot\dfrac{\riy}{\hpi} \cdot  Y_i}{\sumi (1-Z_i)  \cdot\dfrac{\riy}{\hpi}}.
\enda 
Then $\mu_1 = \hmuo $ and $\mu_0 = \hmuz  $ solve  
\begina
0 =  \sumi Z_i \cdot \rp \cdot \left\{Y_i(1) - \mu_1\right\} \quad \text{and}\quad  
0 =  \sumi (1-Z_i) \cdot \rp \cdot \left\{Y_i(0) - \mu_0 \right\},
\enda
respectively, where $\hat p_i = p(x_i, Z_i; \hat\beta)$.
This, together with Lemma \ref{lem:beta}, ensures that $(\mu_1, \mu_0, \beta) = (\hmuo , \hmuz  , \hat\beta)$ jointly solves 
\begina
0 = \meani \psi \left(\yio, \yiz, x_i, Z_i, \ryi; \mu_1, \mu_0, \beta \right) = \meani \beginp  \psi_1\left(\yio, x_i, Z_i, \ryi; \mu_1,  \beta \right) \\ \psi_0 \left(\yiz, x_i, Z_i, \ryi; \mu_0,   \beta\right)  \\ \psi_\beta(x_i, Z_i, \ryi ; \beta) \endp,
\enda
where \renewcommand{\arraystretch}{1.5}
\beginy\label{eq:ee_wls}
\psi = \beginp  \psi_1 \\ \psi_0 \\ \psi_\beta\endp
\quad \text{with} \quad 
\begin{array}{lll}
 \psi_1& = &
 Z_i \cdot \rpfun \cdot\big\{Y_i(1) - \mu_1 \big\},\\
  \psi_0 &=&
 (1-Z_i) \cdot \rpfun  \cdot\big\{Y_i(0) - \mu_0 \big\},\\
\psi_\beta  &=& U_i \big\{\riy - p(x_i, Z_i; \beta) \big\}.
\end{array}
\endy
Direct algebra ensures that $(\mu_1, \mu_0, \beta) = (\muos, \muzs, \beta^*)$ solves $E\{\psi(\mu_1, \mu_0, \beta)\} = 0$.
The theory of m-estimation ensures  
\beginy\label{eq:clt_ipw}
\sqrt N\left\{\beginp 
\hmuo \\
\hmuz  \\
\hat\beta
\endp - \beginp \muos \\\muzs \\\beta^* \endp  \right\}  \rs  \mn\left\{ 0, \aba \right\},
\endy
where $A^*_\ua  $ and $B_\ua^* $ are the values of $ - E\left(\pd{(\mu_1, \mu_0, \beta)} \psi \right)$ and $ E(\psi \psi^\T)$ evaluated at $(\mu_1, \mu_0, \beta) =(\muos , \ \muzs , \ \beta^*)$, respectively. We compute below $A^*_\ua  $ and $B_\ua^* $, respectively.

\bigskip
\noindent\underline{\textbf{Compute  $B_\ua^*$.}}

\smallskip

Let $(\psi^*, \psi_1^*, \psi_0^*, \psi_\beta^*)$ denote the value of $(\psi, \psi_1, \psi_0, \psi_\beta)$ evaluated at $(\mu_1, \mu_0, \beta) = (\muos , \muzs , \beta^*)$. 
From \eqref{eq:ee_wls}, we have 
\beginy\label{eq:psistar_wls}
\psi^* = \beginp  \psi^*_1 \\ \psi^*_0 \\ \psi^*_\beta\endp
\quad \text{with} \quad 
\begin{array}{lll}
 \psi_1^*& = &
 Z_i \cdot \rpn \cdot \dyo ,\\
  \psi_0^* &=&
 (1-Z_i) \cdot \rpn \cdot \dyz,\\
\psi_\beta^*  &=& U_i (\ryi - p_i).
\end{array}
\endy
The $B_\ua^*$ matrix equals  
\beginy\label{eq:B_wls}
B_\ua^* = \left.E\left(\psi \psi^\T\right)\right|_{(\muos ,\,  \muzs ,\,  \beta^*)} = E\left(\psi^* \psi^{*\T}\right) = \left(\begin{array}{cc|c}
b_{11} & 0 & b_{13} \\
0 & b_{22} & b_{23} \\\hline
b_{13}^\T & b_{23}^\T & b_{33} 
\end{array}\right)
=
\left(\begin{array}{c|c}
B_{11} & B_{12} \\\hline
B_{12}^\T & B_{22}
\end{array}\right) ,
\endy
where 
\begina
B_{11} = \beginp b_{11}& 0 \\ 0 & b_{22} \endp, \quad B_{12} = \beginp b_{13}\\b_{23}\endp, \quad B_{22} = b_{33}
\enda
with 
\beginy\label{eq:bs}
b_{11} =  E(\psi_1^* \psi_1^{*\T}) &=&   E\left[   Z_i \cdot \dfrac{\riy}{ p_i^2} \cdot\big\{Y_i(1)- \muos \big\}^2 \right] = 
 E\left(   Z_i \cdot  \dfrac{E\left(\riy \mid x_i, Z_i\right)}{ p_i^2}   \cdot E\left[\left.\big\{Y_i (1) - \muos  \big\}^2 \ \right| \ x_i, Z_i\right]  \right) \nonumber\\
 &=& 
 E\left[ Z_i \cdot p_i^{-1}  \cdot  \dyo ^2   \right],\nonumber\\
b_{13} =  E(\psi_1^* \psi_\beta^{*\T})
&=&E \left[ Z_i \cdot \rpn \cdot \big\{Y_i(1)- \muos \big\}  \cdot  (\riy - p_i)U_i^\T \right] 
=  E  \left[Z_i \cdot \dfrac{\riy}{ p_i} \cdot (1-p_i) \cdot \big\{Y_i(1)- \muos \big\}  \cdot U_i^\T  \right] \nonumber\\
  &=& E  \left[Z_i \cdot (1-p_i) \cdot  \big\{Y_i(1)- \muos \big\}  \cdot U_i^\T  \right],\nonumber\\
   b_{22}=  E(\psi_0^* \psi_0^{*\T}) &=& 
 E\left[ (1-Z_i) \cdot p_i^{-1}  \cdot  \dyz  ^2   \right]\quad \text{by symmetry},\nonumber\\
b_{23} =  E(\psi_0^* \psi_\beta^{*\T})
   &=& E  \left[  (1-Z_i) \cdot(1-p_i) \cdot \big\{Y_i(0)- \muzs \big\}  \cdot U_i^\T  \right] \quad \text{by symmetry},\nonumber\\
   b_{33} =  E(\psi_\beta^* \psi_\beta^{*\T}) &=&   E\left\{p_i(1-p_i) \cdot U_i U_i^\T \right\}\quad \text{by Lemma \ref{lem:beta}}.
\endy

\bigskip
\noindent\underline{\textbf{Compute  $A_\ua^*$.}} 

\smallskip

With a slight abuse of notation, let 
$
\pd{\mu_1} \psi_1^* = \left. \pd{\mu_1}\psi_1\right|_{(\muos ,\, \muzs ,\, \beta^*)}
$
denote the value of $\pd{\mu_1} \psi_1 $ evaluated at $(\mu_1, \mu_0, \beta)=(\muos ,\, \muzs ,\, \beta^*)$. Similarly define other partial derivatives. 
From \eqref{eq:ee_wls}, we have 
\begina
\left.\pd{(\mu_1, \mu_0, \beta)} \psi \right|_{(\muos , \,  \muzs ,\, \beta^*)}
&=& 
\beginp
\pd{\mu_1}\psi_1^*  & 0  &  \pd{\beta^\T} \psi_1^* \\
0 & \pd{\mu_0}\psi_0^*  &   \pd{\beta^\T} \psi_0^* \\
0 & 0 & \pd{\beta^\T} \psi_\beta^*
\endp,
\enda
where
\begina
\pd{\mu_1}\psi_1^* &=& - Z_i \cdot \rpn ,\\
\pd{\beta^\T} \psi_1^*
 &=& Z_i \cdot \riy \cdot  \dyo   \cdot \pd{\beta^\T} p_i^{-1}\\
& =& -   Z_i \cdot\riy  \cdot  \dyo  \cdot \dfrac{1-p_i}{ p_i}  U_i^\T 
=  - Z_i \cdot \rpn  (1-p_i)\cdot \dyo  \cdot U_i^\T.
\enda
Accordingly,  we have 
\begina
A_\ua^* 
= -\left.E\left( \pd{(\mu_1, \mu_0, \beta)} \psi\right) \right|_{(\muos , \,  \muzs ,\, \beta^*)} 
= -E\left( \left.\pd{(\mu_1, \mu_0, \beta)} \psi \right|_{(\muos , \,  \muzs ,\, \beta^*)}\right)= \beginp
a_{11} & 0 & a_{13} \\
0 & a_{22} & a_{23} \\
0 & 0 & a_{33}
\endp 
\enda
where
\beginy\label{eq:as}
a_{11} &=& - E \left(\pd{\mu_1}\psi_1^*\right) = E\left(  Z_i\cdot \rpn
\right) = E(Z_i) = e, \nonumber\\
a_{22} &=& - E \left(\pd{\mu_0}\psi_0^*\right) =  1-e\bsm,\nonumber \\\nonumber\\
a_{13} &=& - E\left(\pd{\beta^\T} \psi_1^* \right)\nonumber\\
&=& E \left[ Z_i \cdot\rpn(1-p_i)\cdot  \dyo  \cdot  U_i^\T \right]  = E \left[ Z_i \cdot(1-p_i)\cdot  \dyo  \cdot  U_i^\T \right] = b_{13},\nonumber\\
a_{23} &=& - E\left(\pd{\beta^\T} \psi_0^* \right)  =E \left[  (1-Z_i) \cdot (1-p_i)  \cdot\dyz   \cdot  U_i^\T\right] = b_{23}\bsm,\nonumber\\\nonumber\\
a_{33} &=&  - E \left(    \pd{\beta^\T} \psi_\beta^*\right) = E\left\{ p_i(1-p_i) \cdot U_i U_i^\T \right\} = b_{33}  \blb.
\endy
This ensures 
\beginy\label{eq:A_wls}
A_\ua^* =\left(\begin{array}{cc|c}
a_{11} & 0 & b_{13} \\
0 & a_{22} & b_{23} \\\hline 
0 & 0 & b_{33}
\end{array}\right) = \left(\begin{array}{c|c}
A_{11} & B_{12} \\\hline 
0 & B_{22}
\end{array}\right) \quad \text{with} \quad
A_\ua^{*-1} 
=  \beginp
A_{11}^{-1} & -A_{11}^{-1}B_{12} B_{22}^{-1} \\
0 &B_{22}^{-1}
\endp,
\endy
where $A_{11} = \beginp a_{11} & 0 \\ 0 & a_{22} \endp = \beginp e & 0 \\ 0 & 1-e \endp$ and $(B_{12}, B_{22})$ are defined in \eqref{eq:B_wls}.

\bigskip 
\bigskip

\noindent\underline{\textbf{Compute $v_\ua$.}}

\smallskip

Equations \eqref{eq:clt_ipw},  \eqref{eq:B_wls}, and \eqref{eq:A_wls} together ensure
\begina
V_\ua  &=& \cova\left\{
\beginp
\hmuo \\
\hmuz  
\endp 
\right\} = \cova\left\{\beginp 1&0& 0 \\  0&1&0 \endp \beginp\hmuo\\ \hmuz \\ \hat\beta\endp \right\}\\ 
&\overset{\eqref{eq:clt_ipw}}{=}& \beginp I_2& 0  \endp\aba\beginp I_2 \\ 0  \endp \\
&\overset{\eqref{eq:B_wls},  \eqref{eq:A_wls}}{=}& \beginp
A_{11}^{-1} & -A_{11}^{-1}B_{12} B_{22}^{-1}
\endp
\beginp
B_{11} & B_{12} \\
B_{12}^\T & B_{22}
\endp 
\beginp
A_{11}^{-1} \\
- B_{22}^{-1}B_{12}^\T A_{11}^{-1}
\endp\\
&=&
A_{11}^{-1}B_{11}A_{11}^{-1}  - A_{11}^{-1}B_{12}B_{22}^{-1} B_{12}^\T A_{11}^{-1} \\
&=&
\beginp
\dfrac{b_{11}}{e^2} & 0  \\ 
0 & \dfrac{b_{22}}{(1-e)^2} 
\endp
- 
\beginp
\dfrac{b_{13}}{e}\\
\dfrac{b_{23}}{1-e}
\endp
b_{33}^{-1} \left(\dfrac{b_{13}^\T}{e}, \dfrac{b_{23}^\T}{1-e}\right).   
\enda 
This ensures
\begina
v_\ua=\va(\hat\tau_\ua ) &=& \va\left\{\hmuo  - \hmuz \right\}\\ 
&=& (1,-1)  V_\ua  \beginp
1\\
-1
\endp \\
&=& (1,-1)  \left\{\beginp
\dfrac{b_{11}}{e^2} & 0  \\ 
0 & \dfrac{b_{22}}{(1-e)^2} 
\endp
- 
\beginp
\dfrac{b_{13}}{e}\\
\dfrac{b_{23}}{1-e}
\endp
b_{33}^{-1} \left(\dfrac{b_{13}^\T}{e}, \dfrac{b_{23}^\T}{1-e}\right)\right\} \beginp
1\\
-1
\endp \\
&=&  (1,-1)  \beginp
\dfrac{b_{11}}{e^2} & 0  \\ 
0 & \dfrac{b_{22}}{(1-e)^2} 
\endp \beginp
1\\
-1
\endp -  (1,-1)  \beginp
\dfrac{b_{13}}{e}\\
\dfrac{b_{23}}{1-e}
\endp
b_{33}^{-1} \left(\dfrac{b_{13}^\T}{e}, \dfrac{b_{23}^\T}{1-e}\right)
 \beginp
1\\
-1
\endp\\
&=&  \dfrac{b_{11}}{e^2} + \dfrac{b_{22}}{(1-e)^2} - \left(\dfrac{b_{13}}{e}-\dfrac{b_{23}}{1-e}\right)b_{33}^{-1}\left(\dfrac{b_{13}}{e}-\dfrac{b_{23}}{1-e}\right)^\T.
\enda

\end{proof}

\subsection{Proof of the result for $\hteps $}\label{sec:ipwx_proof}
\begin{proof}
Recall from Proposition \ref{prop:hajek_app} that $\hteps = \hmuoipwx  - \hmuzipwx $, where
\begina
\hmuoipwx  = \dfrac{\sumi \rp \cdot \zhe \cdot Y_i(1)}{\sumi \rp \cdot \zhe} ,\qquad
\hmuzipwx  = \dfrac{\sumi \rp \cdot \omzhe \cdot Y_i(0)}{\sumi \rp \cdot \omzhe}.
\enda 
Then $\mu_1 = \hmuoipwx $ and $\mu_0 = \hmuzipwx $ solve 
\begina
0  = \sumi \rp \cdot \zhe \cdot \left\{Y_i(1) - \mu_1\right\}\quad \text{and}\quad 
0 =  \sumi \rp \cdot \omzhe \cdot \left\{Y_i(0) - \mu_0 \right\},
\enda
respectively, 
where $\hat p_i = p(x_i, Z_i; \hat\beta)$ and $\hei  = e(x_i, \hat\alpha)$. 
This, together with Lemma \ref{lem:beta}, ensures that $(\mu_1, \mu_0, \beta, \alpha) =(\hyipwxo , \hyipwxz  , \hat\beta, \hat\alpha)$ jointly solves 
\begina
0 = \meani \phi \big(Y_i(1), Y_i(0), x_i, Z_i, \ryi; \mu_1, \mu_0, \beta, \alpha \big),
\enda
where \renewcommand{\arraystretch}{1.5}
\beginy\label{eq:ee_ipwx}
\phi = \beginp  \phi_1 \\ \phi_0 \\ \psi_\beta\\\psi_\alpha\endp 
\quad\text{with} \quad
\begin{array}{ccl}
\phi_1 &=& 
 \rpfun  \cdot \zef \cdot \big\{Y_i(1) - \mu_1 \big\}, \\
\phi_0 &=& \rpfun  \cdot \omzef \cdot \big\{Y_i(0) - \mu_0 \big\},\\
\psi_\beta &=&  U_i \big\{\riy - \pfun \big\},\\
\psi_\alpha&=& \txi \big\{Z_i - \ef \big\}.
\end{array}
\endy
We use the letter $\phi$ to highlight differences from \eqref{eq:ee_wls}.
Direct algebra ensures that $(\mu_1, \mu_0, \beta, \alpha) = (\muos, \muzs, \beta^*, \alpha^*)$ solves $E\{\phi (\mu_1,\mu_0, \beta,\alpha)\} = 0$.
The theory of m-estimation ensures 
\beginy\label{eq:clt_ipwx}
\sqrt N\left\{\beginp 
\hyipwxo \\
\hyipwxz  \\
\hat\beta\\
\hat\alpha
\endp - \beginp \muos \\\muzs \\\beta^* \\ \alpha^*\endp  \right\} \rs \mn\left\{ 0, \abaeps \right\},
\endy
where $A^*_\xeps$ and $B^*_\xeps$ are the values of $ - E\left(\pd{(\mu_1, \mu_0, \beta, \alpha)} \phi \right)$ and $ E(\phi \phi^\T)$ evaluated at $(\muo, \muz,  \beta, \alpha)=(\muos ,  \muzs , \beta^*, \alpha^*)$, respectively. We compute below $A^*_\xeps$ and $B^*_\xeps$, respectively.

\bigskip 
\noindent\underline{\textbf{Compute $B_\xeps^*$.}}

\smallskip

Recall that $p(x_i, Z_i; \beta^*) = p_i$  and $ e(x_i; \alpha^*)=e$. 
Let $(\phi^*, \phi_1^*, \phi_0^*, \psi_\beta^*, \psi_\alpha^*)$ denote the value of $(\phi, \phi_1, \phi_0, \psi_\beta, \psi_\alpha)$ evaluated at $(\muos , \, \muzs , \, \beta^*, \alpha^*)$. 
From \eqref{eq:psistar_wls} and \eqref{eq:ee_ipwx}, we have 
\begina   
\phi^* = \beginp  \phi^*_1 \\ \phi^*_0 \\ \psi^*_\beta\\\psi^*_\alpha\endp 
\quad\text{with} \quad
\begin{array}{lll}
\phi_1^* &=& 
 \rpn  \cdot \ze \cdot \dyo = e^{-1}\psi_1^*, \\
\phi_0^* &=& \rpn  \cdot \omze \cdot \dyz = (e-1)^{-1}\psi_0^*,\\
\psi_\beta^* &=&  U_i \big(\riy - p_i \big),\\
\psi_\alpha^*&=& \txi \big(Z_i - e \big).
\end{array}
\enda
The $B_\xeps^*$ matrix equals  
\beginy\label{eq:B_ipwx}
B^*_\xeps = \left. E\left(\phi \phi^\T\right)\right|_{(\muos ,\,  \muzs , \,  \beta^*, \, \alpha^*)} = E \left( \phi^* \phi^{*\T} \right) = \left(\begin{array}{cc|cc}
c_{11} & 0 & c_{13} & c_{14} \\
0 & c_{22} & c_{23} & c_{24} \\\hline
c_{13}^\T & c_{23}^\T & c_{33} &c_{34}\\
c_{14}^\T & c_{24}^\T & c_{34}^\T & c_{44}
\end{array}\right) =   \left(\begin{array}{c|c}
C_{11} & C_{12} \\\hline
C_{12}^\T & C_{22}
\end{array}\right) ,
\endy
where 
\begina
C_{11} = \beginp c_{11} & 0 \\ 0 & c_{22} \endp, \quad C_{12} = 
 \beginp
c_{13} & c_{14} \\
c_{23} & c_{24} \\ 
\endp, \quad C_{22} = \beginp c_{33} & c_{34}\\ c_{34}^\T & c_{44} \endp = \beginp c_{33} &0 \\ 0 & c_{44} \endp
\enda
with \renewcommand{\arraystretch}{1.5}
\beginy\label{eq:cs}
c_{11} &=& E(\phi_1^* \phi_1^*) = e^{-2} E(\psi_1^* \psi_1^*) = e^{-2} b_{11}, \nonumber\\
 c_{22} &=& E(\phi_0^* \phi_0^*) = (1-e)^{-2} E(\psi_0^* \psi_0^*)= (1-e)^{-2} b_{22},\nonumber\\
c_{13} &=& E(\phi_1^* \psi_\beta^{*\T}) = e^{-1}E(\psi_1^* \psi_\beta^{*\T}) = 
   e^{-1} b_{13},\nonumber\\
 c_{23} 
   &=& E(\phi_0^* \psi_\beta^{*\T}) = (1-e)^{-1}E(\psi_0^* \psi_\beta^{*\T}) =  (1-e)^{-1} b_{23},\nonumber\\
   c_{33} &=& E(\psi_\beta^* \psi_\beta^{*\T}) =b_{33}
   \endy
   by the definitions of $(b_{11}, b_{22}, b_{13}, b_{23}, b_{33})$ from \eqref{eq:bs} and 
   \begina
c_{14} 
&=& E(\phi_1^* \psi_\alpha^{*\T}) = E \left[ \rpn  \cdot \ze\cdot \big\{Y_i(1)- \muos \big\}  \cdot  (Z_i - e)\txi^\T \right] 
=  E \left[   \dfrac{Z_i(1-e)}{e} \cdot \big\{Y_i(1)- \muos \big\}  \cdot   \txi^\T \right]  \\
  &=& (1-e) \cdot E \left[    \big\{Y_i(1)- \muos \big\}  \cdot   \txi^\T \right] ,\\
    c_{24} 
&=& E(\phi_0^* \psi_\alpha^{*\T}) = E \left[ \rpn  \cdot \omze\cdot \big\{Y_i(0)- \muzs \big\}  \cdot  (Z_i - e)\txi^\T \right] 
=  E \left[   \dfrac{-e(1-Z_i)}{1-e} \cdot \big\{Y_i(0)- \muzs \big\}  \cdot   \txi^\T \right]  \\
  &=& -e \cdot E \left[    \big\{Y_i(0)- \muzs \big\}  \cdot   \txi^\T \right],\\
      c_{34} &=& E\left\{U_i   (R_i^Y - p_i) \cdot (Z_i - e) \txi^\T \right\} = 0,\\
    c_{44} &=&  E(\psi_\alpha^* \psi_\alpha^{*\T}) = E\left\{ \txi \txi^\T (Z_i - e)^2 \right\} = e(1-e) \cdot E\left(\txi \txi^\T  \right).  
\enda

\bigskip 
\noindent\underline{\textbf{Compute $A_\xeps^*$.}}

\smallskip 

With a slight abuse of notation,  let 
$
\pd{\mu_1} \phi^*_1 =  \left.\pd{ \mu_1 } \phi_1 \right|_{(\muos ,\,  \muzs , \,  \beta^*, \, \alpha^*)}
$ denote the partial derivative of $\phi_1$ with regard to $\mu_1$  evaluated at $(\muos , \muzs , \beta^*, \alpha^*)$. Similarly define $\pd{ \mu_0 } \phi_1^*$, $\pd{ \mu_1 } \psi_0^*$, etc.
Recall that 
\begina
\pd{\beta} p_i = \left.\pd{ \beta } \pfun \right|_{\beta = \beta^*}= p_i(1-p_i)U_i, \qquad 
\pd{\beta} p_i^{-1} = \left.\pd{ \beta } \pfun^{-1} \right|_{\beta = \beta^*}=- \dfrac{1-p_i}{p_i} U_i
\enda
from \eqref{eq:pdb}. 
By symmetry, we have 
\beginy\label{eq:pdalpha}
 \left.\pd{ \alpha} \ef   \right|_{\alpha= \alpha^* }\ = \ e(1-e)\txi, \qquad 
  \left.\pd{ \alpha} \ef^{-1} \right|_{\alpha= \alpha^* }=-\dfrac{1-e}{e} \txi, \qquad
   \left.\pd{ \alpha} \{1-\ef\}^{-1} \right|_{\alpha= \alpha^* }=\dfrac{e}{1-e} \txi. \qquad
\endy

From \eqref{eq:ee_ipwx}, we have 
\begina
 \left.\pd{(\mu_1, \mu_0, \beta, \alpha)} \phi \right|_{(\muos ,\,  \muzs , \,  \beta^*, \, \alpha^*)}
&=& 
\beginp
\pd{\mu_1} \phi_1^*  & 0  &   \pd{\beta^\T} \phi_1^*  &    \pd{\alpha^\T} \phi_1^*  \\
0 & \pd{\mu_0} \phi_0^*  &   \pd{\beta^\T} \phi_0^*   & \pd{\alpha^\T} \phi_0^*  \\
0 & 0 &   \pd{\beta^\T} \psi_\beta^* & 0\\
0 & 0 & 0 &   \pd{\alpha^\T} \psi_\alpha^* 
\endp,
\enda
where
 \begina
  \pd{\alpha^\T} \phi_1^*
 &=&   \rpn  \cdot Z_i \cdot \dyo   \left.\pd{\alpha^\T} \ef^{-1} \right|_{\alpha^*} =  - \rpn  \cdot \ze  \cdot (1-e) \cdot \dyo  \cdot \txi^\T,\\
 \pd{\alpha^\T} \phi_0^*
 &=&  \rpn  \cdot (1-Z_i) \cdot \dyz    \cdot  \left.\pd{ \alpha^\T} \{1-\ef\}^{-1} \right|_{\alpha= \alpha^* }= \rpn  \cdot \omze \cdot e \cdot \dyz     \cdot   \txi^\T,\\
 \pd{\alpha^\T} \psi_\alpha^* &=& -  \txi \left.\pd{\alpha^\T} \ef \right|_{\alpha^*} = - \txi \txi^\T e(1-e) 
\enda
by \eqref{eq:pdalpha}.
Accordingly,  we have 
\begina
A_\xeps^* = - \left.E\left(\pd{(\mu_1, \mu_0, \beta, \alpha)} \psi \right)\right|_{(\muos , \, \muzs , \, \beta^*, \, \alpha^*)} = - E\left(\left.\pd{(\mu_1, \mu_0, \beta, \alpha)} \phi \right|_{(\muos ,\,  \muzs , \,  \beta^*, \, \alpha^*)} \right) =  \beginp
a'_{11} & 0 & a'_{13} & a_{14} \\
0 & a'_{22} & a'_{23} & a_{24} \\
0 & 0 & a'_{33} & 0\\
0 & 0 & 0 & a_{44}
\endp 
\enda
where
\begina
a'_{11} &=& - E\left(\pd{\mu_1} \phi_1^*\right) = e^{-1}E\left(\pd{\mu_1} \psi_1^*\right) =  e^{-1} a_{11} = 1,\\
 a'_{13} &=& - E \left(\pd{\beta^\T} \phi_1^* \right) = e^{-1} E \left(\pd{\beta^\T} \psi_1^*\right) = e^{-1}a_{13} = c_{13},\\
a'_{22} &=& 1, \quad 
a'_{23}= c_{23} \quad \text{ by symmetry},\\
a_{33}'&=&  -E\left( \pd{\beta^\T} \psi_\beta^* \right) = c_{33}
\enda
by \eqref{eq:as} and \eqref{eq:cs}, and 
\begina
a_{14} &=&  - E\left( \pd{\alpha^\T} \phi_1^* \right) 
= E\left[ \rpn  \cdot \ze  \cdot (1-e) \cdot \dyo  \cdot \txi^\T \right]\\
&=& E\left[   \ze  \cdot (1-e) \cdot \dyo  \cdot \txi^\T \right] = (1-e) \cdot  E\left[   \dyo  \cdot \txi^\T \right] = c_{14},\\
a_{24} &=&  - E\left( \pd{\alpha^\T} \phi_0 ^*  \right) 
= - E\left[ \rpn  \cdot \omze \cdot e \cdot \dyz   \cdot   \txi^\T \right]\\
&=& - E\left[   \omze \cdot e \cdot \dyz   \cdot   \txi^\T \right] = - e \cdot  E\left[   \dyz   \cdot \txi^\T \right] = c_{24},\\
a_{44}& =& - E \left(\pd{\alpha^\T} \psi_\alpha^*\right)   = c_{44}. 
\enda
This, together with the definitions of $C_{12}$ and $ C_{22}$ after \eqref{eq:B_ipwx}, ensures 
\beginy\label{eq:A_ipwx}
A^*_\xeps   = \left(\begin{array}{cc|cc}
1 & 0 & c_{13} & c_{14} \\
0 & 1 & c_{23} & c_{24} \\ \hline
0 & 0 & c_{33} & 0\\
0 & 0 & 0 & c_{44}
\end{array}\right) = 
\left(\begin{array}{c|c}
I_2 & C_{12}\\ \hline 
0 & C_{22}
\end{array}\right) \quad \text{with} 
\quad 
A_\xeps^{*-1} 
=  \beginp
I_2 & - C_{12} C_{22}^{-1} \\
0 & C_{22}^{-1} 
\endp. 
\endy

\bigskip 
\noindent\underline{\textbf{Compute $v_\xeps$.}}

\smallskip

Equations \eqref{eq:clt_ipwx},  \eqref{eq:B_ipwx}, and \eqref{eq:A_ipwx} together ensure
\beginy\label{eq:Vipwx}
 V_\ipwx &=&  \cova\left\{
\beginp
\hmuoipwx  \\
\hmuzipwx  
\endp 
\right\} 
= \cova\left\{\beginp 1&0& 0 & 0 \\  0&1& 0 &0\endp \beginp\hmuoipwx\\ \hmuzipwx \\ \hat\beta \\ \hat\alpha\endp \right\}\nonumber\\
&\overset{\eqref{eq:clt_ipwx}}{=}& 
\beginp I_2  \ 0  \endp \abaeps 
\beginp I_2 \\  0 \endp \nonumber\\
&\overset{\eqref{eq:B_ipwx}, \eqref{eq:A_ipwx}}{=}&\beginp
I_2  \  - C_{12} C_{22}^{-1}
\endp
\beginp
C_{11} &  C_{12}  \\
C_{12}^\T & C_{22} 
\endp
\beginp
I_2\\
- C_{22}^{-1}C_{12}^{\T} 
\endp =
C_{11} - C_{12}C_{22}^{-1}C_{12}^{\T}. 
\endy  
The definitions of $C_{12}$ and $ C_{22}$ after \eqref{eq:B_ipwx} ensure 
\beginy\label{eq:bbc_ipwx}
C_{12}C_{22}^{-1}C_{12}^{\T} 
&=&
 \beginp
c_{13} & c_{14} \\
c_{23} & c_{24} \\ 
\endp  \beginp
c_{33}^{-1} & 0 \\
0 & c_{44}^{-1} \\ 
\endp \beginp
c_{13}^\T &c_{23}^\T \\
 c_{14}^\T  & c_{24}^\T \\ 
\endp \nonumber\\
&=&
 \beginp
c_{13}   \\
c_{23}  
\endp   
c^{-1}_{33}   \beginp
c_{13}^\T & c_{23}^\T
\endp 
+ 
 \beginp
c_{14} \\
c_{24} \\ 
\endp  c^{-1}_{44}   \beginp
c_{14}^\T & c_{24}^\T 
\endp.
\endy
This ensures
\begina
 v_\xeps  = \va(\hat\tau_\ipwx) 
&=&  \va\left\{\hmuoipwx  - \hmuzipwx \right\}\\ 
&=& (1,-1)  V_\ipwx \beginp
1\\
-1
\endp\\
&\overset{\eqref{eq:Vipwx}}{=}& (1,-1)  \left( C_{11}  -  C_{12}C_{22}^{-1}C_{12}^{\T} \right)  \beginp
1\\
-1
\endp\\
&=& (1,-1)  C_{11}  \beginp
1\\
-1
\endp - (1,-1)  C_{12}C_{22}^{-1}C_{12}^{\T}  \beginp
1\\
-1
\endp\\
\\ &
\overset{\eqref{eq:bbc_ipwx}}{=}& (1,-1) \beginp c_{11} & 0 \\ 0 & c_{22}\endp  \beginp
1\\
-1
\endp\\
&&   -(1,-1)  \beginp
c_{13}   \\
c_{23}  
\endp   
c^{-1}_{33}   \beginp
c_{13}^\T & c_{23}^\T
\endp \beginp
1\\
-1
\endp  
-(1,-1) \beginp
c_{14} \\
c_{24} \\ 
\endp  c^{-1}_{44}   \beginp
c_{14}^\T & c_{24}^\T 
\endp \beginp
1\\
-1
\endp \\
& =& c_{11}+ c_{22} - \left(c_{13}-c_{23}\right)c_{33}^{-1}\left(c_{13}-c_{23}\right)^\T -   \left(c_{14}-c_{24}\right)c_{44}^{-1}\left(c_{14}-c_{24}\right)^\T\\
&\overset{\eqref{eq:cs}}{=}& \frac{b_{11}}{e^2}+ \frac{b_{22}}{(1-e)^2} - \left(\frac{b_{13}}{e}-\frac{b_{23}}{1-e}\right)b_{33}^{-1}\left(\frac{b_{13}}{e}-\frac{b_{23}}{1-e}\right)^\T -   \left(c_{14}-c_{24}\right)c_{44}^{-1}\left(c_{14}-c_{24}\right)^\T\\
&=& v_\ua  -  \left(c_{14}-c_{24}\right)c_{44}^{-1}\left(c_{14}-c_{24}\right)^\T.
\enda

\bigskip 

\noindent\underline{\textbf{Simplify $\left(c_{14}-c_{24}\right)c_{44}^{-1}\left(c_{14}-c_{24}\right)^\T$.}}

\smallskip

By definition, 
$
 \{Y_i(1)- \muos \big\}  \cdot   \txi^\T = \{Y_i(1)- \muos \big\}  \cdot  (1, \ x_i^\T )  
 = ( Y_i(1)- \muos  , \ \{Y_i(1)- \muos \big\}  x_i^\T  ) 
$
with 
\begina
E \Big[ \big\{Y_i(1)- \muos \big\}  \cdot   \txi^\T  \Big] =\Big( E\dyo  , \ E \left[ \{Y_i(1)- \muos \big\}  x_i^\T \right] \Big)  = \Big(0, \ \cov\{\yio, x_i\} \Big).
 \enda
This ensures 
\begina
\begin{array}{llllllll}
c_{14} 
&=& (1-e) \cdot E \left[    \dyo  \cdot   \txi^\T \right]  &=& e(1-e) \cdot\Big( 0, \  \cov\big\{  e^{-1}Y_i(1) , x_i\big\} \Big ) ,\\
    c_{24} 
  &=& -e \cdot E \left[    \big\{Y_i(0)- \muzs \big\}  \cdot   \txi^\T \right] &=&-e(1-e) \cdot 
  \Big( 0, \  \cov\big\{ (1-e)^{-1} Y_i(0), x_i\big\}  \Big) \bsm
  \end{array}
  \enda
  with 
\beginy\label{eq:db}
\{e(1-e)\}^{-1}(c_{14} - c_{24}) = \Big(0,\  \cov\big(\tyi, x_i\big) \Big), \qquad \text{where} \ \ \tyi = \frac{Y_i(1)}{e} + \frac{\yiz}{1-e}. 
\endy
In addition, we have 
\begina
  c_{44} =e(1-e) \cdot E\left(\txi \txi^\T  \right)   =   e(1-e)  \beginp 1 & \muxst \\ \muxs & E(x_i x_i^\T)\endp 
\enda
with
\beginy\label{eq:bffinv}
e(1-e)c_{44}^{-1} 
&=& \beginp 1 & \muxst \\ \muxs & E(x_i x_i^\T)\endp ^{-1}\nonumber\\
&=& \beginp
1 + \muxst \left\{E(x_ix_i^\T) - \muxs \muxst \right\}^{-1} \muxs & - \muxst \left\{E(x_ix_i^\T) - \muxs \muxst \right\}^{-1}\nonumber\\
- \left\{E(x_ix_i^\T) - \muxs \muxst \right\}^{-1} \muxs & \left\{E(x_ix_i^\T) - \muxs \muxst \right\}^{-1}  
\endp \nonumber \\
&=& \beginp
1 + \muxst \covxinv  \muxs & - \muxst \covxinv \\
- \covxinv  \muxs & \covxinv 
\endp\nonumber\\
&=& \beginp
1 & 0\\
0 & 0 \endp + \beginp \muxst \\ -I_J \endp\covxinv  \left( \muxs, \  -I_J \right) 
\endy
by the formula for block matrix inverse. Equations \eqref{eq:db} and \eqref{eq:bffinv} together ensure
\begina
&&\{e(1-e)\}^{-1}\left(c_{14}-c_{24}\right)c_{44}^{-1}\left(c_{14}-c_{24}\right)^\T
\\
&=& \{e(1-e)\}^{-1}\left(c_{14}-c_{24}\right) \cdot \left\{e(1-e)c_{44}^{-1}\right\}\cdot \{e(1-e)\}^{-1}\left(c_{14}-c_{24}\right)^\T\\ 
&=& 
\Big(0,\  \cov\big(\tyi, x_i\big) \Big) \left[\beginp
1 & 0\\
0 & 0 \endp 
+
\beginp \muxst \\ -I_J \endp\covxinv  \left( \muxs, \  -I_J \right) \right]
\beginp
0\\
\cov\big(x_i, \tyi\big)
\endp\\
&=& 
\Big(0,\  \cov\big(\tyi, x_i\big) \Big) \beginp
1 & 0\\
0 & 0 \endp 
\beginp
0\\
\cov\big(x_i, \tyi\big)
\endp \\
&&+\Big(0,\  \cov\big(\tyi, x_i\big) \Big)
\beginp \muxst \\ -I_J \endp\covxinv  \left( \muxs, \  -I_J \right) 
\beginp
0\\
\cov\big(x_i, \tyi\big)
\endp\\
&=& \cov\big(\tyi, x_i\big) \covxinv\cov\big(x_i, \tyi\big)\\
&=& \var\{\proj(\tyi\mid 1,x_i)\}.
\enda
\end{proof}

\subsection{Proof of the result for $\htreg $}\label{sec:reg}
\begin{proof}
Recall from Lemma \ref{lem:xreg_num} that $(\hy_\xreg(1), \hat \gamma_1)$ and $(\hy_\xreg(0), \hat\gamma_0)$ are the intercepts and coefficient vectors of $(x_i-\bar x)$ from the weighted-least-squares fits of \eqref{eq:wls_x_1_2} and \eqref {eq:wls_x_0_2},
respectively, with $\htreg  = \hmuox - \hmuzx$.   
The first-order conditions of \eqref{eq:wls_x_1_2}--\eqref {eq:wls_x_0_2} ensure that $(\mu_1, \gamma_1) = (\hmuox, \hgo )$ and $(\mu_0, \gamma_0) = (\hmuzwlsx, \hgz ) $  solve   
\begina
0 &=&   \sumi 
Z_i \cdot \rp \cdot  \left\{Y_i(1)- (x_i - \bar x)^\T  \gamma_1 - \mu_1 \right\}\beginp 
1 \\
 x_i - \bar x 
\endp,\\
0  & =& \sumi (1-Z_i) \cdot \rp \cdot 
 \left\{Y_i(0)- (x_i - \bar x)^\T  \gamma_0 - \mu_0 \right\}\beginp
1\\
 x_i - \bar x 
\endp,
\enda
respectively, where $\hat p_i = p(x_i, Z_i; \hat\beta)$. 
This, together with Lemma \ref{lem:beta}, ensures that $( \mu_1,\gamma_1, \mu_0, \gamma_0,\mu_x, \beta) = (\hywlsxo  ,  \hgo , \hmuzwlsx  ,\hgz , \bar x, \hat\beta)$ jointly solves 
\begina
0 = \meani \eta \big(Y_i(1), Y_i(0), x_i, Z_i, \riy; \theta \big) = \meani  \beginp  \eta_1\left(Y_i(1), x_i, Z_i, \ryi; \mu_1, \gamma_1, \mux, \beta \right) \\  \eta_0\left(\yiz, x_i, Z_i, \ryi; \mu_0, \gamma_0, \mux,  \beta \right) \\  \psi_x(x_i ; \mux) \\ \psi_\beta(x_i, Z_i, \riy ; \beta) \endp,
\enda
where $\theta =( \mu_1,\gamma_1, \mu_0, \gamma_0,\mu_x, \beta)$ and 
\beginy\label{eq:ee_wlsx}
\eta = \beginp  \eta_1 \\  \eta_0 \\  \psi_x \\ \psi_\beta\endp
\quad \text{with} \quad 
\begin{array}{lll}
\eta_1 &=&Z_i \cdot \rpfun \cdot  \left\{Y_i(1)- (x_i - \mux)^\T  \gamma_1 - \mu_1 \right\} \beginp 
1 \\
 x_i -\mux 
\endp, \\ 
 \eta_0 &=&  (1-Z_i) \cdot \rpfun \cdot 
 \left\{Y_i(0)- (x_i - \mux)^\T  \gamma_0 - \mu_0 \right\}\beginp 
1\\
 x_i - \mux 
\endp,\\ 
\psi_x &=&  x_i - \mu_x,\\
 \psi_\beta &=& U_i \big\{\riy - p(x_i, Z_i; \beta) \big\}.
 \end{array}
\endy
Direct algebra ensures that $\theta = \theta^*= (\muos , \gos ,  \muzs , \gzs , \mu_x^*, \beta^*)$  solves $E\{\eta(\theta)\} = 0$. 
The theory of m-estimation ensures  
\beginy\label{eq:clt_wlsx}
\sqrt N\left\{\beginp 
\hywlsxo  \\
\hgo \\
\hmuzwlsx  \\
\hgz \\
\bar x\\
\hat\beta
\endp - 
\beginp \muos \\\gos \\\muzs \\\gzs \\ \mu_x^*\\\beta^* \endp  \right\}  \rs  \mn\left\{ 0, \abareg \right\},
\endy
where $A_\xreg^* $ and $B_\xreg^* $ are the values of $ - E\left(\pd{(\mu_1, \gamma_1,\mu_0,  \gamma_0, \mu_x, \beta)} \eta \right)$ and $ E(\eta \eta^\T)$ evaluated at $\theta = \theta^*$. We compute below $A_\xreg^*$ and $B_\xreg^*$, respectively.

\bigskip

\noindent\underline{\textbf{Compute   $B_\xreg^*$.}}

\smallskip

Let $(\eta^*, \eta_1^*, \eta_0^*, \psi_x^*, \psi_\beta^*)$ denote the value of $(\eta, \eta_1, \eta_0, \psi_x, \psi_\beta)$ evaluated at $\theta =\theta^*$. 
From \eqref{eq:ee_wlsx}, we have 
\beginy\label{eq:psistar_xreg}
\eta^* = \beginp  \eta_1^* \\  \eta_0^* \\  \psi_x^* \\ \psi_\beta^*\endp
\quad \text{with} \quad 
\begin{array}{lll}
\eta_1^*
&=& Z_i \cdot \rpn \cdot  \dyoa\tdxi   ,\\
\eta_0^* 
&=& (1-Z_i) \cdot \rpn \cdot   \dyza  \tdxi ,\\
\psi_x^* &=& x_i - \muxs, \\
\psi_\beta^* &=& U_i \left(\ryi - p_i\right).
\end{array}
\endy
The $B_\xreg^*$ matrix equals  
\begina
B_\xreg^* = \left.E\left(\eta \eta^\T\right)\right|_{\theta = \theta^*} = E\left(\eta^* \eta^{*\T}\right) 
= \left(\begin{array}{cc|cc|c|c}
d_{11} & d_{12} & 0 & 0 & d_{15} & d_{16} \\
d_{12}^\T & d_{22} & 0 & 0 & d_{25} & d_{26}  \\\hline
0 &0 & d_{33} & d_{34} & d_{35} & d_{36} \\
0 &0 & d_{34}^\T & d_{44} & d_{45} & d_{46} \\\hline
d_{15}^\T &d_{25}^\T & d_{35}^\T & d_{45}^\T & d_{55} & d_{56} \\\hline
d_{16}^\T &d_{26}^\T & d_{36}^\T & d_{46}^\T & d_{56}^\T & d_{66} \\
\end{array}\right)
=
\beginp
D_{11} & 0 & D_{13}&D_{14} \\
0 & D_{22} & D_{23}&D_{24} \\
D_{13}^\T & D_{23}^\T & D_{33} & D_{34}\\
D_{14}^\T & D_{24}^\T & D_{34}^\T & D_{44}
\endp,
\enda
where
\begine[(i)]
\item  
\begina
D_{11}= \beginp d_{11}& d_{12} \\ d_{12}^\T & d_{22} \endp &=& E\left(\eta_1^* \eta_1^{*\T}\right)\\
&=& E \left[ Z_i \cdot\dfrac{\riy}{ p_i^2} \cdot \dyoa^2 \cdot \tdxi  \tdxit  \right]\\
&=& E \left[ Z_i \cdot p_i^{-1} \cdot \dyoa^2 \cdot \tdxi  \tdxit  \right] 
\enda
with 
\begina
d_{11}  &=&  
 E\left[ p_i^{-1}  \cdot Z_i \cdot   \dyoa  ^2   \right] = b_{11}',\\
 d_{12}  
  &=&  E\left[   p_i^{-1} \cdot Z_i   \cdot \dyoa  ^2  \dxit \right], \\
d_{22} &=&  E\left[   p_i^{-1} \cdot Z_i   \cdot \dyoa  ^2 \dxi  \dxit\right] ,
\enda
\item \begina
D_{13} =  \beginp d_{15}\\ d_{25}  \endp &=& E\left(\eta_1^* \psi_x^{*\T}\right)\\
&=& E \left[ Z_i \cdot \rpn  \cdot \dyoa \tdxi  \cdot \dxit  \right]\\
&=& E \left[ Z_i \cdot    \dyoa    \cdot\tdxi \dxit  \right]\\
&=&  e   \cdot    \beginp  E \left[\dyoa  \cdot \dxit \right] \\ E \left[ \dyoa  \cdot \dxi   \dxit  \right]\endp  
\enda
with
\begina
d_{15}  
&=&  e \cdot \left( E\left[  \left\{Y_i(1)- \muos  \right\} \dxit \right]-   \gamma^{*\T}_1 E\left\{  \dxi   \dxit \right\} \right) = 0,\\
d_{25} &=&   e\cdot E\left[    \dyoa  \cdot \dxi \dxit \right],
\enda
\item 
\begina
D_{14} = \beginp d_{16}\\ d_{26}  \endp &=& E\left(\eta_1^* \psi_\beta^{*\T}\right)\\
&=& E \left[ Z_i \cdot \rpn   \cdot \dyoa   \tdxi \cdot (\ryi - p_i) \cdot U_i^\T  \right]\\
&=& E \left[ Z_i  \cdot \rpn (1-p_i) \cdot     \dyoa \tdxi  \cdot U_i^\T   \right]\\
&=&  E \left[ Z_i  \cdot  (1-p_i) \cdot       \dyoa \tdxi  \cdot U_i^\T   \right] 
\enda
with 
\begina
d_{16} &=& b_{13}',\\ 
d_{26} &=&E \left[ Z_i  \cdot  (1-p_i) \cdot       \dyoa (x_i - \muxs)  \cdot U_i^\T   \right]  ,
\enda 
\item $D_{22} = \beginp d_{33} & d_{34} \\ d_{34}^\T & d_{44}\endp$, $ D_{23} = \beginp
d_{35}\\ d_{45}\endp$, and $D_{24}= \beginp
d_{36}\\ d_{46}\endp$ with 
\begina
d_{33} = b_{22}', \quad d_{35} = 0, \quad  d_{36} = b_{23}' 
\enda 
by symmetry, 
\item 
\begina
   D_{33} &=& d_{55} = E(\psi_x^* \psi_x ^{*\T})  = \cov(x_i),\\
   D_{34} &=& d_{56} = E(\psi_x^* \psi_\beta^{*\T}) = E\left\{ \dxi (\ryi - p_i) U_i^\T \right\} = 0, \\
   D_{44} &=& d_{66} =  E(\psi_\beta^* \psi_\beta^{*\T}) = b_{33} \quad \text{by \eqref{eq:bs}}.
\enda
\ende 
This ensures
\beginy\label{eq:B_wlsx}
B_\xreg^* 
= \left(\begin{array}{cc|cc|c|c}
d_{11} & d_{12} & 0 & 0 & 0& d_{16} \\
d_{12}^\T & d_{22} & 0 & 0 & d_{25} & d_{26}  \\\hline
0 &0 & d_{33} & d_{34} & 0 & d_{36} \\
0 &0 & d_{34}^\T & d_{44} & d_{45} & d_{46} \\\hline
0&d_{25}^\T & 0 & d_{45}^\T & d_{55} & 0 \\\hline
d_{16}^\T &d_{26}^\T & d_{36}^\T & d_{46}^\T & 0 & d_{66} \\
\end{array}\right) 
\endy
with 
\begina
d_{11} = b_{11}',  \quad  d_{16} = b_{13}',\quad 
d_{33} = b_{22}',  \quad  d_{36} = b_{23}', \quad d_{66} = b_{33}.
\enda 
 
\bigskip
\noindent\underline{\textbf{Compute  $A_\xreg^*$.}}

\smallskip

With a slight abuse of notation, let $\pd{\mu_1} \eta_1^*$ denote the value of $\pd{\mu_1} \eta_1 $ evaluated at $\theta^*$. Similarly define other partial derivatives. 
Observe that 
\beginy\label{eq:derivative}
  \pd{\mux^\T}     \left\{Y_i(1)- (x_i - \mux)^\T  \gamma_1 - \mu_1 \right\} \beginp 1\\x_i - \mux\endp 
 &=& 
  \left\{Y_i(1)- (x_i - \mux)^\T  \gamma_1 - \mu_1 \right\} \beginp 
0 \\
 -I_J 
\endp +  \beginp 1\\x_i - \mux\endp \gamma_1^\T \nonumber\\
&=& 
\beginp 
\gamma_1^\T\\
 - \left\{Y_i(1)- (x_i - \mux)^\T  \gamma_1 - \mu_1 \right\}I_J + (x_i - \mux) \gamma_1^\T
\endp.\quad 
\endy
From \eqref{eq:ee_wlsx}, we have 
\begina
\left. \pd{(\mu_1, \gamma_1, \mu_0, \gamma_0, \mu_x, \beta)} \eta \right|_{\theta = \theta ^*} 
&=& 
\beginp
   \pd{(\mu_1^\T, \gamma_1^\T)} \eta_1^*    &  0   & \pd{\mu_x^\T} \eta_1^* &   \pd{\beta^\T} \eta_1 ^*  \\
0    & \pd{(\mu_0^\T, \gamma_0^\T)} \eta_0^*&  \pd{\mu_x^\T} \eta_0^* &   \pd{\beta^\T} \eta_0^* \\
0 & 0 & \pd{\mu_x^\T} \psi_x^* & 0\\
0 & 0 & 0 & \pd{\beta^\T} \psi_\beta^*
\endp,
\enda
where
\begina
   \pd{(\mu_1^\T, \gamma_1^\T)} \eta_1^* 
 &=&- Z_i \cdot\rpn \cdot \beginp
1\\
x_i - \muxs
\endp \tdxit,  \\
   \pd{\mu_x^\T} \eta_1^* 
 &=& Z_i \cdot\rpn \cdot \beginp 
\gamma_1^{*\T}\\
 - \dyoa I_J + \dxi  \gamma_1^{*\T}
\endp \quad \text{by \eqref{eq:derivative}}, \\
\pd{\beta^\T} \eta_1^*
 &=& Z_i \cdot\riy \cdot      \dyoa\tdxi \cdot \pd{\beta} p_i^{-1}\\
& =& - Z_i \cdot   \riy  \cdot    \dyoa \tdxi \cdot \dfrac{1-p_i}{ p_i }   U_i^\T \quad \text{by \eqref{eq:pdb}}\\
 &=&  - Z_i \cdot\rpn  (1-p_i)\cdot    \dyoa\tdxi  \cdot U_i^\T, \\
\pd{\mu_x^\T} \psi_x^* &=& -I_J.
\enda
Accordingly,  we have 
\begina
A_\xreg^* = -\left.E\left( \pd{(\mu_1, \gamma_1, \mu_0, \gamma_0, \mu_x, \beta)} \eta \right) \right|_{\theta = \theta^*}
&=&- E\left( \left. \pd{(\mu_1, \gamma_1, \mu_0, \gamma_0, \mu_x, \beta)} \eta \right|_{\theta = \theta ^*}\right) \\
&=& 
\left(\begin{array}{cc|cc|c|c}
g_{11} & g_{12} & 0 & 0 & g_{15} & g_{16} \\
g_{21} & g_{22} & 0 & 0 & g_{25} & g_{26}  \\\hline
0 &0 & g_{33} & g_{34} & g_{35} & g_{36} \\
0 &0 & g_{43} & g_{44} & g_{45} & g_{46} \\\hline
0 &0& 0 & 0 & g_{55} &0\\\hline
0& 0 & 0 & 0 & 0 & g_{66} \\
\end{array}\right),
\enda
where 
\begina
 \beginp g_{11} & g_{12} \\g_{21} & g_{22}\endp 
&=&- E \left(   \pd{(\mu_1^\T, \gamma_1^\T)} \eta_1^*   \right)\\
& =& E\left\{ Z_i \cdot\rpn \cdot \beginp
1\\
x_i - \muxs
\endp \tdxit 
\right\} =  E\left\{Z_i  \cdot \beginp
1\\
x_i - \muxs
\endp \tdxit 
\right\} \\
&=& e\cdot E\left\{ \beginp
1\\
x_i - \muxs
\endp \tdxit 
\right\} = e\cdot  \beginp
1 & 0 \\
0 & \cov(x_i) 
\endp,\\
\beginp g_{15}\\g_{25} \endp &=&   - E\left(  \pd{\mu_x^\T} \eta_1 ^*\right) \\
& =&  - E \left\{ Z_i \cdot\rpn \cdot \beginp 
\gamma_1^{*\T}\\
 - \left\{Y_i(1)- (x_i - \mux)^\T  \gamma^*_1 - \muos  \right\}I + \dxi  \gamma_1^{*\T}
\endp\right\}\\
& =& - E \left\{ Z_i   \cdot \beginp 
\gamma_1^{*\T}\\
 - \left\{Y_i(1)- (x_i - \mux)^\T  \gamma^*_1 - \muos  \right\}I_J + \dxi  \gamma_1^{*\T}
\endp\right\}\\
& =& - e\cdot  \beginp 
\gamma_1^{*\T}\\
 0_{J\times J}
\endp, \\
\beginp g_{16}\\g_{26} \endp  &=& - E\left(  \pd{\beta^\T} \eta_1 ^* \right) = E \left(  Z_i \cdot\rpn  (1-p_i) \cdot \dyoa\cdot  \tdxi   \cdot U_i^\T\right)\\
&=& E \left[   Z_i \cdot   (1-p_i) \cdot \dyoa \cdot  \tdxi  \cdot U_i^\T\right]\\
&=& D_{14},\\\\
 \beginp g_{33} & g_{34} \\g_{43} & g_{44}\endp  &=&   (1-e)\cdot  \beginp
1 & 0 \\
0 & \cov(x_i)\endp, \qquad   \beginp g_{35}\\g_{45} \endp=- (1- e)\cdot  \beginp 
\gamma_0^{*\T}\\
 0_{J\times J}
\endp,
\qquad \beginp g_{36}\\g_{46} \endp = D_{24} \quad \text{by symmetry},
\enda
 and 
\begina
g_{55} = - E \left(\pd{\mu_x^\T} \psi_x^*  \right) = I_J, \qquad g_{66} = - E \left(\pd{\mu_\beta^\T} \psi_\beta^*  \right)  =  d_{66}.
\enda
 This ensures 
\begina
A_\xreg^* = \left(\begin{array}{cccc|cc}
g_{11} & 0 & 0 & 0 & g_{15} & d_{16} \\
0 & g_{22} & 0 & 0 & 0& d_{26}  \\
0 &0 & g_{33} & 0 & g_{35} & d_{36} \\
0 &0 & 0& g_{44} & 0 & d_{46} \\\hline
0 &0& 0 & 0 & I &0\\
0& 0 & 0 & 0 & 0 & d_{66} \\
\end{array}\right)  = \left(\begin{array}{c|c}
G_{11} & G_{12} \\\hline
0 & G_{22}
\end{array}\right),
\enda
where $
g_{11} = e$, $g_{33} = 1-e$, $g_{15} = -e\gost$, and $g_{35} = -(1-e) \gzst$. 
By the formula of block matrix inverse, we have
\begina
A_\xreg^{*-1} 
=  \beginp
G_{11}^{-1} &  -G_{11}^{-1}G_{12} G_{22}^{-1} \\
0 & G_{22}^{-1}
\endp
= \left(\begin{array}{cccc|cc}
g^{-1}_{11} & 0 & 0 & 0 & -g^{-1}_{11} g_{15}  & -g^{-1}_{11} d_{16}d^{-1}_{66} \\
0 & g^{-1}_{22} & 0 & 0 & 0& -g^{-1}_{22} d_{26} d^{-1}_{66} \\
0 &0 & g^{-1}_{33} & 0 & -g^{-1}_{33} g_{35}& -g^{-1}_{33} d_{36} d^{-1}_{66}\\
0 &0 & 0& g^{-1}_{44} & 0 & -g^{-1}_{44}d_{46}d^{-1}_{66} \\\hline
0 &0& 0 & 0 & I &0\\
0& 0 & 0 & 0 & 0 & d^{-1}_{66} \\
\end{array}\right)  
\enda
with
\beginy\label{eq:A_wlsx}
 \beginp
1 & 0 & 0 & 0 & 0 &0\\
0 & 0 & 1 & 0 & 0&0 
\endp
A_\xreg^{*-1} &=& 
\beginp
g^{-1}_{11} & 0 & 0 & 0 & -g^{-1}_{11} g_{15}  & -g^{-1}_{11} d_{16}d^{-1}_{66} \\
0 &0 & g^{-1}_{33} & 0 & -g^{-1}_{33} g_{35}  & -g^{-1}_{33} d_{36} d^{-1}_{66}
\endp \nonumber\\
&=&
\beginp
g_{11}^{-1} & 0 \\
0 & g_{33}^{-1}
\endp 
\beginp
1 & 0 & 0 & 0 & - g_{15}  & -d_{16}d^{-1}_{66} \\
0 &0 & 1 & 0 & -g_{35}  & -d_{36} d^{-1}_{66}
\endp,
\endy
where $
g_{11} = e$, $g_{33} = 1-e$, $g_{15} = -e\gost$, and $g_{35} = -(1-e) \gzst$.

\bigskip

\noindent\underline{\textbf{Compute $v_\xreg$.}}

\smallskip 

Direct algebra ensures
\beginy\label{eq:algebra}
&&\left( \begin{array}{cccc|cc}
1 & 0 & 0 & 0 & - g_{15}  & -d_{16}d^{-1}_{66} \\
0 &0 & 1 & 0 & -g_{35} & -d_{36} d^{-1}_{66}
\end{array}\right)
\left(\begin{array}{cccc|cc}
d_{11} & d_{12} & 0 & 0 & 0 & d_{16} \\
d_{12}^\T & d_{22} & 0 & 0 & d_{25} & d_{26}  \\
0 &0 & d_{33} & d_{34} & 0& d_{36} \\
0 &0 & d_{34}^\T & d_{44} & d_{45} & d_{46} \\\hline
0 &d_{25}^\T & 0 & d_{45}^\T & d_{55} &0 \\
d_{16}^\T &d_{26}^\T & d_{36}^\T & d_{46}^\T &0 & d_{66} \\
\end{array}\right)
\beginp
1 & 0 \\
 0 & 0 \\
 0 & 1 \\
 0 & 0 \\\hline
 -  g_{15}^\T& - g_{35} ^\T\\
 - d^{-1}_{66}d_{16}^\T & - d^{-1}_{66}d_{36}^\T
\endp\nonumber\\
&=&\beginp
1 & 0 & 0 & 0  \\
0 &0 & 1 & 0 
\endp
\left(\begin{array}{cccc}
d_{11} & d_{12} & 0 & 0  \\
d_{12}^\T & d_{22} & 0 & 0   \\
0 &0 & d_{33} & d_{34}  \\
0 &0 & d_{34}^\T & d_{44} 
\end{array}\right)
\beginp
1 & 0 \\
 0 & 0 \\
 0 & 1 \\
 0 & 0
\endp+\beginp
g_{15} & d_{16}d^{-1}_{66} \\
 g_{35} & d_{36} d^{-1}_{66}
\endp
\beginp
 d_{55} & 0 \\
  0 & d_{66} \\
\endp 
\beginp
 g_{15}^\T& g_{35} ^\T\\
  d^{-1}_{66}d_{16}^\T & d^{-1}_{66}d_{36}^\T
\endp \nonumber\\
&&- \beginp
g_{15} & d_{16}d^{-1}_{66} \\
 g_{35} &d_{36} d^{-1}_{66}
\endp
\beginp 
0 &d_{25}^\T & 0 & d_{45}^\T  \\
d_{16}^\T &d_{26}^\T & d_{36}^\T & d_{46}^\T
\endp 
\beginp
1 & 0 \\
 0 & 0 \\
 0 & 1 \\
 0 & 0
\endp-\beginp
1 & 0 & 0 & 0 \\
0 &0 & 1 & 0 
\endp
\left(\begin{array}{cc}
 0 & d_{16} \\
 d_{25} & d_{26}  \\
0& d_{36} \\
 d_{45} & d_{46} 
\end{array}\right)
\beginp
   g_{15}^\T&  g_{35} ^\T\\
 d^{-1}_{66}d_{16}^\T &  d^{-1}_{66}d_{36}^\T
\endp\nonumber\\
&=& \beginp d_{11} & 0 \\ 0 & d_{33}\endp + \beginp g_{15} \\ g_{35} \endp  d_{55} \beginp g_{15}^\T, & g_{35}^\T \endp +
\beginp
  d_{16}  \\
 d_{36} 
\endp
d^{-1}_{66}
\beginp
d_{16}^\T, & d_{36}^\T
\endp \nonumber\\
&& - \beginp
  d_{16}  \\
 d_{36} 
\endp
d^{-1}_{66}
\beginp
d_{16}^\T, & d_{36}^\T
\endp - \beginp
  d_{16}  \\
 d_{36} 
\endp
d^{-1}_{66}
\beginp
d_{16}^\T, & d_{36}^\T
\endp\nonumber\\
&=& \beginp d_{11} & 0 \\ 0 & d_{33}\endp + \beginp g_{15} \\ g_{35} \endp  d_{55} \beginp g_{15}^\T, g_{35}^\T \endp -
\beginp
  d_{16}  \\
 d_{36} 
\endp
d^{-1}_{66}
\beginp
d_{16}^\T & d_{36}^\T
\endp, 
\endy
where
\begina d_{11} = b_{11}', \quad d_{33} = b_{22}', \quad g_{15} = -e\gost, \quad g_{35} = -(1-e) \gzst, \quad d_{55} = \cov(x_i),\quad d_{16} = b_{13}', \quad d_{36} = b_{23}', \quad d_{66} = b_{33}
\enda 
from previous computations. 
Equations \eqref{eq:clt_wlsx},  \eqref{eq:B_wlsx},  \eqref{eq:A_wlsx}, and \eqref{eq:algebra} together ensure   
\begina
V_\xreg &=& \cova\left\{
\beginp
\hy_\xreg(1) \\
\hy_\xreg(0)  
\endp 
\right\}
=
\cova\left\{
\beginp
1 & 0 & 0 & 0 & 0 &0\\
0 & 0 & 1 & 0 & 0&0 
\endp
\beginp\hy_\xreg(1) \\ \hgo \\ \hy_\xreg(0) \\ \hgz  \\ \bar x \\ \hat\beta\endp
\right\}
 \\
&\overset{\eqref{eq:clt_wlsx}}{=}& \beginp
1 & 0 & 0 & 0 & 0 &0\\
0 & 0 & 1 & 0 & 0&0 
\endp
\abareg 
\beginp
1 & 0 \\
 0 & 0 \\
 0&1\\
 0 &0\\
0 & 0 \\
 0&0 
\endp\\
&\overset{\eqref{eq:A_wlsx}}{=}&\beginp 
g_{11}^{-1} & 0 \\
0 & g_{33}^{-1}
\endp \cdot \\
&&\beginp
1 & 0 & 0 & 0 & - g_{15}  & -d_{16}d^{-1}_{66} \\
0 &0 & 1 & 0 & -g_{35}  & -d_{36} d^{-1}_{66}
\endp 
\left(\begin{array}{cccc|cc}
d_{11} & d_{12} & 0 & 0 & 0 & d_{16} \\
d_{12}^\T & d_{22} & 0 & 0 & d_{25} & d_{26}  \\
0 &0 & d_{33} & d_{34} & 0& d_{36} \\
0 &0 & d_{34}^\T & d_{44} & d_{45} & d_{46} \\\hline
0 &d_{25}^\T & 0 & d_{45}^\T & d_{55} & 0 \\
d_{16}^\T &d_{26}^\T & d_{36}^\T & d_{46}^\T &0 & d_{66} \\
\end{array}\right)
\beginp
1 & 0 \\
 0 & 0 \\
 0 & 1 \\
 0 & 0 \\
 -  g_{15}^\T& -  g_{35} ^\T\\
 - d^{-1}_{66}d_{16}^\T & - d^{-1}_{66}d_{36}^\T
\endp\\
&&\beginp
g_{11}^{-1} & 0 \\
0 & g_{33}^{-1}
\endp\\
&\overset{\eqref{eq:algebra}}{=}&\beginp 
g_{11}^{-1} & 0 \\
0 & g_{33}^{-1}
\endp \cdot \left\{ \beginp d_{11} & 0 \\ 0 & d_{33}\endp + \beginp g_{15} \\ g_{35} \endp  d_{55} \beginp g_{15}^\T, g_{35}^\T \endp -
\beginp
  d_{16}  \\
 d_{36} 
\endp
d^{-1}_{66}
\beginp
d_{16}^\T & d_{36}^\T
\endp \right\} \beginp
g_{11}^{-1} & 0 \\
0 & g_{33}^{-1}
\endp\\
&=&    \beginp \dfrac{b'_{11}}{e^2} & 0 \\ 0 & \dfrac{b'_{22}}{(1-e)^2} \endp + \beginp \gost  \\ \gzst \endp  \cov(x_i) \beginp \gos, \gzs \endp -
\beginp
 \dfrac{ b'_{13}}{e}  \\
 \dfrac{ b'_{23}}{1-e}
\endp
b^{-1}_{33}
\beginp
\dfrac{(b'_{13})^\T}{e} &  \dfrac{ (b'_{23})^\T}{1-e}
\endp. 
\enda
This ensures
\begina
v_\xreg = \va(\htreg ) &=& \va\left\{\hywlsxo - \hywlsxz\right\}=  \va\left\{(1, \ -1) \beginp \hywlsxo\\ \hywlsxz \endp \right\} \\
&=& (1, \ -1) V_\xreg \beginp 1\\-1\endp \\
&=& \dfrac{b'_{11}}{e^2} + \dfrac{b'_{22}}{(1-e)^2} + (\gos - \gzs)^\T\cov(x_i) (\gos - \gzs) - \left(  \dfrac{ b'_{13}}{e} -
 \dfrac{ b'_{23}}{1-e} \right)d_{66}^{-1} \left(  \dfrac{ b'_{13}}{e} -
 \dfrac{ b'_{23}}{1-e} \right)^\T. 
\enda
\end{proof}

\subsection{Proof of Theorem \ref{thm:clt_app} \eqref{item:complete}--\eqref{item:linear}}\label{sec:thm1_special cases}

Direct algebra shows that 
\beginy\label{eq:vv}
\var\{Y_i'(z)\}  = \var\{\yizz - x_i^\T\gamma_z\} 
&=& \var\{\yizz\} +\var(x_i^\T\gamma_z) - 2\cov\{\yizz, x_i^\T\gamma_z\}\nonumber\\
&=& \var\{\yizz\}  +\gamma_z^\T\cov(x_i)\gamma_z - 2\gamma_z^\T \cov(x_i) \gamma_z\nonumber\\
& =& \var\{\yizz\}  -\gamma_z^\T\cov(x_i)\gamma_z \quad(z=0,1).  
\endy
In addition, it follows from 
$
\cov(x_i, \tyi) = \cov \{x_i, e^{-1}Y_i(1) + (1-e)^{-1}Y_i(0)\} =  \cov(x_i) \{e^{-1}\gamma_1 +   (1-e)^{-1}\gamma_0\}$
that
\begina 
\var\left\{\proj(\tyi \mid 1, x_i)\right\} &=& \cov(\tyi, x_i) \{\cov(x_i)\}^{-1} \cov(x_i, \tyi)\\
&=& \left\{e^{-1}\gamma_1^\T +   (1-e)^{-1}\gamma_0^\T\right\} \cov(x_i) \left\{e^{-1}\gamma_1 +   (1-e)^{-1}\gamma_0\right\}\\
&=& e^{-2}\gamma_1^\T \cov(x_i) \gamma_1 + (1-e)^{-2} \gamma_0^\T \cov(x_i) \gamma_0 + e^{-1}(1-e)^{-1} \left\{ \gamma_1^\T \cov(x_i) \gamma_0 +  \gamma_0^\T \cov(x_i) \gamma_1\right\}.
\enda
This ensures
\beginy\label{eq:d_xps}
v_\ua - v_\xps &=& e(1-e) \var\left\{\proj(\tyi \mid 1, x_i)\right\}\nonumber\\
&=& \frac{1-e}{e} \gamma_1^\T \cov(x_i) \gamma_1 + \frac{e}{1-e}\gamma_0^\T \cov(x_i) \gamma_0 + \gamma_1^\T \cov(x_i) \gamma_0 +  \gamma_0^\T \cov(x_i) \gamma_1\nonumber\\
&=& \left( \frac{1}{e} - 1\right)  \gamma_1^\T \cov(x_i) \gamma_1 + \left(\frac{1}{1-e} - 1\right)\gamma_0^\T \cov(x_i) \gamma_0 + \gamma_1^\T \cov(x_i) \gamma_0 +  \gamma_0^\T \cov(x_i) \gamma_1 \nonumber\\
&=&  \frac{\gamma_1^\T \cov(x_i) \gamma_1}{e}    + \frac{\gamma_0^\T \cov(x_i) \gamma_0}{1-e}  -( \gamma_1 - \gamma_0)^\T \cov(x_i)(\gamma_1- \gamma_0).
\endy

\subsubsection{Proof of Theorem \ref{thm:clt_app}\eqref{item:complete}}
When $p_i = 1$, we have 
\renewcommand{\arraystretch}{1.2}
\begina
\begin{array}{lll}
b_{11} = 
 e \var\{\yio\},&&
   b_{11}' = e\var\{Y_i'(1)\}  ,\\
   b_{22} = (1-e) \var\{\yiz\}, && 
  b_{22}' =(1-e)  \var\{Y_i'(0)\},
 \end{array}\enda
while $ b_{13} = b_{23} = b_{13}' = b_{23}' = 0_m$ and $b_{33} = b_{33}' = 0_{m\times m}$, where $m$ denotes the dimension of $U_i$. 
This, together with \eqref{eq:vv}, verifies the simplified expression of $v_\ua$ and ensures 
\begina
v_\ua - v_\xreg &=& \frac{b_{11}-b_{11}'}{e^2} +  \frac{b_{22}-b_{22}'}{(1-e)^2} - (\gamma_1-\gamma_0)^\T\cov(x_i)(\gamma_1-\gamma_0)\\
&\overset{\eqref{eq:vv}}{ =}& e^{-1} \gamma_1^\T \cov(x_i) \gamma_1 + (1-e)^{-1}\gamma_0^\T \cov(x_i) \gamma_0 -  (\gamma_1-\gamma_0)^\T\cov(x_i)(\gamma_1-\gamma_0)\\
&\overset{\eqref{eq:d_xps}}{ =}& v_\ua - v_\ipwx.
\enda

\subsubsection{Proof of Theorem \ref{thm:clt_app}\eqref{item:mcar}}
Given $p_i = p \in (0,1)$ for all $i$, we have $U_i = 1$ and hence 
\begina
b_{11} = p^{-1} e \cdot \var\{Y_i(1)\}, \quad b_{22} = p^{-1} (1-e) \cdot \var\{Y_i(0)\}, \quad b_{13} = b_{23} = 0,\\
b'_{11} = p^{-1} e \cdot \var\{Y'_i(1)\}, \quad b_{22} = p^{-1} (1-e) \cdot \var\{Y'_i(0)\}, \quad b'_{13} = b'_{23} = 0.
\enda
This, together with \eqref{eq:vv}, verifies the simplified expressions of $v_\ua$ and $v_\xreg$ with
\beginy\label{eq:d_xreg}
v_\ua - v_\xreg =  \frac{\gamma_1^\T \cov(x_i) \gamma_1}{ep} +  \frac{\gamma_0^\T \cov(x_i) \gamma_0}{(1-e)p} - (\go-\gz)^\T\cov(x_i)(\go-\gz). 
\endy
Compare \eqref{eq:d_xreg} with \eqref{eq:d_xps} to see
\begina
v_\xreg - v_\xps &=& (v_\ua - v_\xps) - (v_\ua - v_\xreg)\\
&=&  \frac{1}{e}\left( 1 - \frac{1}{p}\right)  \gamma_1^\T \cov(x_i) \gamma_1 + \frac{1}{1-e}\left( 1- \frac{1}{p}\right)\gamma_0^\T \cov(x_i) \gamma_0\\
&=& (1-p^{-1})\left\{ \frac{\gamma_1^\T \cov(x_i) \gamma_1}{e}   + \frac{\gamma_0^\T \cov(x_i) \gamma_0}{1-e}  \right\} \leq 0.
\enda

\bigskip

\subsubsection{Proof of Theorem \ref{thm:clt_app}\eqref{item:linear}, $v_\xps - v_\xreg   = E(\Gamma^\T\Gamma)$}
We show in this subsection that $v_\xps - v_\xreg   = E(\Gamma^\T\Gamma)$. The result ensures the relative order between $\{\htn, \htreg, \hteps\}$ with  $v_\xreg \leq v_\xps \leq v_\ua$. 
The proof of $v_\ua - v_\xreg   = \va(\htn - \htreg)$ is long, and we relegate it to Section \ref{sec:long}.

With $E\{Y_i(z) \mid x_i\} = \mu_z + (x_i - \mu_x)^\T\gamma_z$, 
we have 
\beginy\label{eq:linear_0}
&&E\{Y_i(z) - \mu_z \mid x_i, Z_i\} =E\{Y_i(z) - \mu_z \mid x_i\}= (x_i - \mu_x)^\T\gamma_z,\nonumber\\ 
&& 
E\{Y_i'(z)- \mu_z' \mid x_i, Z_i\} = E\{Y_i'(z) - \mu_z' \mid x_i\}=E\{Y_i(z) - x_i^\T\gamma_z - (\mu_z - \mu_x^\T\gamma_z)  \mid x_i\} = 0.
\endy
This ensures 
\beginy\label{eq:b13}
b_{13} 
  &=& E  \left[ (1-p_i) \cdot Z_i \cdot \dyon  \cdot U_i^\T  \right] \nonumber \\
  &=&
  E  \left[ (1-p_i) \cdot Z_i \cdot E\left\{\yio - \mu_1\mid x_i, Z_i\right\}  \cdot U_i^\T  \right] = \gamma_1^\T E  \big\{ (1-p_i) \cdot Z_i \cdot (x_i - \mu_x) \cdot U_i^\T  \big\},\\
  b'_{13} 
  &=& E  \left[ (1-p_i) \cdot Z_i \cdot \{Y_i'(1) - \mu_1'\}  \cdot U_i^\T  \right]\nonumber \\
  &=&
  E  \left[ (1-p_i) \cdot Z_i \cdot E\left\{Y_i'(1) - \mu_1' \mid x_i, Z_i\right\}  \cdot U_i^\T  \right] = 0,\nonumber
\endy
and by symmetry
\beginy\label{eq:b23}
b_{23} 
 =  \gamma_0^\T E  \big\{ (1-p_i) \cdot (1-Z_i) \cdot (x_i - \mu_x) \cdot U_i^\T  \big\},\quad
  b'_{23} 
  = 0.
\endy
Consequently,  the expression of $v_\xreg$ simplifies to 
\begina
v_\xreg =  \dfrac{b'_{11}}{e^2} + \dfrac{b'_{22}}{(1-e)^2} +(\go-\gz)^\T\cov(x_i)(\go-\gz)
\enda
with
\beginy\label{eq:d_xreg_linear}
v_\ua -v_\xreg = \dfrac{b_{11}-b_{11}'}{e^2} + \dfrac{b_{22} - b_{22}'}{(1-e)^2} - \left(\dfrac{b_{13}}{e}-\dfrac{b_{23}}{1-e}\right)b_{33}^{-1}\left(\dfrac{b_{13}}{e}-\dfrac{b_{23}}{1-e}\right)^\T - (\go-\gz)^\T\cov(x_i)(\go-\gz).
\endy
Compare \eqref{eq:d_xreg_linear} with \eqref{eq:d_xps} to see
\beginy\label{eq:reg_ps}
&&  v_\xreg - v_\xps  \nonumber\\
&=& (v_\ua - v_\xps) - (v_\ua - v_\xreg) \nonumber\\
&=& \frac{\gamma_1^\T \cov(x_i) \gamma_1}{e}    + \frac{\gamma_0^\T \cov(x_i) \gamma_0}{1-e} - ( \gamma_1 - \gamma_0)^\T \cov(x_i)(\gamma_1- \gamma_0)\nonumber\\
&& -  \left\{\dfrac{b_{11}-b_{11}'}{e^2} + \dfrac{b_{22} - b_{22}'}{(1-e)^2} - \left(\dfrac{b_{13}}{e}-\dfrac{b_{23}}{1-e}\right)b_{33}^{-1}\left(\dfrac{b_{13}}{e}-\dfrac{b_{23}}{1-e}\right)^\T - (\go-\gz)^\T\cov(x_i)(\go-\gz)\right\}\nonumber\\
&=&\left(\dfrac{b_{13}}{e}-\dfrac{b_{23}}{1-e}\right)b_{33}^{-1}\left(\dfrac{b_{13}}{e}-\dfrac{b_{23}}{1-e}\right)^\T - \left\{  \dfrac{b_{11}-b_{11}'}{e^2} - \frac{\gamma_1^\T \cov(x_i) \gamma_1}{e} +
  \dfrac{b_{22} - b_{22}'}{(1-e)^2} - \frac{\gamma_0^\T \cov(x_i) \gamma_0}{1-e} \right\} .
\endy
We show in the following that the right-hand side of \eqref{eq:reg_ps} equals $-E(\Gamma^\T\Gamma)$.

First,  recall that 
\begina
A = \sqrt{p_i(1-p_i)} U_i^\T, \quad B = \sqrt{\dfrac{1-p_i}{p_i}} (x_i-\mu_x)^\T  \left( \frac{Z_i}{e}  \gamma_1- \frac{1-Z_i}{1-e} \gamma_0\right)
\enda
with $b_{33} = E\{p_i(1-p_i) \cdot U_i U_i^\T\} = E(A^\T A)$. 
It follows from \eqref{eq:b13} and \eqref{eq:b23} that 
\begina
\dfrac{b_{13}}{e}-\dfrac{b_{23}}{1-e} &=& 
\dfrac{ \gamma_1^\T E  \big\{ (1-p_i) \cdot Z_i \cdot (x_i - \mu_x) \cdot U_i^\T  \big\}}{e}-\dfrac{\gamma_0^\T E  \big\{ (1-p_i) \cdot (1-Z_i) \cdot (x_i - \mu_x) \cdot U_i^\T  \big\}}{1-e}\\
&=& E\left\{(1-p_i) \left( \frac{Z_i}{e}  \gamma_1- \frac{1-Z_i}{1-e} \gamma_0\right)^\T(x_i - \mu_x) \cdot U_i^\T  \right\}\\
&=& E(B^\T A).
\enda
This ensures
\beginy\label{eq:bb_2}
\left(\dfrac{b_{13}}{e}-\dfrac{b_{23}}{1-e}\right)b_{33}^{-1}\left(\dfrac{b_{13}}{e}-\dfrac{b_{23}}{1-e}\right)^\T
= E(B^\T A)\{E(A^\T A)\}^{-1} E(A^\T B) = E(B^\T B)-E(\Gamma^\T\Gamma)
\endy 
by Lemma \ref{lem:cs}.

Next,  it follows from the definition of $Y_i'(1)$ and $\mu_1'$ that
$
 \yio - \mu_1= Y_i'(1) - \mu_1' + (x_i-\mu_x)^\T\gamma_1
$ such that
\begina
\dyon^2 - \{Y_i'(1) - \mu_1'\}^2 
&=& \left[ \{Y_i'(1) - \mu_1'\} + (x_i-\mu_x)^\T\gamma_1\right]^2 - \{Y_i'(1) - \mu_1'\}^2  \\
&=& 2\{Y_i'(1) - \mu_1'\} (x_i-\mu_x)^\T\gamma_1 + \gamma_1^\T (x_i-\mu_x)(x_i-\mu_x)^\T\gamma_1
\enda
with 
\begina
E\big[ \dyon^2 - \{Y_i'(1) - \mu_1'\}^2  \mid x_i, Z_i\big] 
&=& 2 E\big\{Y_i'(1) - \mu_1'  \mid x_i, Z_i \big\} (x_i-\mu_x)^\T\gamma_1 +\gamma_1^\T (x_i-\mu_x)(x_i-\mu_x)^\T\gamma_1 \\
&\overset{\eqref{eq:linear_0}}{=}& \gamma_1^\T (x_i-\mu_x)(x_i-\mu_x)^\T\gamma_1 
\enda
from \eqref{eq:linear_0}. This ensures
\beginy\label{eq:bb}
b_{11}-b_{11}' &=& E \Big( p_i^{-1}\cdot Z_i \cdot \left[ \dyon^2 - \{Y_i'(1) - \mu_1'\}^2  \right] \Big) \nonumber\\
&=& E \Big( p_i^{-1}\cdot Z_i \cdot E\left[ \dyon^2 - \{Y_i'(1) - \mu_1'\}^2 \mid x_i, Z_i \right] \Big) \nonumber\\
&=& E \left\{ p_i^{-1}\cdot Z_i \cdot  \gamma_1^\T (x_i-\mu_x)(x_i-\mu_x)^\T\gamma_1 \right\}
\endy
and by symmetry 
\beginy\label{eq:bb_}
b_{22}-b_{22}' =  E \left\{ p_i^{-1}\cdot (1-Z_i) \cdot  \gamma_0^\T(x_i-\mu_x)(x_i-\mu_x)^\T\gamma_0\right\}.
\endy
Using the definition of $B$ and the basic fact that $Z_i(1-Z_i)=0$, we have  
\begina
B^\T B 
&=& \dfrac{1-p_i}{p_i} \cdot \left( \frac{Z_i}{e}  \gamma_1^\T - \frac{1-Z_i}{1-e} \gamma_0^\T\right) (x_i-\mu_x)(x_i-\mu_x)^\T \left( \frac{Z_i}{e}  \gamma_1  - \frac{1-Z_i}{1-e} \gamma_0 \right)\\
&=& \dfrac{1-p_i}{p_i} \cdot \left( \frac{Z_i}{e}\right)^2  \gamma_1^\T  (x_i-\mu_x)(x_i-\mu_x)^\T   \gamma_1   + \dfrac{1-p_i}{p_i} \cdot \left(  \frac{1-Z_i}{1-e} \right)^2  \gamma_0^\T  (x_i-\mu_x)(x_i-\mu_x)^\T   \gamma_0 \\
&& +\dfrac{1-p_i}{p_i} \cdot   \frac{Z_i}{e} \cdot  \frac{1-Z_i}{1-e} \cdot \Big\{\gamma_1^\T (x_i-\mu_x)(x_i-\mu_x)^\T  \gamma_0  + \gamma_0^\T (x_i-\mu_x)(x_i-\mu_x)^\T   \gamma_1\Big\} \\
&=& \dfrac{1-p_i}{p_i} \cdot  \frac{Z_i}{e^2} \cdot   \gamma_1^\T  (x_i-\mu_x)(x_i-\mu_x)^\T   \gamma_1   + \dfrac{1-p_i}{p_i} \cdot   \frac{1-Z_i}{(1-e)^2} \cdot  \gamma_0^\T  (x_i-\mu_x)(x_i-\mu_x)^\T   \gamma_0\\
&=&    \frac{p_i^{-1}Z_i}{e^2} \cdot   \gamma_1^\T  (x_i-\mu_x)(x_i-\mu_x)^\T   \gamma_1 - \frac{Z_i}{e^2} \cdot   \gamma_1^\T  (x_i-\mu_x)(x_i-\mu_x)^\T   \gamma_1  \\
&&+    \frac{p_i^{-1}(1-Z_i)}{(1-e)^2} \cdot  \gamma_0^\T  (x_i-\mu_x)(x_i-\mu_x)^\T   \gamma_0 -    \frac{1-Z_i}{(1-e)^2} \cdot  \gamma_0^\T  (x_i-\mu_x)(x_i-\mu_x)^\T   \gamma_0.
\enda
This, together with \eqref{eq:bb} and \eqref{eq:bb_},   ensures
\beginy\label{eq:bb_1}
E(B^\T B) = \frac{b_{11}-b_{11}'}{e^2} - \frac{\gamma_1^\T  \cov(x_i)   \gamma_1}{e} + \frac{b_{22} - b_{22}'}{(1-e)^2} - \frac{\gamma_0^\T \cov(x_i) \gamma_0}{1-e}. 
\endy
Plugging  \eqref{eq:bb_2} and \eqref{eq:bb_1} in \eqref{eq:reg_ps} verifies $v_\xreg - v_\xps=-E(\Gamma^\T\Gamma)$.

\subsubsection{Proof of Theorem \ref{thm:clt_app}\eqref{item:linear}, $  v_\ua - v_\xreg=  \va(\htn - \htreg)$}\label{sec:long}
We give in this subsection a direct proof of $  v_\ua - v_\xreg=  \va(\htn - \htreg)$ without invoking the semiparametric theory.

Define
\beginy\label{eq:dx}
\hdo  = \dfrac{\sumi Z_i \cdot \rp \cdot  (x_i - \bar x)  }{\sumi Z_i \cdot \rp}, \qquad 
\hdz   = \dfrac{\sumi (1-Z_i) \cdot \rp \cdot  (x_i - \bar x)  }{\sumi (1-Z_i) \cdot \rp}
\endy
to write 
\begina
\hywlsxo = \hyipwo - \hdo^\T \hat \gamma_1, \qquad 
\hywlsxz = \hyipwz - \hdz^\T\hat\gamma_0
 \enda
 by Proposition \ref{prop:hajek_app} and Lemma \ref{lem:xreg_num}.
This ensures
\begina
\ \htn  - \htreg  = \Delta, \quad \text{where}\quad \Delta = 
 \hdo ^\T \hgo - \hdz  ^\T \hgz. 
\enda
Then 
\beginy\label{eq:v_wls_decomp}
\va(\htn ) = \va(\htreg ) + 2  \cova(\htreg , \Delta) + \va(\Delta).
\endy
To verify the result, it suffices to show that $\cova(\htreg , \Delta) = 0$. 
We do this in the following three steps. 
\begine[(i)]
\item\label{step:2} Show that the asymptotic distribution of $\sqrt N( \htreg ,   \Delta) $ equals that of 
$\sqrt N ( \htau'_\xreg,  \Delta') $, 
where $( \htau'_\xreg,  \Delta')$ are variants of $( \htreg ,  \Delta) $ with $(\hgo, \hgz)$ replaced by their probability limits $(\gos, \gzs)$. That is, 
\begina
\htau'_\xreg = \hywlsxop  - \hywlsxzp
\enda
with 
$
\hywlsxop  = \hyipwo - \hdo^\T \gos$ and $\hywlsxzp  = \hyipwz - \hdz^\T \gzs$, and 
\begina
\Delta' = \hdo^\T\gos - \hdz^\T \gzs. 
\enda 
 
\item\label{step:3} Compute the asymptotic joint distribution of $(\hywlsxop , \hdo , \hywlsxzp , \hdz )$ by m-estimation. 
\item\label{step:4} Compute $\cova(\htreg , \Delta)$ 
based on 
\beginy\label{eq:true_cov}
\cova(\htreg , \Delta) &=& \cova( \htau'_\xreg, \ \Delta') \nonumber \\
&=& \cova\left\{\hywlsxop  - \hywlsxzp , \ \hdo ^\T \gos - \hdz  ^\T \gzs \right\} \nonumber\\
&=& \cova\left\{\hywlsxop , \ \hdo  \right\} \gos -\cova\left\{\hywlsxop   , \  \hdz \right\}\gzs \nonumber \\
&&- \cova\left\{ \hywlsxzp , \ \hdo   \right\} \gos+\cova\left\{ \hywlsxzp , \ \hdz   \right\}\gzs.
\endy
\ende
 
\noindent\textbf{\underline{Step \eqref{step:2}.}}

By Slutsky's theorem,   it suffices to show that
\beginy
\sqrt N
\left\{\beginp 
\htreg '   \\
\Delta'  
\endp - 
\beginp 
 \htreg \\
 \Delta
\endp \right\} = \op. \label{eq:step1}
\endy
To this end, observe that 
\begina
\htreg ' + \Delta' = \htn  = \htreg  + \Delta.  
\enda
This ensures 
\begina
\htreg ' - \htreg  = - ( \Delta' - \Delta) = \hdo^\T(\hgo-\gos) - \hdz^\T (\hgz - \gzs) 
\enda
with 
\begina
\sqrt N
\left\{\beginp 
\htreg '   \\
\Delta'  
\endp - 
\beginp 
 \htreg \\
 \Delta
\endp \right\} = \beginp 1\\-1\endp \left\{ \sqrt N \hdo^\T(\hgo-\gos) - \sqrt N\hdz^\T (\hgz - \gzs)  \right\}.
\enda
With $\hgo - \gos = \op$ and $\hgz -\gzs = \op$ from \eqref{eq:clt_wlsx}, by Slutsky's theorem, a sufficient condition for \eqref{eq:step1} is that  $\sqrt N \hdo $ and $\sqrt N \hdz $ are both asymptotically   normal with mean zero. We verify this below by computing the asymptotic joint distribution of $\sqrt N \hdo $ and $\sqrt N \hdz $. 

From \eqref{eq:dx}, we have 
 \begina
0 &=& \sumi Z_i \cdot \rp \cdot  \left\{x_i - \bar x - \hdo \right\},\\
0 &=&  \sumi (1-Z_i) \cdot \rp \cdot  \left\{x_i - \bar x- \hdz  \right\},
\enda
where $\hpi = p(x_i, Z_i; \hat\beta)$. 
This, together with Lemma \ref{lem:beta},  ensures $ (\delta_1, \delta_0, \mux, \beta) = (\hdo, \hdz, \bar x, \hat\beta)$ jointly solves 
\beginy\label{eq:psi_step1}
 0 = \sumi \phi(x_i, Z_i, \ryi; \theta)
\endy
with $\theta = (\delta_1, \delta_0, \mux, \beta)$ and 
\begina
 \phi =  
 \beginp 
 \phi_1\\
 \phi_0\\
 \psi_x\\
 \psi_\beta
 \endp
 \quad\text{with}\quad
 \begin{array}{lll}
\phi_1 &=& Z_i \cdot \rpfun \cdot \left(x_i - \mux - \delta_1\right),\\
\phi_0 &=& (1-Z_i) \cdot \rpfun \cdot \left(x_i - \mux - \delta_0\right),\\
\psi_x &=& x_i - \mux,\\
\psi_\beta &=& U_i \left\{\ryi - \pfun\right\}.
\end{array}
\enda
Recall that $\mux^*  = E(x_i)$ and $\beta^*$ is the true value of $\beta$ under Assumption \ref{assm:riy}. 
Direct algebra ensures that $\theta = \theta^* = (0, 0, \mux^*, \beta^*)$ solves $E\{\psi (\theta)\} = 0$. 
The theory of m-estimation ensures  
\begina
\sqrt N\left\{\beginp 
\hdo \\
\hdz  \\
\bar x\\
\hat\beta
\endp - \beginp 0\\0\\ \muxs \\ \beta^* \endp  \right\}  \rs  \mn\left\{ 0, \abax \right\},
\enda
where $A^*$ and $B^*$ are the values of $- E\left(\pd{(\delta_1, \delta_0, \mux, \beta)} \phi \right)$ and $E(\phi \phi^\T)$ evaluated at $\theta = \theta^*$. This completes step \eqref{step:2}.

\bigskip

\noindent\textbf{\underline{Step \eqref{step:3}.}}

By definition, we have 
\begina
\hywlsxop + \hdo^\T \gos =  \hyipwo =  \dfrac{\sumi Z_i \cdot \rp \cdot  Y_i(1)}{\sumi Z_i \cdot \rp} 
\enda
with 
\begina
0 = \sumi Z_i \cdot \rp \cdot \left\{ Y_i(1) - \hywlsxop  - \hdo^\T \gos\right\}.
\enda 
By symmetry, we also have 
\begina
0 = \sumi (1-Z_i) \cdot \rp \cdot \left\{ Y_i(0) - \hywlsxzp  - \hdz^\T \gzs\right\}.
\enda 
This, along with \eqref{eq:psi_step1}, ensures that 
$(\mu_1, \delta_1, \mu_0, \delta_0, \mux, \beta) = (\hywlsxop  , \hdo, \hywlsxzp   ,\hdz, \bar x, \hat\beta)$ jointly solves 
\begina
 0 = \sumi \psi\left( x_i, Y_i(1), Y_i(0), Z_i, \ryi; \theta\right),
\enda
with $\theta = (\mu_1, \delta_1, \mu_0, \delta_0, \mux, \beta)$ and 
\beginy\label{eq:ee_wlsx_true}
\psi =\beginp 
 \psi_1\\
 \phi_1\\
  \psi_0\\
 \phi_0\\
 \psi_x\\
 \psi_\beta
 \endp\quad\text{with}\quad 
 \begin{array}{lll}
\psi_1 &=& Z_i \cdot \rpfun \cdot \left\{ Y_i(1) - \mu_1  - \delta_1^\T \gos\right\},\\
\phi_1 &=& Z_i \cdot \rpfun \cdot \left(x_i - \mux - \delta_1\right),\\
\psi_0 &=& (1-Z_i) \cdot \rpfun \cdot \left\{ Y_i(0) - \mu_0  - \delta_0^\T \gzs\right\},\\
\phi_0 &=& (1-Z_i) \cdot \rpfun \cdot \left(x_i - \mux - \delta_0\right),\\
\psi_x &=& x_i - \mux,\\
\psi_\beta &=& U_i \left\{\ryi - \pfun\right\}.
\end{array}
\endy
Recall that $\muos  = E\{Y_i(1)\}$, $\muzs  = E\{Y_i(0)\}$, $\muxs = E(x_i)$, and $\beta^*$ is the true value of $\beta$ under Assumption \ref{assm:riy}. Direct algebra ensures that $\theta^* = (\muos, 0, \muzs, 0, \mux^*, \beta^*)$ solves $E\{\psi (\theta)\} = 0$. 
The theory of m-estimation ensures  
\beginy\label{eq:clt_wlsx_true}
\sqrt N\left\{\beginp 
\hywlsxop \\
\hdo \\
\hywlsxzp  \\
\hdz  \\
\bar x\\
\hat\beta
\endp - \beginp \muos\\ 0\\\muzs\\0\\ \muxs \\ \beta^* \endp  \right\}  \rs  \mn\left\{ 0, \ghg \right\},
\endy
where $G^* $ and $H^* $ are the values of $ - E\left(\pd{(\mu_1, \delta_1, \mu_0, \delta_0, \mux, \beta)} \psi \right)$ and $ E(\psi \psi^\T)$ evaluated at $\theta = \theta^*$. We compute below $G^*$ and $H^*$, respectively. 

\bigskip
\noindent\underline{\textbf{Compute  $H^*$.}}

Let $(\psi^*, \psi_1^*, \phi_1^*, \psi_0^*, \phi_0^*, \psi_x^*, \psi_\beta^*)$ denote the value of $(\psi, \psi_1 , \phi_1, \psi_0, \phi_0, \psi_x, \psi_\beta)$ evaluated at $\theta =\theta^*$. 
Recall $p_i = p(x_i, Z_i; \beta^*)$. 
From \eqref{eq:ee_wlsx_true}, we have 
\begina
\psi^*= \beginp 
 \psi^*_1\\
 \phi^*_1\\
  \psi^*_0\\
 \phi^*_0\\
 \psi^*_x\\
 \psi^*_\beta
 \endp\quad\text{with}\quad 
 \begin{array}{lll}
\psi_1^* &=& Z_i \cdot \rpn \cdot \dyo,\\
\phi_1^* &=& Z_i \cdot \rpn \cdot \left(x_i - \muxs \right),\\
\psi_0^* &=& (1-Z_i) \cdot \rpn \cdot \dyz,\\
\phi_0^* &=& (1-Z_i) \cdot \rpn \cdot \left(x_i - \muxs   \right),\\
\psi_x^* &=& x_i - \muxs,\\
\psi_\beta^* &=& U_i \left(\ryi - p_i\right),
\end{array}
\enda
where $\psi_1^*$, $\psi_0^*$, $\psi_x^*$, and $\psi_\beta^*$ coincide with those in \eqref{eq:psistar_wls} and \eqref{eq:psistar_xreg} in the proof of Theorem \ref{thm:clt_app}.
The $H^*$ matrix equals
\begina
H^* = \left.E\left(\psi \psi^\T\right)\right|_{\theta = \theta^*} = E\left(\psi^* \psi^{*\T}\right) =
\beginp
h_{11} & h_{12} & 0 & 0 & h_{15} & h_{16} \\
h_{12}^\T & h_{22} & 0 & 0 & h_{25} & h_{26} \\
0 & 0 & h_{33} & h_{34} & h_{35} & h_{36} \\
0 & 0 & h_{34}^\T & h_{44} & h_{45} & h_{46} \\
h_{15}^\T & h_{25}^\T & h_{35}^\T & h_{45}^\T & h_{55} & h_{56} \\
h_{16}^\T & h_{26}^\T & h_{36}^\T & h_{46}^\T & h_{56}^\T & h_{66} 
\endp 
\enda 
with 
\begina
h_{11} &=& E(\psi_1^* \psi_1^*)   = b_{11} \quad\text{by \eqref{eq:bs}} ,\\
h_{12} &=& E(\psi_1^* \phi_1^{*\T}) = E\left[Z_i \cdot \dfrac{\ryi}{p_i^2} \cdot \dyo\dxit \right] = E\left[Z_i \cdot p_i^{-1}  \cdot\dyo\dxit \right],\\
h_{15} &=& E(\psi_1^* \psi_x^{*\T}) \\
&=& E\left[Z_i \cdot \rpn  \cdot\dyo\dxit \right] = E\left[Z_i  \cdot \dyo\dxit \right] = e  \cdot \gamma_1^{*\T}\cov(x_i),\\
h_{16} &=& E(\psi_1^* \psi_\beta^{*\T})   = b_{13} \quad\text{by \eqref{eq:bs}},\\\\
h_{22} &=& E(\phi_1^* \phi_1^{*\T}) = E\left[Z_i \cdot \dfrac{\ryi}{p_i^2} \dxi\dxit \right] = E\left[Z_i \cdot p_i^{-1}  \cdot \dxi\dxit\right],\\
h_{25} &=& E(\phi_1^* \psi_x^{*\T}) = E\left[Z_i \cdot \rpn  \cdot\dxi\dxit \right] = E\left[Z_i  \cdot \dxi\dxit \right] = e  \cdot \cov(x_i),\\
h_{26} &=& E(\phi_1^* \psi_\beta^{*\T}) = E\left[Z_i \cdot \rpn  \cdot \dxi \cdot \left( \ryi - p_i \right) U_i^\T \right] \\
&=& E\left[Z_i  \cdot \rpn (1-p_i ) \cdot \dxi \cdot U_i^\T \right] = E\left[Z_i  \cdot  (1-p_i ) \cdot \dxi \cdot U_i^\T \right],
\enda
and
\begina
h_{55} &=&   E(\psi_x^* \psi_x^{*\T}) =  \cov(x_i),\\
h_{56} &=& E(\psi_x^* \psi_\beta^{*\T}) = E\left[ \dxi \cdot \left( \ryi - p_i \right) U_i^\T \right] = 0,\\
h_{66} &=& E(\psi_\beta^* \psi_\beta^{*\T}) = b_{33} \quad\text{by \eqref{eq:bs}}. 
\enda
We can compute $(h_{33}, h_{34}, h_{35}, h_{36})$ and $(h_{44}, h_{45}, h_{46})$ by symmetry.  
This ensures 
\beginy\label{eq:B_wlsx_true}
H^* =
\beginp
h_{11} & h_{12}  & 0 & 0 & e \cdot \gost \cov(x_i) & h_{16} \\
h_{12}^\T & h_{22} & 0 & 0 & e\cdot \cov(x_i)  & h_{26} \\
0 & 0 & h_{33} & h_{34} & (1-e) \cdot\gzst \cov(x_i) & h_{36} \\
0 & 0 & h_{34}^\T  & h_{44} & (1-e)\cdot \cov(x_i)  & h_{46} \\
e \cdot\gos \cov(x_i) &e  \cdot\cov(x_i)  & (1-e)  \cdot \gzs\cov(x_i) &  (1-e) \cdot \cov(x_i)  & \cov(x_i) & 0 \\
h_{16}^\T & h_{26}^\T & h_{36}^\T & h_{46}^\T & 0 & h_{66} 
\endp,  \quad 
\endy 
where 
\begina
h_{11} = b_{11}, \quad h_{16} = b_{13}, \quad h_{33} = b_{22}, \quad h_{36} = b_{23}, \quad h_{66} = b_{33}. 
\enda
\bigskip 

\noindent
\underline{\textbf{Compute   $G^*$.}}

With a slight abuse of notation, let  $\pd{\mu_1} \psi_1^*$ denote the value of $\pd{\mu_1} \psi_1 $ evaluated at $\theta=\theta^*$. Similarly define other partial derivatives. 
From \eqref{eq:ee_wlsx_true}, we have 
\renewcommand{\arraystretch}{1.5}
\begina
\left.\pd{(\mu_1, \delta_1, \mu_0, \delta_0, \mux, \beta)}\psi\right|_{\theta = \theta^*} = 
\beginp
\pd{\mu_1}\psi_1^* &  \pd{\delta_1^\T}\psi_1^*  & 0 & 0 & 0& \pd{\beta^\T} \psi_1^* \\
0  & \pd{\delta_1^\T}\phi_1^* & 0 & 0 & \pd{\mux^\T} \phi_1^* &\pd{\beta^\T} \phi_1^*\\
0 & 0 & \pd{\mu_0}\psi_0^* & \pd{\delta_0^\T}\psi_0^* & 0  & \pd{\beta^\T}\psi_0^*\\
0 & 0 & 0 & \pd{\delta_0^\T}\phi_0^* &  \pd{\mux^\T} \phi_0^* & \pd{\beta^\T}\phi_0^*\\
0 & 0 & 0 & 0 & \pd{\mux^\T}\psi_x^*  & 0\\
0 & 0 & 0 & 0 & 0 & \pd{\beta^\T}\psi_\beta^*
\endp
\enda 
with 
\begina
\pd{\delta_1^\T}\psi_1^* &=& - Z_i \cdot \rpn \gost,\\
 \pd{\delta_1^\T}\phi_1^* &=&  - Z_i \cdot \rpn \cdot I_J\\
 \pd{\mux^\T}\phi_1^* &=& - Z_i \cdot \rpn \cdot I_J,\\
  \pd{\beta^\T}\phi_1^* &=&- Z_i \cdot \rpn  (1-p_i) \cdot \left(x_i - \mux  \right)  \cdot U_i^\T. 
\enda
This ensures 
\begina
G^*& =& - \left. E \left( \pd{(\mu_1, \delta_1, \mu_0, \delta_0, \mux, \beta)}\psi \right) \right|_{\theta = \theta^*} =  
 -E \left(  \left.\pd{(\mu_1, \delta_1, \mu_0, \delta_0, \mux, \beta)}\psi\right|_{\theta = \theta^*} \right) \\
 &=&
 \left(
\begin{array}{cc|cc|cc}
g_{11} & g_{12} & 0 & 0 & 0& g_{16} \\
0& g_{22} & 0 &0 & g_{25} &g_{26}\\\hline
0 & 0 &g_{33} & g_{34} & 0 & g_{36}\\
0 & 0 & 0 & g_{44} & g_{45}  & g_{46}\\\hline
0 & 0 & 0 & 0 & g_{55}  & 0\\
0 & 0 & 0 & 0 & 0 & g_{66}
\end{array}\right)
\enda 
with 
\begina
g_{11} &=& -E\left( \pd{\mu_1}\psi_1^* \right) = e\quad\text{by \eqref{eq:as}},\\
g_{12} &=& -E\left( \pd{\delta_1^\T}\psi_1^*  \right) = E \left( Z_i \cdot \rpn\cdot \gost  \right) = e \cdot  \gost,\\
g_{16} &=& - E\left( \pd{\beta^\T} \psi_1^* \right) = b_{13} = h_{16} \quad\text{by \eqref{eq:as}},\\
g_{22} &=& - E\left( \pd{\delta_1^\T}\phi_1^* \right) = E \left( 
Z_i \cdot \rpn \cdot I_J \right)=  e\cdot I_J,\\
g_{25} &=& - E\left( \pd{\mux^\T}\phi_1^* \right) = E \left( 
Z_i \cdot \rpn \cdot I_J \right)= e\cdot I_J,\\
g_{26} &=& - E\left( \pd{\beta^\T}\phi_1^* \right) = E \left\{ Z_i \cdot  (1-p_i) \cdot \dxi  \cdot U_i^\T\right\} = h_{26},\\
g_{55} &=&- E \left(\pd{\mux^\T }\psi_x^* \right) = I_J,\\
g_{66} &=& - E\left(\pd{\beta}\psi_\beta^*\right) = b_{33} = h_{66}. 
\enda
We can similarly compute $(g_{33}, g_{34}, g_{36})$ and $(g_{44}, g_{45}, g_{46})$ by symmetry. 
This ensures 
\begina
G^* = \left( \begin{array}{cccc|cc}
e & e\cdot \gost  & 0  & 0 & 0& h_{16} \\
0 & e \cdot I_J  &0 & 0 & e \cdot I_J & h_{26}\\
0&  0 &(1-e) & (1-e) \cdot \gzst & 0 &h_{36}\\
0 & 0 & 0 & (1-e)\cdot I_J   & (1-e)\cdot I_J   & h_{46}\\\hline
0 & 0 & 0 & 0 & I_J & 0\\
0 & 0 & 0 & 0 & 0 & h_{66}
\end{array}\right) 
= \left(\begin{array}{c| c}
G_{(11)} & G_{(12)} \\\hline 
0 & G_{(22)}
\end{array}\right) 
\enda
with 
\beginy\label{eq:G_wlsx_true}
G^{*-1} 
&=&  \beginp
G_{(11)}^{-1} &  -G_{(11)}^{-1}G_{(12)} G_{(22)}^{-1} \\
0 & G_{(22)}^{-1}
\endp\nonumber\\
&=& \left(\begin{array}{cc cc|c c}
e^{-1} & - e^{-1} \gost & 0 & 0 & \gost  & -e^{-1}\left( h_{16} - \gost h_{26}\right) b^{-1}_{66} \\
0 &e^{-1} I_J  & 0 & 0 & -I_J &- e^{-1} h_{26} b^{-1}_{66} \\ 
0 &0 &(1-e)^{-1} & -(1-e)^{-1}\gzst & \gzst & -(1-e)^{-1}\left( h_{36} - \gzst h_{46}\right) b^{-1}_{66}\\
0 &0 & 0&(1-e)^{-1}I_J & - I_J  & -(1-e)^{-1} h_{46}b^{-1}_{66} \\\hline
0 &0& 0 & 0 & I_J &0\\ 
0& 0 & 0 & 0 & 0 & b^{-1}_{66} \\
\end{array}\right).
\endy
Equations \eqref{eq:clt_wlsx_true}--\eqref{eq:G_wlsx_true} together complete step \eqref{step:3}.

\bigskip 
\bigskip 
\noindent\underline{\textbf{Step \eqref{step:4}.}}
%
%
Recall 
\begina
\cova(\htreg , \Delta)  
&=& \cova\left\{\hywlsxop , \ \hdo  \right\} \gos -\cova\left\{\hywlsxop   , \  \hdz \right\}\gzs \\
&&- \cova\left\{ \hywlsxzp , \ \hdo   \right\} \gos+\cova\left\{ \hywlsxzp , \ \hdz   \right\}\gzs 
\enda
from \eqref{eq:true_cov}. 
We compute in the following the four asymptotic covariances on the right-hand side one by one. 

First, it follows from 
 \begina
 \hywlsxop =(1, 0, 0, 0, 0, 0 ) 
\beginp 
\hywlsxop \\
\hdo \\
\hywlsxzp  \\
\hdz  \\
\bar x\\
\hat\beta
\endp, \quad 
\hdo  = (0 , I_{J}, 0 , 0 , 0 , 0  )\beginp 
\hywlsxop \\
\hdo \\
\hywlsxzp  \\
\hdz  \\
\bar x\\
\hat\beta
\endp 
\enda
that 
\begina
\cova\left\{\hywlsxop , \ \hdo  \right\}  =  (1, 0, 0, 0, 0, 0) \cdot \ghg \cdot \beginp
0\\
I \\
0\\
0\\
0\\
0
\endp,
\enda
where 
\begina
 (1, 0, 0, 0, 0, 0 ) \cdot G^{*{-1}} &=& 
\beginp
e^{-1}, & - e^{-1} \gost, & 0, & 0, & \gost,  & -e^{-1}\left( h_{16} - \gost  h_{26}\right) b^{-1}_{66} 
\endp,\nonumber\\
 (0 , I_{J}, 0 , 0 , 0 , 0  ) \cdot G^{*{-1}} &=& \beginp
0, &e^{-1} I_J,  & 0, & 0, & -I_J, & -e^{-1} h_{26} b^{-1}_{66}
\endp 
\enda
from  \eqref{eq:G_wlsx_true}. 
This, together with \eqref{eq:B_wlsx_true}, ensures
\beginy\label{eq:true_cov11}
&& \cova\left\{\hywlsxop , \ \hdo  \right\} \nonumber\\
&=& \beginp
e^{-1}, & - e^{-1} \gost, & 0, & 0, & \gost,  & -e^{-1}\left( h_{16} - \gost h_{26}\right) b^{-1}_{66} 
\endp\nonumber\\
&&\cdot \beginp
h_{11} & h_{12}  & 0 & 0 & e \cdot \gost\cov(x_i) & h_{16} \\
h_{12}^\T & h_{22} & 0 & 0 & e\cdot \cov(x_i)  & h_{26} \\
0 & 0 & h_{33} & h_{34} & (1-e) \cdot \gzst\cov(x_i) & h_{36} \\
0 & 0 & h_{34}^\T  & h_{44} & (1-e)\cdot \cov(x_i)  & h_{46} \\
e \cdot \gos \cov(x_i)&e\cdot \cov(x_i)  & (1-e) \cdot \gzs\cov(x_i) &  (1-e)\cdot \cov(x_i)  & \cov(x_i) & 0 \\
h_{16}^\T & h_{26}^\T & h_{36}^\T & h_{46}^\T & 0 & h_{66} 
\endp 
\beginp
0\\
e^{-1} I_J \\
0 \\
0\\
-I_J\\
-e^{-1} h_{66}^{-1}h_{26}^\T
\endp \nonumber\\
&=& 
 \beginp
e^{-1}, & - e^{-1} \gost, & 0, & 0, & \gost,  & -e^{-1}\left( h_{16} - \gost h_{26}\right) b^{-1}_{66} 
\endp
\beginp
e^{-1}h_{12} - e\cdot \gost  \cov(x_i) - e^{-1}h_{16} h_{66}^{-1}h_{26}^\T\\
e^{-1}h_{22} - e\cdot\cov(x_i) - e^{-1} h_{26}h_{66}^{-1}h_{26}^\T\\
-(1-e)\cdot \gzst \cov(x_i) - e^{-1} h_{36}h_{66}^{-1}h_{26}^\T\\
-(1-e)\cdot \cov(x_i) - e^{-1} h_{46}h_{66}^{-1}h_{26}^\T\\
0\\
0
\endp \nonumber\\
&
=& e^{-2}h_{12} - \gost\cov(x_i) - e^{-2} h_{16}h_{66}^{-1} h_{26}^\T - e^{-2}\gamma_1^\T h_{22} + \gost\cov(x_i) + e^{-2} \gost h_{26}h_{66}^{-1} h_{26}^\T\nonumber\\
&
=& e^{-2}\left\{ \left( h_{12}    -  \gost h_{22}  \right) - \left( h_{16}   -   \gost h_{26}\right)h_{66}^{-1} h_{26}^\T\right\} .
\endy

Similarly, 
\beginy\label{eq:true_cov01}
&& \cova\left\{\hywlsxzp , \ \hdo  \right\}\nonumber \\
&=& \beginp
0, &0, &(1-e)^{-1}, & -(1-e)^{-1}\gzst, & \gzst, & -(1-e)^{-1}\left( h_{36} - \gzst h_{46}\right) b^{-1}_{66} \endp \nonumber \\
&&\cdot \beginp
h_{11} & h_{12}  & 0 & 0 & e \cdot \gost\cov(x_i) & h_{16} \\
h_{12}^\T & h_{22} & 0 & 0 & e\cdot \cov(x_i)  & h_{26} \\
0 & 0 & h_{33} & h_{34} & (1-e) \cdot \gzst\cov(x_i) & h_{36} \\
0 & 0 & h_{34}^\T  & h_{44} & (1-e)\cdot \cov(x_i)  & h_{46} \\
e \cdot \gos \cov(x_i)&e\cdot \cov(x_i)  & (1-e) \cdot \gzs\cov(x_i) &  (1-e)\cdot \cov(x_i)  & \cov(x_i) & 0 \\
h_{16}^\T & h_{26}^\T & h_{36}^\T & h_{46}^\T & 0 & h_{66} 
\endp 
\beginp
0\\
e^{-1} I_J \\
0 \\
0\\
-I_J\\
-e^{-1} h_{66}^{-1}h_{26}^\T
\endp\nonumber \\
&=& 
\beginp
0, &0, &(1-e)^{-1}, & -(1-e)^{-1}\gzst, & \gzst, & -(1-e)^{-1}\left( h_{36} - \gzst h_{46}\right) b^{-1}_{66} \endp \nonumber \\
&&
\cdot 
\beginp
e^{-1}h_{12} - e\cdot \gost  \cov(x_i) - e^{-1}h_{16} h_{66}^{-1}h_{26}^\T\\
e^{-1}h_{22} - e\cdot\cov(x_i) - e^{-1} h_{26}h_{66}^{-1}h_{26}^\T\\
-(1-e)\cdot \gzst \cov(x_i) - e^{-1} h_{36}h_{66}^{-1}h_{26}^\T\\
-(1-e)\cdot \cov(x_i) - e^{-1} h_{46}h_{66}^{-1}h_{26}^\T\\
0\\
0
\endp \nonumber\\
&
=&   - \gzst \cov(x_i) - (1-e)^{-1}e^{-1} h_{36}h_{66}^{-1} h_{26}^\T  + \gzst\cov(x_i) + (1-e)^{-1}e^{-1} \gzst h_{46}h_{66}^{-1} h_{26}^\T\nonumber\\
&
=&    - (1-e)^{-1}e^{-1} \left(  h_{36}  - \gzst h_{46}\right)h_{66}^{-1} h_{26}^\T.
\endy

Equations \eqref{eq:true_cov11}--\eqref{eq:true_cov01} ensure
\begina
 \cova\left\{\hywlsxop , \ \hdo  \right\} 
&
=& e^{-2} \left\{ \left( h_{12}    -  \gost h_{22}  \right) - \left( h_{16}   -   \gost h_{26}\right)h_{66}^{-1} h_{26}^\T\right\},\\
\cova\left\{\hywlsxzp , \ \hdo  \right\} 
&
=& - (1-e)^{-1}e^{-1} \left(  h_{36}  - \gzst  h_{46}\right)h_{66}^{-1} h_{26}^\T.
\enda
By symmetry, we have
\begina
\cova\left\{\hywlsxzp , \ \hdz  \right\}  
&
=& (1-e)^{-2} \left\{\left( h_{34}   -\gzst h_{44} \right)-\left(  h_{36}- \gzst h_{46}\right)h_{66}^{-1} h_{46}^\T\right\},\\
\cova\left\{\hywlsxop , \ \hdz  \right\} 
&
=&- (1-e)^{-1}e^{-1} \left(  h_{16}   - \gost h_{26}\right)h_{66}^{-1} h_{46}^\T.
\enda
 When the true outcome models are  indeed linear as $E\{Y(z) \mid x_i\} = \mu_z + (x_i-\mu_x)^\T\gamma_z \ (z=0,1)$,  we have 
 \begina
 h_{12} = \gost h_{22},\qquad h_{16} =  \gost h_{26}, \qquad 
 h_{34} = \gzst h_{44},\qquad h_{36} = \gzst h_{46}
 \enda
such that 
\begina
 \cova\left\{\hywlsx'(z) , \ \hywlsx'(z')  \right\} =  0\quad \text{for all $z, z' = 0,1$}. 
\enda
This ensures $
\cova(\htreg , \Delta) = 0$ 
by \eqref{eq:true_cov}.

\section{Proof of the rest of the results}\label{sec:proof_other}
\subsection{Proof of Theorem \ref{thm:ipwx_clt_xy} and Proposition \ref{prop:invariance} in the main paper}\label{sec:main}

\begin{proof}[Proof of Theorem \ref{thm:ipwx_clt_xy}]
Given $\riw \indep Z_i$, $x_i^\aug = (x_i , \riw, \wiz)$ and $x_i^\sub$, as a subset of $x_i^\aug$, are both independent and identically distributed across all $N$ units and unaffected by the treatment assignment. 
We can hence view them as fully observed baseline covariates, analogous to the $x_i$ in Assumption \ref{assm:riy}. 
The result follows from Proposition \ref{prop:eps_eff} after renewing $x_i^\mim$ as our new fully observed covariates.
\end{proof}

\begin{proof}[Proof of Proposition \ref{prop:invariance}]
Observe that the column space of $\{x_i^\mim(c)\}_{i=1}^N$ is identical to the column space of $(x_i^\mim)_{i=1}^N$. 
From Lemma \ref{lem:mle}, the $\hat p_i$ estimated by $
\glmt\{\riy \sim 1 + x_i^\mim(c) + Z_i + x_i^\aug(c) Z_i\} $ and the $\hei $ estimated by $
\glmt\{Z_i \sim 1 + x_i^\mim(c)\}$ are constant across all $c \in \mathbb R^K$. This ensures the invariance of $\hteps(x_i^\mim(c))$ across all choices of $c$.  
\end{proof}
\subsection{Proofs of the additional results in Section \ref{sec:additional results}}\label{sec:additional}

The proof of Proposition \ref{prop:hajek_app} follows from standard properties of least squares. We omit the details.

\begin{proof}[Proof of Proposition \ref{prop:wlsx_dr}]
Observe that 
$\hat m_1(x_i)$ and $\hat m_0(x_i)$ coincide with the estimates of the outcome models from the weighted-least-squares fits of \eqref{eq:wls_x_1_2} and \eqref{eq:wls_x_0_2}, respectively, 
with 
\beginy\label{eq:mhat}
\hat m_1(x_i) = \hat Y_\xreg(1) + (x_i-\bar x)^\T\hat\gamma_1, \quad \hat m_0(x_i) =  \hat Y_\xreg(0) + (x_i-\bar x)^\T\hat\gamma_0.
\endy
The first-order conditions of \eqref{eq:wls_x_1_2} and \eqref{eq:wls_x_0_2} ensure 
\begina
 \sumi Z_i \cdot \rp \cdot \left\{Y_i(1)-\hmo(x_i) \right\}  =  
\sumi (1-Z_i) \cdot \rp \cdot \left\{Y_i(0) - \hmz(x_i) \right\}  = 0.
\enda
This ensures 
\begina
\hy_\dr(1)= \meani \hat m_1(x_i)\overset{\eqref{eq:mhat}}{=} \hat Y_\xreg(1), \quad \hy_\dr(0)= \meani \hat m_0(x_i)\overset{\eqref{eq:mhat}}{=} \hat Y_\xreg(0)
\enda
with 
\begina
\hy_\dr(1) - \hy_\dr(0) 
= \hat Y_\xreg(1) - \hat Y_\xreg(0)\overset{\text{Lemma \ref{lem:xreg_num}}}{=} \htreg.
\enda

\end{proof}

The proof of Proposition \ref{prop:reg-ps} is almost identical to the proof of Proposition \ref{prop:wlsx_dr} and omitted.

\begin{proof}[Proof of Proposition \ref{prop:eps_eff}]  
Almost identical proof as that for $\hteps$ in Section \ref{sec:ipwx_proof} ensures that 
$\htau_{\ipwx}'$ is consistent and  asymptotically normal with asymptotic variance 
$
\va(\hteps') = v_\ua - e(1-e)\var \{ \textup{proj}(\tyi \mid 1, x'_i)  \}
$. 
The asymptotic efficiency of $\hteps$ over $\hteps'$ then follows from $\var \{ \textup{proj}(\tyi \mid 1, x'_i)  \} \leq \var \{ \textup{proj}(\tyi \mid 1, x_i)  \}$ by properties of linear projection. 

In particular, given $x_i'$ as the alternative covariate vector that may be of different dimension from $x_i$,  let $\hat\alpha'$ denote the coefficient vector from $\glmt(Z_i \sim 1 +x'_i)$ over $i = \ot{N}$, including the intercept; let $\alpha'$ denote the new parameter of \eqref{eq:e_logit} corresponding to $x_i'$; let $\alpha^{*\prime}$ denote the true value of the parameter under Bernoulli randomization with $e(x_i', \alpha^{*\prime}) = e$. 
By similar reasoning as the proof of $\hteps$  in Section \ref{sec:ipwx_proof}, we have  
$\hteps' = \hy_\xeps(1)' - \hy_\xeps(0)'$ and  $(\mu_1, \mu_0, \beta, \alpha') =(\hyipwxo' , \hyipwxz'  , \hat\beta, \hat\alpha')$ jointly solves 
\begina
0 = \meani \phi' \big(Y_i(1), Y_i(0), x_i, Z_i, \ryi; \mu_1, \mu_0, \beta, \alpha \big),
\enda
where \renewcommand{\arraystretch}{1.5}
\beginy\label{eq:ee_ipwx}
\phi' = \beginp  \phi'_1 \\ \phi'_0 \\ \psi_\beta\\\psi'_\alpha\endp 
\quad\text{with} \quad
\begin{array}{ccl}
\phi_1' &=& 
 \rpfun  \cdot \dfrac{Z_i}{e(x_i', \alpha')} \cdot \big\{Y_i(1) - \mu_1 \big\}, \\
\phi_0' &=& \rpfun  \cdot \dfrac{1-Z_i}{1- e(x_i', \alpha')} \cdot \big\{Y_i(0) - \mu_0 \big\},\\
\psi_\beta &=&  U_i \big\{\riy - \pfun \big\},\\
\psi_\alpha'&=& \txi' \big\{Z_i - e(x_i', \alpha') \big\}.
\end{array}
\endy
The asymptotic distribution of $\hteps'$ then follows from almost identical m-estimation calculations as that for $\hteps$. 
\end{proof}

\end{document}